%% file: main.tex
\documentclass{article}
    
    \usepackage{amsmath,amssymb,amsthm,mathtools}
    \usepackage{hyperref}
    \usepackage{xspace}
    \usepackage{enumitem}
    \usepackage{microtype}
    \usepackage{booktabs}
    \usepackage{authblk}
    \usepackage{url}
    \input{preamble}

    \newcommand{\E}{\mathbb{E}}
    \newcommand{\Prb}{\mathbb{P}}
    \newcommand{\1}{\mathbf{1}}
    \newcommand{\ceil}[1]{\left\lceil #1\right\rceil}
    
    \newcommand{\id}{\mathsf{id}}
    \newcommand{\lane}{\mathsf{lane}}
    \newcommand{\sym}{\ell_{\mathrm{sym}}}
    \newcommand{\Mh}{M_h}
    \newcommand{\Ms}{M_s}
    \newcommand{\KL}{D}
    \newcommand{\hash}{\mathsf{H}}
    \newcommand{\Sig}{\mathsf{Sig}}
    
    \newcommand{\addr}{\mathsf{addr}}
    \newcommand{\ticket}{\mathsf{ticket}}
    \newcommand{\nonce}{\mathsf{nonce}}
    
    \theoremstyle{definition}
    \newtheorem{assumption}{Assumption}
    
    \theoremstyle{plain}
    \newtheorem{theorem}{Theorem}
    \newtheorem{corollary}{Corollary}
    
    \title{A mechanism design overview of Sedna}
    \author[1,2]{Benjamin Marsh\thanks{ben@seinetwork.io}}
    \author[1]{Alejandro Ranchal-Pedrosa}
    \affil[1]{Sei Labs}
    \affil[2]{University of Portsmouth}
    
\begin{document}
    \maketitle
    
\begin{abstract}
Sedna is a coded multi-proposer consensus protocol in which a sender shards a transaction payload into rateless symbols and disseminates them across parallel proposer lanes, providing high throughput and ``until decode'' privacy. This paper studies a sharp incentive failure in such systems. A cartel of lane proposers can withhold the bundles addressed to its lanes, slowing the chain's symbol accumulation while privately pooling the missing symbols. Because finalized symbols become public, the cartel's multi-slot information lead is governed by a chain level delay event where the chain fails to accumulate the $\kappa$ bundles needed for decoding by the honest horizon $t^\star=\lceil \kappa/m\rceil$. We characterize the resulting delay probability with KL-type large deviation bounds and show a knife edge pathology when the slack $\Delta=t^\star m-\kappa$ is zero such that withholding a single bundle suffices to push inclusion into the next slot with high probability.
    
We propose \textsf{PIVOT-$K$}, a Sedna native pivotal bundle bounty that concentrates rewards on the $\kappa$ bundles that actually trigger decoding, and we derive explicit incentive compatibility conditions against partial and coalition deviations. We further show that an adaptive sender ``ratchet'' that excludes lanes whose tickets were not redeemed collapses multi-slot withholding into a first slot deficit when $t^\star\ge 2$, reducing the required bounty by orders of magnitude. We close by bounding irreducible within slot decode races and providing parameter guidance and numerical illustrations. Our results show that for realistic parameters Sedna can reduce MEV costs to 0.04\% of the transaction value.
\end{abstract}

\section{Introduction}
\label{sec:intro}
Leader based consensus protocols concentrate proposal power in a single slot leader, this creates well known incentive and governance problems around censorship and MEV extraction. A growing design trend is multi-proposer consensus, where multiple proposers finalize in parallel each proposal (e.g., as independent proposal ``lanes''), increasing throughput and potentially reducing single leader market power. Sedna is a concrete instantiation of this trend using coded dissemination. A sender splits a payload into rateless symbols and, in each slot, contacts $m$ of the $n$ lanes and transmits a small
bundle of $s$ symbols to each contacted lane. Once the chain has accumulated $K$ verified symbols (equivalently $\kappa=\lceil K/s\rceil$ bundles), the payload can be decoded and executed. The Sedna protocol paper~\cite{sedna2025} studies correctness, throughput, and ``until-decode'' privacy, we focus in this work instead on incentives asking do lane proposers actually want to include the bundles they receive (equivalently:  what is the MEV protection cost in Sedna)? A cartel controlling a fraction $\beta$ of lanes can receive its addressed bundles but strategically withhold them from finalization. Because finalized symbols are publicly observable after each slot, the cartel's information set is all finalized symbols plus any withheld cartel symbols. Therefore, the cartel obtains a multi-slot information lead exactly when the chain is delayed past the honest inclusion horizon. Under full inclusion, the chain deterministically reaches the decode threshold after
\[
    t^\star := \left\lceil \frac{\kappa}{m}\right\rceil
    \quad\text{slots, with slack}\quad
    \Delta := t^\star m-\kappa \in \{0,1,\dots,m-1\}.
\]
Let $W_{t^\star}$ be the number of cartel addressed bundles withheld before $t^\star$. The corrected lead event reduces to a one line condition:
\[
    T>t^\star \quad\Longleftrightarrow\quad W_{t^\star}>\Delta.
\]
This makes the knife edge immediate: if $\Delta=0$ (equivalently $m\mid \kappa$), one withheld bundle in the first $t^\star$ slots forces $T>t^\star$ whenever the cartel is contacted at least once.

The paper makes six contributions. First, it identifies the threat event where the cartel's multi-slot informational advantage is exactly a chain delay event beyond the honest horizon $t^\star=\lceil \kappa/m\rceil$, governed by the slack $\Delta=t^\star m-\kappa$. Second, under the static sender and the hypergeometric contact model, it derives KL-type large deviation bounds for the delay probability and shows the resulting sawtooth dependence on $\kappa \bmod m$, with a severe knife edge at $\Delta=0$. Third, it establishes a pathwise dominance result: full withholding maximizes the delay probability over all dynamic cartel policies, so the worst case MEV threat is always governed by the $w=0$ baseline (Theorem~\ref{thm:worst_w0_rigorous}). Fourth, it proposes \textsf{PIVOT-$K$}, a Sedna native pivotal bundle bounty that pays only the decoding relevant prefix of bundles and therefore concentrates incentives where they matter, and proves its minimax optimality among all budget capped rank based rules. Fifth, it decomposes the monolithic cartel into individually rational lane operators and shows that for $\Delta>0$, unilateral withholding is already strictly unprofitable as any successful delay requires coordinating at least $\Delta+1$ withheld bundles, and the aggregate direct revenue sacrifice of any such coalition exceeds the total extractable MEV from this transaction. Sixth, it shows that an adaptive sender ratchet collapses multi-slot withholding into a first slot deficit when $t^\star\ge 2$, and it complements that mitigation result with bounds on the residual within slot race that no finalization based payment rule can remove. For realistic parameters, we show that Sedna can provide MEV protection at a 0.04\% cost relative to the transaction value.
    
    \begin{table}[t]
    \centering
    \caption{Core notation.}
    \label{tab:notation}
    \small
    \begin{tabular}{@{}p{0.22\linewidth}p{0.72\linewidth}@{}}
    \toprule
    \textbf{Symbol} & \textbf{Meaning} \\
    \midrule
    $n$ & lanes per slot \\
    $m$ & lanes contacted per slot by the sender \\
    $s$ & symbols per bundle (per contacted lane) \\
    $K$ & decode threshold in symbols \\
    $\kappa=\lceil K/s\rceil$ & decode threshold in bundles \\
    $t^\star=\lceil \kappa/m\rceil$ & honest inclusion horizon (slots) \\
    $\Delta=t^\star m-\kappa$ & slack at the honest horizon \\
    $r = m - \Delta$ & bundles still needed at the start of slot $t^\star$ \\
    $\beta$ & cartel lane fraction \\
    $A_t$ & cartel lanes contacted in slot $t$ \\
    $S_t=\sum_{u\le t} A_u$ & cumulative cartel contacts \\
    $W_{t^\star}$ & withheld cartel bundles up to $t^\star$ \\
    $\phi$ & proposer share of the per-bundle bandwidth fee \\
    $f = \phi\,c\,L(s)$ & proposer fee per included bundle \\
    $B$ & sender posted bounty budget (paid at decode) \\
    $\pi(w)$ & cartel's expected inclusion share under policy $w$ \\
    $\gamma$ & per-slot discount factor \\
    $\alpha v$ & sender value at risk from MEV extraction \\
    $q_0$ & knife edge delay probability $\Prb[S_{t^\star}>0]$ \\
    $\rho(a,r)$ & conditional within slot decode success probability \\
    $\bar\rho$ & worst case $\rho$ over feasible $(a,r)$ \\
    \bottomrule
    \end{tabular}
    \end{table}
    
Section~\ref{sec:model} states the protocol abstraction, sender strategy, and adversarial model. Section~\ref{sec:threat} formalizes the withholding attack and derives delay probability bounds. Section~\ref{sec:mech} introduces PIVOT-K and motivates its pivotal prefix structure. Section~\ref{sec:incentives} derives incentive compatibility conditions (including coalition deviations). Section~\ref{sec:ratchet} analyzes adaptive sender policies (the ratchet), and Section~\ref{sec:micro} bounds within slot decode races. Section~\ref{sec:viability} assembles the equilibrium viability conditions, Section~\ref{sec:numerics} provides parameter guidance and evaluations, and Section~\ref{sec:conclusion} concludes.
    
\section{Related work}
\label{sec:related}
This paper is closest to the recent line of work on multiple concurrent proposers, such as Sei Giga~\cite{marsh2025sei}, and coded dissemination. Sedna itself is introduced in~\cite{sedna2025}. Garimidi et al.~\cite{garimidi2025mcp} propose MCP as a consensus design targeting selective censorship resistance and hiding, while Landers and Marsh~\cite{landers2025mcpmev} analyze MEV in MCP blockchains where multiple blocks become available before the final execution order is fixed. Our contribution is orthogonal to the existing work as we study the sender side dissemination layer in a
coded MCP system and identify a withholding attack whose geometry is controlled by the slack parameter $\Delta=t^\star m-\kappa$. \textsf{PIVOT-K}'s pivotal-bundle bounty is a form of baiting strategy~\cite{ranchalpedrosa2022trap}, making inclusion strictly dominant over withholding by concentrating a sender-funded reward on the lanes that trigger decoding rather than relying on slashable validator deposits.

A second closely related line studies fair ordering and fairness in replicated systems and DAG based ledgers. Ramseyer and Goel~\cite{ramseyer2024fairordering} formulate fair ordering through
streaming social choice, Travelers~\cite{xue2024travelers} proposes a communication efficient BFT fair ordering protocol, and the recent SoK on consensus for fair message ordering~\cite{li2024fairordering_sok} summarizes the broader design space. In the DAG setting,
Raikwar et al.~\cite{raikwar2023fairnessdag} survey fairness notions for DAG based DLTs, and Mah\'e and Tucci-Piergiovanni~\cite{mahe2025orderfairnessdag} evaluate order fairness violations under adversarial manipulation. Sedna's threat is adjacent to this literature but not identical to it, the strategic variable here is selective inclusion against a decode threshold rather than manipulation of a finalized total order.

A third line concerns MEV mitigation and private mempools. Daian et al.~\cite{flashbots2020} systematized MEV, and later work studies fee mechanisms, PBS, and block building incentives
\cite{roughgarden2021,gupta2023pbs,wu2023biddingwars}. Private order flow and accountable mempool mechanisms such as MEV-Share and L\O{}~\cite{mevshare,nasrulin2023lo} shift or audit the pre-confirmation information flow rather than altering dissemination geometry. On the privacy side, Shutter, Penumbra, and F3B are natural protocol comparators~\cite{shutter,penumbra,f3b}. Choudhuri et
al.~\cite{choudhuri2024bte} and Bormet et al.~\cite{bormet2025beatmev} show how batched threshold encryption can make mempool privacy practical. Rondelet and Kilbourn~\cite{rondelet2023mempoolprivacy} are especially relevant methodologically because they argue that mempool privacy must be analyzed through incentives as well as cryptography.
Our paper sits exactly in that gap: Sedna's until decode privacy is combinatorial rather than threshold encrypted, and the question is whether the intermediaries who receive coded fragments want to reveal them.

Finally, the dissemination layer connects to coded data availability and information dispersal. The original AVID protocol of Cachin et al.~\cite{cachin2004avid}, the near-optimal AVID construction of
Alhaddad et al.~\cite{alhaddad2022avid}, and DispersedLedger/AVID-M~\cite{yang2021dispersedledger} study how to disperse and retrieve coded data efficiently in Byzantine environments. Fino~\cite{malkhi2022fino} is particularly close in spirit because it combines DAG transport with MEV-aware hiding and explicitly discusses alternatives based on threshold techniques and AVID style dispersal. Incentive-compatible data availability sampling~\cite{buterin2024danksharding,hallandersen2024codedmerkle} is also conceptually adjacent: validators are paid to attest to data availability, analogous to lanes being paid to include bundles, though that literature focuses on sampling guarantees rather than decode-threshold races. Our paper differs from these works by making the intermediaries strategic and by analyzing a rank based reward rule over the decoding prefix rather than availability or communication complexity alone.

\section{Model}
\label{sec:model}

We formalise the system model: protocol primitives, sender
parameters, and adversarial model.

\subsection{System primitives and Sedna semantics}

There are $n$~proposer lanes per slot.  Time proceeds in discrete
slots $t = 1, 2, \dots$.  A Sedna transaction consists of a public
header and a payload~$x$.  The sender ratelessly encodes~$x$ into a
stream of symbols $\{\sigma_j\}_{j \ge 1}$ with decode threshold~$K$ where any $K$~distinct symbols suffice to reconstruct~$x$.

A bundle is the unit of lane-local inclusion.  When the
sender contacts lane~$\ell$ in slot~$t$, it constructs a bundle
$b = (\mathrm{body}(b),\, \ticket(b))$ carrying $s$~fresh symbol
indices and symbols, together with a lane/slot-bound fee ticket
(defined below).  Sedna resolves per-index deduplication via a
deterministic symbol resolution order, the lexicographic
order of $(t, \ell, \hash(\ticket(b)))$.  This order depends only on
on-chain data and sender-chosen randomness, so a lane proposer can
affect outcomes only by choosing whether to include admissible
bundles.  Inclusion occurs at the first slot where $K$~distinct
admissible symbols have been finalised, after which the payload is
executed in deterministic order.

Each lane~$\ell \in \{1, \dots, n\}$ has a protocol fixed reward address $\addr(\ell)$ for the current epoch. Whenever the sender contacts lane~$\ell$ in slot~$t$ and forms a bundle~$b$, it attaches a fee ticket signing the hash of the body:
\[
  \ticket(b)
  \;:=\;
  \Sig_{\mathrm{sender}}\!\big(
    \id,\; t,\; \ell,\; \hash(\mathrm{body}(b)),\;
    p,\; \addr(\ell),\; \nonce
  \big),
\]
where $\nonce$ is unique per issued ticket and $p > 0$ is the
per-bundle bandwidth fee.  A ticket is redeemable at most once and
is lane bound such that it can be redeemed only by including
bundle~$b$ in lane~$\ell$ in slot~$t$ (or within an explicitly
specified validity window), and only if the lane's reward address
equals $\addr(\ell)$.  A bundle occurrence is admissible iff
it carries a valid, unspent ticket satisfying these checks.  Only
admissible occurrences are charged/paid, counted for per-index
resolution and decoding, and eligible for bounty. In particular, a
proposer cannot profit by copying another lane's symbols such without a
fresh valid lane bound ticket, the copy is non-admissible and is
ignored.

Each admissible included bundle pays a fixed proposer fee
$f = \phi\,c\,L(s)$ independent of slot~$t$, the inclusion
time~$T$, and contemporaneous load. Including an additional
admissible bundle cannot reduce the proposer's discounted fee
revenue through fee market externalities.

\subsection{Sender dissemination}
\label{sec:sender}

Fix a transaction with decode threshold~$K$ and sender parameters
\[
  m \in \{1, \dots, n\}
  \quad\text{(lanes contacted per slot)},
  \qquad
  s \in \mathbb{Z}_{\ge 1}
  \quad\text{(symbols per contacted lane per slot)}.
\]
In each slot~$t$, the sender samples $m$~lanes uniformly without
replacement and sends one bundle to each.  Define
\[
  \kappa := \ceil{K/s}
  \quad\text{(bundles needed to reach $K$ symbols)}.
\]
Throughout we adopt $K = \kappa s$ (exact divisibility),
Appendix~\ref{app:ceilings} treats ceiling effects and shows all
additive error terms remain $O(m/\kappa)$.  More generally, a sender
can use a deterministic schedule $\{m_t\}_{t \ge 1}$ with
$m_t \in \{0, 1, \dots, n\}$ and fixed~$s$; the time varying
formulation is recorded in Appendix~\ref{sec:adaptive_sender} and
invoked in the adaptive sender analysis
(Section~\ref{sec:ratchet}).

\subsection{Adversarial model}
\label{sec:partial}

A fraction $\beta \in (0, 1)$ of the $n$~lanes is controlled by a cartel.  We parameterise deviations by an inclusion rate $w \in [0, 1]$: in each slot the cartel includes a fraction~$w$ of its received bundles and withholds the rest, so $w=1$ is full inclusion and $w=0$ is full withholding.  This family captures the key trade off that withholding slows on-chain accumulation (delaying the transaction, forgoing fee and bounty revenue) but may create an MEV extraction opportunity if the cartel privately decodes before the chain does. For tractability, Sections~\ref{sec:threat} and~\ref{sec:IC} use a fluid stationary model in which a fraction $w$ of cartel addressed bundles is included on average. The exact pathwise delay geometry is stated later for integer withholding policies via $X_t^\pi$ and $W_{t^\star}^\pi$
(Sections~\ref{sec:reduction} and~\ref{sec:dynamic_sabotage}). Accordingly, statements phrased in terms of $q^{(w)}$ should be read as properties of this fluid model rather than exact finite sample
identities for discrete bundle decisions.

Let $A_t$ be the number of cartel lanes contacted in slot~$t$:
\[
  A_t \sim \mathrm{Hypergeom}(n,\, \beta n,\, m),
  \qquad
  \{A_t\}_{t \ge 1}\ \text{i.i.d.}
\]
Write $H_t := m - A_t$ for the honest contacts, and define cumulative cartel contacts $S_t := \sum_{u=1}^t A_u$.

Under inclusion rate~$w$, the on-chain bundle count after slot~$t$ is
\[
  U_t(w)
  \;:=\;
  \sum_{u=1}^t \bigl(H_u + w A_u\bigr)
  \;=\;
  tm - (1 - w) S_t,
\]
and the on-chain inclusion time is
$T(w) := \inf\{t : U_t(w) \ge \kappa\}$.

Under full inclusion ($w = 1$), $U_t(1) = tm$, so inclusion is
deterministic at
\[
  t^\star := T(1) = \ceil{\kappa / m},
  \qquad
  \Delta := t^\star m - \kappa \in \{0, 1, \dots, m - 1\}.
\]
The slack~$\Delta$ measures redundancy, $t^\star m$ bundles are finalised by the honest horizon, but only $\kappa$ are needed for decoding. When $\Delta = 0$ (equivalently $m \mid \kappa$), full
inclusion yields exactly $\kappa$ bundles, a knife edge with zero slack.

A Sedna transaction progresses in two layers. Off-chain, the sender repeatedly samples $m$ lanes per slot and sends each contacted lane a bundle carrying $s$ fresh symbols together with a lane bound ticket. On-chain, only admissibly redeemed bundles count toward the decode threshold, so the public chain state after slot $t$ is the cumulative count of finalised admissible bundles rather than the sender's raw transmission history. A withholding cartel therefore affects the transaction only by creating a public bundle deficit relative to the honest benchmark. Under full inclusion the chain reaches $\kappa$ bundles by $t^\star$ deterministically, under deviation, every withheld cartel bundle subtracts one unit from that public total, and delay occurs exactly when the deficit exceeds the slack $\Delta$. This chain level viewpoint is the key reason the threat reduces to the event $W_{t^\star}>\Delta$ rather than to a cartel only decoding condition.

We model the cartel as a single actor parameterised by~$\beta$. This worst case abstraction assumes perfect coordination and costless MEV redistribution. In practice, the cartel comprises multiple independently motivated proposers, each making inclusion decisions on lane bound tickets as they are contacted. The MEV benefit from coordinated withholding requires at least $\Delta + 1$ withheld bundles, but under the static sender these withheld opportunities may come from repeated contacts to the same lane across different slots. Section \ref{sec:coalition} therefore phrases the local analysis in terms of withheld bundles rather than distinct cartel lanes.

Two timing questions govern the analysis, can the cartel gain a multi-slot information lead under full inclusion, and can it win a within slot race (decoding between receiving the final symbol and the slot's finalisation)? The first is resolved by the following assumption; the second is deferred to
Section~\ref{sec:micro}.

\begin{assumption}[Coarse slot timing]
\label{as:coarse}
Symbols delivered in slot~$t$ are not usable for private decoding
strictly before slot~$t$'s lane blocks are finalised.
\end{assumption}

Under Assumption~\ref{as:coarse} and full inclusion,
$T(1) = t^\star$ deterministically and
$q^{(1)} := \Prb[T(1) > t^\star] = 0$: full inclusion eliminates
any multi-slot information lead.  The residual within slot race
(Section~\ref{sec:micro}) is the irreducible privacy risk.  The
withholding driven delay probability
$q^{(w)} := \Prb[T(w) > t^\star]$ and its dependence on~$\Delta$
are characterised in Section~\ref{sec:threat}.  We assume coarse
slot timing throughout except in Section~\ref{sec:micro}.

\subsection{Strategic instruments and modeling scope}
\label{sec:instruments}
We gather here the key strategic instruments that make of Sedna an interesting tool to protect against MEV. The conservative direction of each choice is noted explicitly.

\paragraph{Sender instruments.}
\begin{itemize}[nosep]
\item \textbf{Hidden dissemination parameters.}
  The sender can keep $m$ (lanes contacted per slot), $s$ (symbols per
  bundle), and hence $\kappa = \lceil K/s \rceil$ privately. The cartel cannot infer $\kappa$ or the
  slack~$\Delta$ before committing to its withholding strategy.

\item \textbf{Bounty posting (\textsf{PIVOT-$K$}).}
  The sender can escrow a bounty~$B$ that is distributed exclusively
  among the first $\kappa$ included bundles upon decode
  (Section~\ref{sec:mech}).

\item \textbf{Fee setting.}
  Each bundle carries a lane bound fee ticket with per-bundle proposer
  revenue $f = \phi\,c\,L(s)$, set by the sender.

\item \textbf{Lane selection and adaptive exclusion (ratchet).}
  The sender chooses which lanes to contact each slot and can
  permanently exclude any lane whose ticket was not redeemed
  (Section~\ref{sec:ratchet}).
\end{itemize}

\paragraph{Adversary instruments.}
\begin{itemize}[nosep]
\item \textbf{Withholding or inclusion.}
  For each bundle addressed to a cartel lane, the cartel decides
  whether to include it on-chain (earning fee and potential bounty) or
  withhold it (forgoing direct revenue but potentially enabling a
  private decode).

\item \textbf{Observation of public state.}
  After each slot's finalization, all included bundles become public.
  The cartel's information set is therefore the union of all finalized
  symbols and any symbols from bundles it has privately withheld.
\end{itemize}

\paragraph{Cartel as independently motivated entities.}
The baseline analysis (Sections~\ref{sec:threat}--\ref{sec:IC}) models
the cartel as a single coordinated actor controlling a fraction~$\beta$
of lanes, assuming perfect internal coordination and costless MEV
redistribution.  This is the most pessimistic assumption from the
protocol's standpoint.  Section~\ref{sec:coalition} relaxes it by
decomposing the cartel into individually rational lane operators, each
making local inclusion decisions on lane bound tickets, and derives
coalition stability conditions under aggregate budget constraints.

\paragraph{Modeling choices favorable to Sedna.}
Two aspects of the model work in the protocol's favor and should be
noted.  First, Assumption~\ref{as:coarse} (coarse slot timing) rules
out within slot private decoding by fiat; Section~\ref{sec:micro}
relaxes this and bounds the residual risk.  Second, the fee market is
static: including an additional admissible bundle cannot reduce a
proposer's revenue through congestion externalities.  This guarantees
nonnegative marginal payoff from inclusion, which underpins the minimal
sabotage result (Theorem~\ref{thm:dyn_best_binary}).
Appendix~\ref{app:fee_market} records what changes when this
assumption fails.

    \section{The withholding problem}
    \label{sec:threat}
    
    We now characterise the economic threat that withholding creates.  A user
    has value $v > 0$ from inclusion/execution.  If the cartel decodes early
    and exploits the payload, the user loses $\alpha v$ for
    $\alpha \in [0,1]$.  Both user and cartel discount per slot by
    $\gamma \in (0,1)$.
    
    Each included bundle has length $L(s) := \Mh + s(\Ms + \sym)$~bytes, where $\Mh$ is the header size, $\Ms$ is the symbol metadata size, and $\sym$ is the per-symbol payload length.
    Let $c > 0$ be the per-byte price, so the total bandwidth fee for an
    admissible included bundle is $p := c\,L(s)$.  Let $\phi \in [0,1]$ be the
    share accruing to the lane proposer, giving per-bundle proposer revenue
    $f := \phi\,c\,L(s)$.
    
    Under honest behaviour ($w = 1$), the cartel earns fee revenue from
    included bundles; under withholding ($w < 1$), it forgoes some fee revenue
    but may gain a discounted MEV option
    $G(w) := \gamma^{t^\star} q^{(w)}$ from delaying on-chain
    inclusion past the honest horizon.  The severity of this option depends on
    the delay probability $q^{(w)} := \Prb[T(w) > t^\star]$ (the chance that the 
    cartel successfully forces the on-chain inclusion time to stretch beyond the 
    baseline horizon), which we now
    characterise. Section~\ref{sec:ratchet} shows that an adaptive sender can strictly
    reduce $q^{(w)}$ when $t^\star \ge 2$; the present section establishes the
    static sender baseline.

    Two modelling layers appear in this section.  Sections~\ref{sec:delay_geom} and~\ref{sec:largedev} first use the fluid stationary proxy parameterised by the inclusion rate~$w$ to expose the slack geometry and derive closed form KL-type delay bounds.  Section~\ref{sec:reduction} then returns to the exact pathwise model with integer inclusion decisions $X_t^\pi$ and shows that full withholding dominates every dynamic deviation for the purpose of maximizing delay.  The fluid model therefore supplies tractable delay exponents, while the worst case incentive compatibility (IC) reduction ultimately rests on an exact pathwise theorem.
    
    \subsection{Delay event geometry}
    \label{sec:delay_geom}
    To quantify the threat, we first translate the time based delay event 
    into a concrete bundle deficit condition. Specifically, we evaluate what it 
    takes mathematically for the cartel to prevent the chain from accumulating the 
    required $\kappa$ bundles by the honest horizon~$t^\star$. By mapping this 
    delay directly to the cartel's cumulative contacts and the system's built in 
    slack, we establish exactly how many bundles the cartel must withhold.
    \begin{lemma}[Equivalence of delay and bundle deficit]
    \label{lem:threshold_w}
    Within the fluid stationary $w$-model, the delay proxy $q^{(w)}$ is given by the event that the implied withheld mass exceeds the system's built in slack $\Delta$.
    \end{lemma}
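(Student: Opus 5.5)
The plan is to turn the time event $\{T(w)>t^\star\}$ into a single inequality on $S_{t^\star}$, using only the explicit formula $U_t(w)=tm-(1-w)S_t$ from Section~\ref{sec:partial}. We define the implied withheld mass up to slot $t$ as $W_t(w):=(1-w)S_t$. In the fluid model, a fraction $1-w$ of each slot's cartel contacts $A_u$ is withheld, so $U_t(w)=tm-W_t(w)$ is exactly the public bundle count after slot $t$. The goal is then the identity
\[
\{T(w)>t^\star\}=\{W_{t^\star}(w)>\Delta\}=\{(1-w)S_{t^\star}>\Delta\},
\]
from which $q^{(w)}=\Prb[(1-w)S_{t^\star}>\Delta]$ follows directly.

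The first step is to record that $t\mapsto U_t(w)$ is nondecreasing. Each slot increment is $H_u+wA_u=m-(1-w)A_u\ge m-A_u\ge 0$, because $A_u\le m$ and $w\in[0,1]$. This monotonicity lets us read $T(w)=\inf\{t:U_t(w)\ge\kappa\}$ as a first passage time. The second step is the chain
\[
T(w)>t^\star \iff U_{t^\star}(w)<\kappa .
\]
The reverse implication is immediate from the definition of the infimum. For the forward implication, if $U_{t^\star}(w)<\kappa$, then monotonicity gives $U_t(w)\le U_{t^\star}(w)<\kappa$ for every $t\le t^\star$, so no slot up to $t^\star$ attains the threshold.

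The third step substitutes $U_{t^\star}(w)=t^\star m-(1-w)S_{t^\star}$ and uses the definition $\kappa=t^\star m-\Delta$. This gives
\[
U_{t^\star}(w)<\kappa \iff t^\star m-(1-w)S_{t^\star}<t^\star m-\Delta \iff (1-w)S_{t^\star}>\Delta .
\]
Taking probabilities yields the statement. As a consistency check, at $w=1$ the event is empty and $q^{(1)}=0$, as claimed after Assumption~\ref{as:coarse}. At $w=0$ with $\Delta=0$, it reduces to $q_0=\Prb[S_{t^\star}>0]$, the knife-edge case.

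The main obstacle is interpretive rather than technical. The fluid quantity $(1-w)S_{t^\star}$ need not be an integer, so the inequality must stay strict and must be stated as a property of the proxy, not of integer withholding decisions. I would note explicitly that the same three steps apply verbatim to the integer pathwise count $W^\pi_{t^\star}$ deferred to Section~\ref{sec:reduction}, since only monotonicity and the linear accounting identity are used. The only subtlety to check there is that the per-slot increment stays nonnegative, which holds because at most $A_u$ bundles can be withheld in slot $u$.
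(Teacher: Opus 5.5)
Your proposal is correct and follows the paper's proof: you reduce $\{T(w)>t^\star\}$ to $\{U_{t^\star}(w)<\kappa\}$, substitute $U_{t^\star}(w)=t^\star m-(1-w)S_{t^\star}$ and $t^\star m=\kappa+\Delta$, and read off $(1-w)S_{t^\star}>\Delta$. Your monotonicity argument ($H_u+wA_u=m-(1-w)A_u\ge 0$) justifies the first equivalence, which the paper simply calls ``by definition''; it is the implication $U_{t^\star}(w)<\kappa\Rightarrow T(w)>t^\star$ that needs it, so your ``forward'' and ``reverse'' labels are swapped, though what you argue for each direction is right.
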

    \begin{proof}
    By definition, a delay occurs ($T(w) > t^\star$) if and only if the on-chain 
    bundle count after $t^\star$ slots remains strictly below the threshold $\kappa$. 
    Using the cumulative count $U_{t^\star}(w) = t^\star m - (1-w)S_{t^\star}$ and 
    substituting $t^\star m = \kappa + \Delta$, we get:
    \[
      U_{t^\star}(w) < \kappa 
      \quad\Longleftrightarrow\quad 
      (\kappa + \Delta) - (1-w)S_{t^\star} < \kappa 
      \quad\Longleftrightarrow\quad 
      (1-w)S_{t^\star} > \Delta.
    \]
    \end{proof}
    
    \paragraph{Zero slack knife edge.}
    When $\Delta = 0$, the delay condition $(1-w)S_{t^\star} > 0$
reduces to $S_{t^\star} > 0$ for every $w<1$.  Thus delaying past
$t^\star$ requires only one cartel addressed bundle before the honest
horizon, and under the stationary $w$-model
\[
q^{(w)} = \Prb[S_{t^\star} > 0]
= 1 - \Prb[A_1 = 0]^{t^\star}
= 1 - \left(\frac{\binom{(1-\beta)n}{m}}{\binom{n}{m}}\right)^{t^\star}.
\]
Here we used independence across slots and the hypergeometric formula
for $\Prb[A_1=0]$.
    
    The geometry of the honest horizon $t^\star$ and its slack $\Delta$ is entirely 
    determined by how the decode threshold $\kappa$ fits into the contact rate $m$. 
    Whenever $\kappa$ is an exact multiple of $m$, the system lands on the zero slack 
    knife edge ($\Delta = 0$), meaning full honest inclusion yields exactly $\kappa$ 
    bundles by time $t^\star$. 
    
    For a latency sensitive sender targeting a single slot horizon ($t^\star = 1$), 
    they must configure $\kappa \le m$. The ``tight'' choice of $\kappa = m$ 
    yields zero slack, maximizing the risk of a multi-slot delay. To defend against 
    this, the sender can increase the slack ($\Delta = m - \kappa$) by expanding 
    the number of contacted lanes $m$, up to the absolute network maximum $n$. 
    
    However, this creates a fundamental operational trade off. Let $r = m - \Delta$ 
    denote the number of bundles needed in the final slot to cross the decode 
    threshold (detailed further in Section~\ref{sec:micro}). Increasing the slack 
    $\Delta$ strictly reduces the multi-slot delay risk ($q^{(w)}$). Yet, by widening 
    the broadcast net $m$, the sender also increases the expected number of cartel 
    contacts ($A_{t^\star}$). This makes the within slot race condition 
    ($A_{t^\star} \ge r$) much easier for the cartel to satisfy. Consequently, 
    operational tuning is a delicate balance: widening $m$ protects the transaction 
    from being delayed into future slots, but increases the risk that the cartel 
    intercepts enough symbols to decode and exploit the payload before the current 
    slot is sealed.
    
    \subsection{Large deviation bounds}
\label{sec:largedev}

Having established that a delay requires the cartel to intercept a specific
number of bundles, we now quantify how likely they are to actually receive them.
The cumulative cartel contact count $S_t = \sum_{u=1}^t A_u$ is a sum of
independent hypergeometric draws. The moment generating
function of each $A_u$ is pointwise dominated by that of $\mathrm{Bin}(m,\beta)$.
The standard Chernoff optimisation over the tilt parameter then yields the
binomial KL exponents for both tails (see e.g.~\cite{boucheron2013concentration}):
for any $\theta \in (0,1)$,
\begin{equation}
\label{eq:kl_tails}
\Prb\!\left[\frac{S_t}{tm} \ge \theta\right]
  \le \exp\{-tm\,\KL(\theta\|\beta)\}
  \;\;\text{for } \theta > \beta,
\qquad
\Prb\!\left[\frac{S_t}{tm} \le \theta\right]
  \le \exp\{-tm\,\KL(\theta\|\beta)\}
  \;\;\text{for } \theta < \beta.
\end{equation}

We now map these standard concentration bounds onto the delay mechanics of
Section~\ref{sec:delay_geom}.  Define the \emph{critical contact density},
the exact fraction of all transmitted bundles the cartel must intercept over
$t^\star$ slots to force a delay under inclusion rate $w$:
\[
  \theta_w := \frac{\Delta}{(1-w)t^\star m}.
\]
If $\theta_w \ge 1$, the delay event is impossible even under full
withholding, since the cartel cannot intercept enough bundle mass to
overcome the slack; in that case $q^{(w)}=0$.

\begin{theorem}[Delay probability under the fluid stationary model]
\label{thm:q_w}
Assume $\Delta>0$ and $\theta_w<1$. If $\theta_w\in(\beta,1)$, the cartel must receive
more than its expected contact share, and
\[
  q^{(w)} \le \exp\{-t^\star m\,\KL(\theta_w\|\beta)\}.
\]
If $\theta_w\in[0,\beta)$, the cartel's expected contact share already suffices
to induce delay, and the failure probability is exponentially small:
\[
  1-q^{(w)} \le \exp\{-t^\star m\,\KL(\theta_w\|\beta)\}.
\]
\end{theorem}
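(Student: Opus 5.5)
The plan is to reduce both claims to the KL tail bounds in \eqref{eq:kl_tails}, using Lemma~\ref{lem:threshold_w} to express the delay event through $S_{t^\star}$. First, by Lemma~\ref{lem:threshold_w}, $T(w)>t^\star$ holds iff $(1-w)S_{t^\star}>\Delta$. Since $\Delta>0$ and $\theta_w<1$ is finite, we have $w<1$. Dividing by $(1-w)t^\star m>0$ then gives
\[
  q^{(w)} = \Prb\!\left[\frac{S_{t^\star}}{t^\star m} > \theta_w\right],
  \qquad
  1-q^{(w)} = \Prb\!\left[\frac{S_{t^\star}}{t^\star m} \le \theta_w\right].
\]
This identity is the only model-specific step. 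The rest is concentration, applied with $t=t^\star$.

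\textbf{Case $\theta_w\in(\beta,1)$.} We bound $\Prb[S_{t^\star}/(t^\star m)>\theta_w]$ from above by $\Prb[S_{t^\star}/(t^\star m)\ge\theta_w]$. We then apply the upper-tail bound in \eqref{eq:kl_tails} with $\theta=\theta_w>\beta$ and $t=t^\star$, which gives $q^{(w)}\le\exp\{-t^\star m\,\KL(\theta_w\|\beta)\}$ directly.

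\textbf{Case $\theta_w\in[0,\beta)$.} The complement event is $\{S_{t^\star}/(t^\star m)\le\theta_w\}$, and the lower-tail bound in \eqref{eq:kl_tails} applies with $\theta=\theta_w<\beta$. The assumption $\Delta>0$ already forces $\theta_w>0$. Even so, I would note that the endpoint $\theta_w=0$ is harmless: there $\Prb[S_{t^\star}\le 0]=\Prb[A_1=0]^{t^\star}$. This is dominated by the binomial value $(1-\beta)^{t^\star m}=\exp\{-t^\star m\,\KL(0\|\beta)\}$, because the hypergeometric probability of zero successes is at most the binomial one.

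\textbf{Main obstacle.} The only nontrivial ingredient is the one the paper invokes before \eqref{eq:kl_tails}: the hypergeometric MGF is dominated by the binomial MGF (Hoeffding's comparison for sampling without replacement). This justifies the Chernoff bound for the sum of i.i.d.\ $A_u$, since $\E[e^{\lambda S_t}]\le(1-\beta+\beta e^{\lambda})^{tm}$ for every real $\lambda$. The domination holds for all $\lambda$ because $e^{\lambda x}$ is convex for either sign, which is what covers the lower tail. Optimizing over $\lambda>0$ gives the upper tail and optimizing over $\lambda<0$ gives the lower tail, each with exponent $tm\,\KL(\theta\|\beta)$. Since the paper takes \eqref{eq:kl_tails} as given, I would cite it. The remaining care is in the strict-versus-weak inequality bookkeeping, which is handled above by monotone inclusion of events.
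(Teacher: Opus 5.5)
Your proposal is correct and takes the same route as the paper: use Lemma~\ref{lem:threshold_w} to rewrite the delay event as $\{S_{t^\star}/(t^\star m)>\theta_w\}$, then apply the upper and lower KL tails in \eqref{eq:kl_tails} with $t=t^\star$. Your extra bookkeeping makes explicit what the paper's two-line proof leaves implicit, without changing the argument: that $w<1$, the passage from strict to weak inequality in the upper tail, the vacuous $\theta_w=0$ endpoint, and the convex-ordering justification of the MGF bound for negative tilts.
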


\begin{proof}
By Lemma~\ref{lem:threshold_w}, a delay occurs if
$(1-w)S_{t^\star} > \Delta$.  Dividing by $t^\star m$ translates this
condition to $S_{t^\star}/(t^\star m) > \theta_w$.  For $\theta_w<1$, the two
nontrivial regimes follow directly from the tail bounds~\eqref{eq:kl_tails};
for $\theta_w\ge 1$, impossibility is immediate since $S_{t^\star}\le t^\star m$.
\end{proof}

While the bounds above quantify the exponential decay of failure, the practical
implication of the low threshold case ($\theta_w < \beta$) is that a multi-slot
information lead is a near certainty.  To put this in perspective: on a knife
edge instance ($\Delta=0$) with a $20\%$ cartel ($\beta=0.2$) and an honest
horizon of one slot ($t^\star=1, m=20$), the exact delay probability under full
withholding approaches $q^{(0)} \approx 0.993$.  In this regime, the cartel
almost certainly succeeds in delaying the chain to extract MEV.

\subsection{Exact pathwise dominance over dynamic policies}
\label{sec:reduction}
    
    So far, we have analyzed stationary policies where the cartel includes a 
    fixed fraction $w$ of its bundles. In reality, a sophisticated cartel might 
    deploy a dynamic policy $\pi$, strategically choosing exactly which bundles 
    to include and which to withhold over time to optimize their net payoff. 
    
    However, if our primary concern is bounding the absolute wors case probability 
    of a multi-slot delay (the core driver of the MEV option), we do not need to 
    solve the cartel's complex dynamic optimization problem. Because every included 
    cartel bundle strictly pushes the chain closer to the decode threshold $\kappa$, 
    any policy that includes some bundles will reach the threshold no later 
    than a policy that includes none. 
    
    To formalize this, let $X_t^\pi \in \{0, 1, \dots, A_t\}$ denote the number of 
    cartel bundles included in slot $t$ under policy $\pi$. The total on-chain 
    bundle count by slot $t$ is then $U_t^\pi := tm - S_t + \sum_{u=1}^t X_u^\pi$. 
    We define the resulting on-chain inclusion time as $T^\pi := \inf\{t : U_t^\pi \ge \kappa\}$. 
    
    The probability that this dynamic policy successfully forces a multi-slot delay is:
    \[
    q^\pi := \Prb[T^\pi > t^\star].
    \]
    Because the MEV is extracted in the future, we discount it by $\gamma$ per slot. 
    The expected discounted MEV gain from this delay is therefore:
    \[
    G^\pi := \E\!\left[\gamma^{t^\star}\1\{T^\pi > t^\star\}\right] = \gamma^{t^\star}q^\pi.
    \]
    
    \begin{theorem}[Full withholding maximizes delay probability]
    \label{thm:worst_w0_rigorous}
    No matter what dynamic inclusion policy $\pi$ the cartel employs, the resulting 
    probability of forcing a multi-slot delay is bounded above by the delay 
    probability under full withholding ($w=0$). Consequently, the expected MEV 
    gain from causing a delay is also strictly bounded by the full withholding 
    baseline:
    \[
    q^\pi \le q^{(0)}, \qquad \text{and} \qquad G^\pi \le \gamma^{t^\star}\,q^{(0)}.
    \]
    \end{theorem}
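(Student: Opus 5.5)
The plan is a pathwise coupling argument. Fix any dynamic policy $\pi$. Realize the contact process $\{A_t\}_{t\ge 1}$, and with it $S_t$, on a single probability space shared by $\pi$ and by full withholding. Any extra randomization inside $\pi$ is placed on an independent coordinate. This is legitimate because the sender's contact draws do not depend on the cartel's inclusion decisions: the sender is static, and $A_t$ is i.i.d. hypergeometric whatever the chain state. The policy only determines $X_t^\pi\in\{0,\dots,A_t\}$. Full withholding is the special policy $X_t\equiv 0$, under which $U_t^{(0)} = tm - S_t$, matching the $w=0$ case of $U_t(w)$.

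The key step is the sample-path inequality. Since $X_u^\pi\ge 0$ for every $u$,
\[
U_t^\pi = tm - S_t + \sum_{u=1}^t X_u^\pi \;\ge\; tm - S_t = U_t^{(0)}
\]
holds for every $t$ and every outcome. Whenever $U_t^{(0)}\ge\kappa$ we therefore also have $U_t^\pi\ge\kappa$, so the set of times where the $\pi$-chain has reached the threshold contains that of the withholding chain. Taking infima gives $T^\pi\le T^{(0)}$ pointwise, and hence $\{T^\pi>t^\star\}\subseteq\{T^{(0)}>t^\star\}$. Taking probabilities yields $q^\pi\le q^{(0)}$. Here $q^{(0)}=\Prb[T(0)>t^\star]$, which by Lemma~\ref{lem:threshold_w} at $w=0$ is exactly $\Prb[S_{t^\star}>\Delta]$. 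When $w=0$ the fluid model and the integer model agree, so there is no approximation gap.

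The statement about gains follows by multiplying the probability inequality by the deterministic factor $\gamma^{t^\star}>0$. This gives $G^\pi=\gamma^{t^\star}q^\pi\le\gamma^{t^\star}q^{(0)}$. The word ``strictly'' in the statement should be read as ``bounded by'', since equality holds for $\pi$ equal to full withholding itself. More generally equality holds for any $\pi$ that includes bundles only on paths where delay fails anyway.

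I expect the main obstacle to be justifying the coupling rather than the algebra. One must make sure that $\pi$ can be adapted to the cartel's information, including withheld symbols and the public chain state, without changing the law of $(A_t)$. The second chain must also be built on the same contact sequence. I would handle this by noting that $(A_t)$ is exogenous under the static sender. The dominance then holds for every realization of $\pi$'s internal randomness, and therefore in expectation.
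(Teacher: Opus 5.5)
Your proposal is correct and follows the paper's proof: both rest on the pathwise inequality $U_t^\pi \ge tm - S_t = U_t(0)$, which follows from $X_u^\pi \ge 0$. This gives $T^\pi \le T(0)$, hence $\{T^\pi > t^\star\}\subseteq\{T(0) > t^\star\}$, and the $G^\pi$ bound follows by multiplying by $\gamma^{t^\star}$. Your two additions are sound refinements that the paper leaves implicit: you justify the coupling via the exogeneity of $(A_t)$ under the static sender, and you note that ``strictly'' must be read as a weak inequality, since full withholding attains equality.
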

    
    \begin{proof}
    Because the cartel can only add bundles to the chain ($X_u^\pi \ge 0$), the 
    on-chain bundle count under any policy $\pi$ satisfies $U_t^\pi \ge tm - S_t = U_t(0)$ 
    pathwise for all $t$. Thus, the hitting time to reach $\kappa$ bundles satisfies 
    $T^\pi \le T(0)$ pathwise. This implies the delay event $\{T^\pi > t^\star\}$ is 
    a subset of $\{T(0) > t^\star\}$, yielding $q^\pi \le q^{(0)}$. The bound on 
    $G^\pi$ follows directly from its definition.
    \end{proof}
    
Theorem~\ref{thm:worst_w0_rigorous} is the structural result that enables the later IC analysis to replace the intractable dynamic optimization with the clean worst case benchmark $q^{(0)}$. Appendix~\ref{app:dynamic} complements this exact reduction with a pathwise monotonicity lemma for \textsf{PIVOT-$K$} where adding cartel inclusions cannot reduce the cartel's pivotal prefix count. Under the present coarse model, where the multi-slot MEV term depends only on the binary event of delaying past $t^\star$, the dynamic best response therefore collapses to minimal sabotage (Theorem~\ref{thm:dyn_best_binary}). A full Markov Decision Process
analysis is only needed if the MEV benefit scales with the length of the information lead or if within slot timing is modeled more finely.

Since the delay probability is a structural property of the contact process and cannot be eliminated by parameter choice alone (the sawtooth pattern described in Section~\ref{sec:numerics} ensures every operating point trades multi-slot delay risk for within slot race risk), the natural response is to make withholding directly costly through a payment rule. The next section introduces such a rule.
    
    \section{The \textsf{PIVOT-$K$} mechanism}
    \label{sec:mech}
    
The delay analysis of Section~\ref{sec:threat} established that the cartel's MEV option $G(w) = \gamma^{t^\star} q^{(w)}$ can be non-negligible since on knife edge instances ($\Delta=0$), $q^{(w)}$ is exactly the probability of at least one cartel contact before $t^\star$ and can already be close to~$1$ for every $w<1$, even away from the knife edge the delay probability can be substantial when the cartel fraction~$\beta$ exceeds the critical contact density~$\theta_w$ (Theorem~\ref{thm:q_w}, case $\theta_w < \beta$).  Furthermore, full withholding maximises the delay probability over all dynamic policies (Theorem~\ref{thm:worst_w0_rigorous}), so the MEV threat is real and cannot be dismissed by assuming the cartel deviates only partially.
    
Transaction fees provide some deterrence, each withheld bundle forfeits proposer revenue~$f$, but this deterrent is blunt. Under honest inclusion, the chain accumulates $t^\star m$ bundles by the honest horizon, only $\kappa$ of these are actually needed for decoding.  The remaining $\Delta = t^\star m - \kappa$ bundles are redundant.  A lane proposer that withholds a redundant bundle sacrifices only~$f$, while potentially sharing in an MEV option worth $\alpha v \cdot \gamma^{t^\star}$. The core problem is that fees do not distinguish bundles that contribute to decoding from those that do not.
    
\textsf{PIVOT-$K$} addresses this by concentrating reward on the bundles that matter for decoding.  The sender escrows a bounty $B \ge 0$ alongside the transaction.  Upon inclusion, the protocol identifies the first $K$~distinct symbol indices used for decoding and distributes the bounty exclusively among the lanes that supplied them. 
    
Because each lane proposer cannot predict ex ante whether its bundle will fall inside the pivotal prefix, withholding reduces the cartel's expected bounty share. Pathwise, however, only the first $\kappa$ included bundles are rewarded: among the $\kappa+\Delta$ bundles present at the honest horizon, the last $\Delta$ positions are redundant for both decoding and bounty. Accordingly, a withheld bundle always sacrifices the fee~$f$, while bounty loss occurs only once withholding starts deleting positions from the pivotal prefix. If a coalition withholds $c$ bundles before $t^\star$, then at least $(c-\Delta)^+$ of the removed positions are pivotal, so its total direct revenue loss is at least
    \[
        c\,f + \frac{(c-\Delta)^+}{\kappa}B.
    \]
    For minimal sabotage $c=\Delta+1$, this lower bound becomes
    \[
    (\Delta+1)f + \frac{B}{\kappa}.
    \]

    Upon inclusion at height~$\tau$, the protocol scans finalised history
    in Sedna's deterministic symbol resolution order
    (\S\ref{sec:model}) and collects the first $K$~distinct symbol
    indices for transaction~$\id$.  This set is the pivotal index
    set~$\mathcal{I}^\star$.  For each $j \in \mathcal{I}^\star$,
    let $\lane(j)$ denote the lane whose admissible bundle first
    contributed index~$j$, this is well defined because admissibility
    requires a valid lane bound ticket, and the resolution order is
    deterministic. The \textsf{PIVOT-$K$} bounty
    rule\label{def:pivotK} pays $B/K$ to $\lane(j)$ for each
    $j \in \mathcal{I}^\star$. No bounty is paid outside
    $\mathcal{I}^\star$.
    
    Since admissibility requires a valid lane bound ticket, a proposer
    that replays or forwards another lane's symbol without a fresh ticket
    produces a non-admissible occurrence, which is ignored by both Sedna
    decoding and \textsf{PIVOT-$K$}.  Duplicate bundles therefore
    cannot inflate the payout, and ``bounty sniping'' is ruled out by
    the same ticket mechanism that enforces lane bound fees.
    
    Under the convention $K = \kappa s$ (Appendix~\ref{app:ceilings}
    treats the general case), each included bundle carries $s$~fresh
    indices. The first $\kappa$~included bundles therefore contribute
    exactly $K = \kappa s$ distinct indices and form
    $\mathcal{I}^\star$.  Each such bundle contains $s$~pivotal
    indices, each paying $B/K = B/(\kappa s)$, so the per-bundle
    bounty is $s \cdot B/(\kappa s) = B/\kappa$.  We adopt this per-pivotal bundle view throughout the incentive analysis, \textsf{PIVOT-$K$} pays $B/\kappa$ to each of the first $\kappa$ included bundles, regardless of which lane supplied them.
    
Most importantly for the incentive analysis, \textsf{PIVOT-$K$} increases the cartel's direct revenue loss when withholding deletes positions from the pivotal prefix. If a withheld bundle would have
been pivotal, the deviating lane sacrifices both the fee~$f$ and the bounty share $B/\kappa$ attached to that position; if the withheld bundle falls inside the redundant tail, it sacrifices only the fee.
Moreover, on any sample path, including an additional cartel bundle weakly increases the cartel's count among the first $\kappa$ included bundles 
(Appendix~\ref{app:dynamic}, Lemma~\ref{lem:pathwise_mono_pivot}), adding a cartel bundle to the prefix can only displace an honest bundle at the margin, never reduce the cartel's share.
    
This aggregate direct revenue floor is the quantity that matters in the coalition analysis. Section~\ref{sec:incentives} still uses the expected bounty share gap to derive conservative monolithic IC conditions. Section~\ref{sec:coalition} then exploits the sharper pathwise statement at the individual lane level, for $\Delta > 0$, withholding a single bundle causes no delay and 
produces no MEV option, so the deviator sacrifices the fee~$f$ (and 
possibly the bounty share $B/\kappa$) with no offsetting gain. For coordinated coalitions, the budget constraint~\eqref{eq:coalition_budget} compares the coalition's total direct revenue loss against the bounded MEV pie and yields a clean sufficient condition against delay inducing coordination. Finally, Section~\ref{sec:pivot_optimality} proves that among all budget capped rank-based pivotal rules, the uniform weights of \textsf{PIVOT-$K$} maximise the worst case bounty forfeiture from deleting any prescribed number of pivotal ranks (Theorem~\ref{thm:pivot_optimal}), and hence minimise the bounty budget~$B$ needed within this class.

The baseline mechanism is intentionally time neutral within the pivotal prefix such that every pivotal index is paid equally, independent of the slot in which it appears. A latency sensitive sender could refine this by multiplying the pivotal payment by a deterministic slot decay factor, for example paying $\delta^{t-1}B/K$ to pivotal indices first finalised in slot $t$ for some $\delta\in(0,1]$. Such a variant would reward earlier inclusion more aggressively, but it would also mix two distinct effects, rank in the decoding prefix and absolute time. We therefore keep the main mechanism uniform so that the incentive analysis isolates the pivotal prefix channel cleanly, and we regard time decaying pivotal rewards as a natural extension for deployments that care about latency as well as censorship resistance.

\section{Incentive analysis}
\label{sec:incentives}

With the delay characterisation (Section~\ref{sec:threat}) and the bounty
mechanism (Section~\ref{sec:mech}) in hand, we assemble the cartel's full
payoff and derive conditions under which full inclusion is the cartel's
best response, that is, conditions for \emph{incentive compatibility} (IC).
The cartel's objective under a stationary $w$-policy is
\[
  \Pi_{\mathcal{C}}(w)
  \;:=\;
  R_{\mathrm{fee}}(w)
  \;+\;
  R_{\mathrm{bounty}}(w)
  \;+\;
  \alpha v \cdot G(w),
\]
where $G(w) := \gamma^{t^\star} q^{(w)}$ is the discounted MEV option
from Section~\ref{sec:threat}.  Full inclusion ($w=1$) is incentive
compatible if $\Pi_{\mathcal{C}}(1) \ge \Pi_{\mathcal{C}}(w)$ for every
$w \in [0,1)$, meaning the cartel weakly prefers honest behaviour to
every stationary deviation.

An important informational asymmetry strengthens the IC analysis in
practice.  The cartel does not observe the sender's dissemination
parameters directly: the number of lanes contacted per slot~$m$, the
symbol count per bundle~$s$, and hence the decode threshold
$\kappa = \lceil K/s \rceil$ are all private to the sender.  Under
Sedna, the payload size is hidden behind a computationally hiding
commitment until decode, so the cartel cannot infer $\kappa$ (and
therefore the slack~$\Delta$) before committing to its first slot
withholding decision.  The analysis below grants the cartel full
knowledge of all parameters, any IC condition derived under this
assumption holds a fortiori when the cartel faces genuine uncertainty
about $\kappa$ and~$\Delta$.

The analysis proceeds in five steps.  Section~\ref{sec:payoffs} bounds
the two direct revenue components: fee revenue $R_{\mathrm{fee}}$ and
bounty revenue $R_{\mathrm{bounty}}$.  Section~\ref{sec:IC} combines
them with the delay probability bounds from Section~\ref{sec:threat} to
obtain an explicit sufficient IC condition.
Section~\ref{sec:pivot_optimality} shows that within the class of
budget capped rank based bounty rules, the uniform weights of
\textsf{PIVOT-$K$} are minimax optimal, minimising the bounty budget
needed for any given deterrence level.
Section~\ref{sec:dynamic_sabotage} analyses the cartel's dynamic best
response, showing that it reduces to minimal sabotage, withholding
exactly $\Delta+1$ bundles.  Finally,
Section~\ref{sec:coalition} recasts the monolithic threat in terms of
local one bundle deviations and aggregate coalition budget constraints,
showing that positive slack blocks any successful delay unless at least
$\Delta+1$ bundle opportunities are sacrificed.

\subsection{Payoff components}
\label{sec:payoffs}

The cartel's IC reduces to comparing the payoff under full inclusion
($w = 1$) with that under any deviation.  To make this comparison
tractable, we bound each of the three payoff components separately.
The fee and bounty terms represent the cartel's direct revenue, which
decreases with withholding; the MEV option is the indirect benefit,
which increases with withholding.  The IC condition asks when the
direct revenue loss exceeds the indirect gain.

\subsubsection{Fee revenue}
\label{sec:fee_revenue}

Fee revenue scales linearly in the inclusion rate~$w$ and the
cartel's expected contact share~$\beta$.  Under a stationary $w$-policy,
in slot $t$ the cartel includes $wA_t$ bundles and earns
$f \cdot wA_t$ (discounted by $\gamma^{t-1}$), giving
\[
  R_{\mathrm{fee}}(w)
  :=\E\!\left[\sum_{t=1}^{T(w)} \gamma^{t-1} f\cdot wA_t\right].
\]
Dropping the stopping time and using $\E[A_t]=\beta m$ yields the
upper bound
\[
  R_{\mathrm{fee}}(w)
  \le f\cdot \frac{w\beta m}{1-\gamma}.
\]
The fee gap $R_{\mathrm{fee}}(1) - R_{\mathrm{fee}}(w)$ is therefore
nonnegative and grows as $w$ decreases: the cartel sacrifices more fee
revenue the more it withholds.  This gap enters the IC condition
(Theorem~\ref{thm:IC_allw}) as one of two deterrence terms; the
conservative sufficient condition drops it entirely and relies on the
bounty gap alone.

\subsubsection{Bounty share under \textsf{PIVOT-$K$}}
\label{sec:bounty_share}

This subsection uses a stationary Bernoulli thinning benchmark rather than the fluid proxy from Section~\ref{sec:threat}. Conditional on
$A_t$, each cartel-addressed bundle is included independently with
probability $w$. Thus
\[
  X_t^{(w)} \mid A_t \sim \mathrm{Bin}(A_t,w),
  \qquad
  Y_t^{(w)} := H_t + X_t^{(w)} \in \{0,1,\dots,m\}.
\]

This keeps the first $\kappa$ prefix objects integer valued while
preserving the interpretation of $w$ as an average inclusion rate. The
estimates below are used only to obtain a conservative model based
approximation to the cartel's expected bounty share, the exact pathwise
statements used later are Theorem~\ref{thm:worst_w0_rigorous},
Theorem~\ref{thm:dyn_best_binary}, and
Lemma~\ref{lem:pathwise_mono_pivot}.

Define
\[
  \pi(w)
  := \frac{\E[X_t^{(w)}]}{\E[Y_t^{(w)}]}
  = \frac{w\beta}{1-\beta+w\beta}.
\]
Let $J_\kappa^{(w)}$ be the number of cartel bundles among the first
$\kappa$ included bundles under this stationary benchmark.

The expected cartel bounty share concentrates around $\pi(w)B$ with an
additive $O(m/\kappa)$ terminal slot error:
\label{prop:bshare_w}%
\begin{equation}
\label{eq:bshare_bound}
  \left|\E[J_\kappa^{(w)}]-\pi(w)\kappa\right|\le m,
  \qquad\text{hence}\qquad
  \left|
    \E[\text{bounty to }\mathcal{C}\mid w]-\pi(w)B
  \right|
  \le \frac{m}{\kappa}B.
\end{equation}
To see this, set
\[
  M_t := \sum_{u=1}^t \bigl(X_u^{(w)}-\pi(w)Y_u^{(w)}\bigr),
  \qquad
  N := \inf\Bigl\{t:\sum_{u=1}^t Y_u^{(w)}\ge \kappa\Bigr\},
\]
and let
\[
  O := \sum_{u=1}^N Y_u^{(w)}-\kappa \in \{0,\dots,m-1\}.
\]
Because the increments of $M_t$ are mean zero and bounded by $m$,
optional stopping gives
\[
  \E\!\left[\sum_{u=1}^N X_u^{(w)}\right]
  = \pi(w)\,\E\!\left[\sum_{u=1}^N Y_u^{(w)}\right]
  = \pi(w)\bigl(\kappa+\E[O]\bigr).
\]
Only bundles from the terminal slot contribute to the overshoot, so
replacing $\sum_{u=1}^N X_u^{(w)}$ by the prefix count $J_\kappa^{(w)}$
incurs an additive error of at most $m$, yielding
\eqref{eq:bshare_bound}. When $\kappa$ is small, this approximation
should be viewed as a coarse benchmark rather than a sharp estimate.

Under full inclusion ($w=1$), the first $\kappa$ included bundles are
exactly the first $\kappa$ lane contacts in deterministic order. Each
contact hits the cartel with marginal probability $\beta$, so
\label{lem:Jkappa_w1}%
\begin{equation}
\label{eq:Jkappa_w1}
  \E[J_\kappa^{(1)}]=\beta\kappa,
  \qquad\text{and hence}\qquad
  R_{\mathrm{bounty}}(1)=\gamma^{t^\star}\beta B.
\end{equation}

Define the discounted bounty revenue under the stationary benchmark by
\[
  T^{(w)} := \inf\Bigl\{t:\sum_{u=1}^t Y_u^{(w)}\ge \kappa\Bigr\},
  \qquad
  R_{\mathrm{bounty}}(w)
  :=
  \E\!\left[\gamma^{T^{(w)}}\cdot \frac{J_\kappa^{(w)}}{\kappa}\,B\right].
\]
Since $Y_t^{(w)}\le m$ pathwise, no benchmark policy can include earlier than $T(1)=t^\star$, so $\gamma^{T^{(w)}}\le \gamma^{T(1)}$ almost surely. Combining this with \eqref{eq:bshare_bound} gives the
conservative upper bound
\label{prop:bounty_upper}%
\begin{equation}
\label{eq:bounty_upper}
  R_{\mathrm{bounty}}(w)
  \le
  \gamma^{T(1)}\!\left(\pi(w)+\frac{m}{\kappa}\right)B.
\end{equation}
Therefore
\[
  R_{\mathrm{bounty}}(1)-R_{\mathrm{bounty}}(w)
  \ge
  \gamma^{T(1)}\!\left(\beta-\pi(w)-\frac{m}{\kappa}\right)B.
\]

This benchmark bounty gap, together with the fee gap from
Section~\ref{sec:fee_revenue}, is the direct revenue term used in the
stationary IC calculation below. The exact dynamic analysis later relies instead on pathwise monotonicity and the delay benchmark of
Theorem~\ref{thm:worst_w0_rigorous}.

\subsection{Sufficient condition for full inclusion}
\label{sec:IC}

We now combine the payoff bounds into an explicit IC condition: a
constraint on the bounty~$B$ under which the cartel weakly prefers
full inclusion to every stationary partial withholding policy
$w \in [0,1)$.

\begin{theorem}[Sufficient condition for $w{=}1$ to dominate all
  stationary $w{<}1$]
\label{thm:IC_allw}
Fix $(m,s,K)$. Consider the stationary benchmark of Section~\ref{sec:bounty_share}, together with the fluid delay proxy
$q^{(w)}$ from Section~\ref{sec:threat}. A conservative sufficient
condition for full inclusion ($w = 1$) to dominate every stationary
benchmark $w \in [0,1)$ is
\[
  \Bigl(\beta - \pi(w) - \frac{m}{\kappa}\Bigr)\,B
  \;\ge\;
  \alpha v\, q^{(w)}
  \qquad \forall\, w \in [0,1).
\]
\end{theorem}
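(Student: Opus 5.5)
The plan is to write the payoff difference $\Pi_{\mathcal{C}}(1)-\Pi_{\mathcal{C}}(w)$ as a sum of three gaps and show it is nonnegative for every fixed $w\in[0,1)$. Using the objective from Section~\ref{sec:incentives},
\[
  \Pi_{\mathcal{C}}(1)-\Pi_{\mathcal{C}}(w)
  =\bigl[R_{\mathrm{fee}}(1)-R_{\mathrm{fee}}(w)\bigr]
  +\bigl[R_{\mathrm{bounty}}(1)-R_{\mathrm{bounty}}(w)\bigr]
  -\alpha v\bigl[G(w)-G(1)\bigr].
\]
By Assumption~\ref{as:coarse}, $q^{(1)}=0$ and hence $G(1)=0$. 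So the MEV term reduces to $\alpha v\,\gamma^{t^\star}q^{(w)}$. Since $T(1)=t^\star$, this has the same discount factor $\gamma^{T(1)}$ as the benchmark bounty gap bound.

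\textbf{Fee gap.} I will argue $R_{\mathrm{fee}}(1)\ge R_{\mathrm{fee}}(w)$ and then drop this term. This is the conservative step the statement signals. Under $w=1$ the cartel earns $f\,A_t$ in every slot $t\le t^\star$. Under $w<1$ it earns $f\,wA_t\le fA_t$ per slot, although the sum runs to the later time $T(w)\ge t^\star$. The clean way to handle this is to appeal to the static fee market modelling choice of Section~\ref{sec:instruments}. Including an admissible bundle never lowers discounted fee revenue, and each cartel contact yields at most one fee, paid at its slot if included. Pathwise, then, the cartel's fee collected on contacts is maximised by including them. Any residual contacts after $t^\star$ under deviation can be absorbed because the comparison is made on the same contact sequence. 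If this pathwise argument feels shaky, I will instead define the fee gap as nonnegative by the fee market assumption, as the paper does in Section~\ref{sec:fee_revenue}.

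\textbf{Bounty gap and conclusion.} Next I invoke the bound derived from \eqref{eq:Jkappa_w1} and \eqref{eq:bounty_upper}:
\[
  R_{\mathrm{bounty}}(1)-R_{\mathrm{bounty}}(w)\ge\gamma^{T(1)}\Bigl(\beta-\pi(w)-\tfrac{m}{\kappa}\Bigr)B .
\]
Combining the three pieces gives
\[
  \Pi_{\mathcal{C}}(1)-\Pi_{\mathcal{C}}(w)\ge\gamma^{t^\star}\Bigl[\Bigl(\beta-\pi(w)-\tfrac{m}{\kappa}\Bigr)B-\alpha v\,q^{(w)}\Bigr].
\]
This is nonnegative under the hypothesis, since $\gamma^{t^\star}>0$. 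Because the argument is pointwise in $w$, it yields dominance over all stationary $w\in[0,1)$.

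\textbf{Main obstacle.} The hard part is the consistency of mixing two models: the Bernoulli thinning benchmark drives the bounty term, while the fluid proxy drives $q^{(w)}$. I will state explicitly that the theorem is a sufficient condition within this combined benchmark, with $\Pi_{\mathcal{C}}(w)$ defined using each component's own model. The fee gap sign is the only other delicate point, and it is handled by dropping that term via the static fee assumption.
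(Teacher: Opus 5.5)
Your plan is the paper's proof step for step: the three-gap split, $G(1)=0$ from $q^{(1)}=0$, the bounty bounds \eqref{eq:Jkappa_w1} and \eqref{eq:bounty_upper}, and division by $\gamma^{t^\star}$. The model-mixing you worry about is not where it breaks. The gap is the step you flagged, dropping the fee gap.

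\textbf{Why the fee-gap step fails.} Your pathwise argument treats the set of cartel contacts as fixed across policies, but it is not. On the delay event the static sender keeps disseminating in slots $t^\star+1,\dots,T(w)$. The fees $\gamma^{t-1}fwA_t$ earned there have no counterpart under $w=1$, where the transaction is included at $t^\star$ and contacts stop, so nothing absorbs them. The static fee-market assumption does not help: it only excludes congestion externalities of one more bundle, not this stopping-time effect. Nor does Section~\ref{sec:fee_revenue}, which only proves the upper bound $R_{\mathrm{fee}}(w)\le fw\beta m/(1-\gamma)$ and so does not determine the sign of the gap. The paper's proof drops the fee gap on the same unsupported assertion, even though its own knife-edge constraint \eqref{eq:dynIC_delta0_explicit} records exactly this extra-slot windfall.

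Concretely, take the evaluation instance $n=100$, $m=\kappa=20$, $\beta=0.2$, $\gamma=0.99$, $f=1$, and $w=0.9$. The fluid model gives $T(w)=1+\1\{A_1>0\}$, so
\[
R_{\mathrm{fee}}(0.9)=0.9\cdot4+0.9\cdot0.99\cdot4\cdot\Prb[A_1>0]\approx 7.1>4=R_{\mathrm{fee}}(1).
\]

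\textbf{How to repair it.} Keep the fee gap with its sign. Since $T(w)\ge t^\star$ always, and $\{T(w)\ge t\}$ is determined by $A_1,\dots,A_{t-1}$ and hence independent of $A_t$,
\[
  R_{\mathrm{fee}}(1)-R_{\mathrm{fee}}(w)
  \;\ge\;
  -fw\beta m\sum_{t>t^\star}\gamma^{t-1}\Prb[T(w)\ge t]
  \;\ge\;
  -\gamma^{t^\star}\,\frac{fw\beta m}{1-\gamma}\,q^{(w)}.
\]
Your combination step then proves the condition with $\alpha v$ replaced by $\alpha v+fw\beta m/(1-\gamma)$ on the right-hand side.

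The statement exactly as written survives only because its hypothesis is nearly vacuous. Since $\pi(w)\uparrow\beta$ as $w\uparrow 1$, the coefficient $\beta-\pi(w)-m/\kappa$ is negative near $w=1$. The hypothesis therefore forces $B=0$, and then $\alpha v\,q^{(w)}=0$ for every $w$.
\begin{itemize}
\item If $\alpha v>0$, this gives $q^{(w)}\equiv 0$, hence $T(w)=t^\star$ almost surely. The fee gap then equals $(1-w)f\,\E[\sum_{t\le t^\star}\gamma^{t-1}A_t]\ge 0$, and your argument goes through.
\item If $\alpha v=0$ (permitted, since $\alpha\in[0,1]$), the instance above satisfies the hypothesis with $B=0$ and violates the conclusion.
\end{itemize}
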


\begin{proof}
Within this stationary benchmark,
\[
  \Pi_{\mathcal{C}}(w)
  =
  R_{\mathrm{fee}}(w)
  +
  R_{\mathrm{bounty}}(w)
  +
  \alpha v\,G(w),
\]
with $G(w)=\gamma^{t^\star}q^{(w)}$ by definition of the fluid delay
proxy. Full inclusion dominates if
\[
  \bigl(R_{\mathrm{fee}}(1)-R_{\mathrm{fee}}(w)\bigr)
  +
  \bigl(R_{\mathrm{bounty}}(1)-R_{\mathrm{bounty}}(w)\bigr)
  \ge
  \alpha v\,G(w)
\]
for all $w<1$. Dropping the nonnegative fee gap and substituting
\eqref{eq:Jkappa_w1} and \eqref{eq:bounty_upper} gives
\[
  \gamma^{t^\star}
  \Bigl(\beta-\pi(w)-\frac{m}{\kappa}\Bigr)B
  \ge
  \alpha v\,\gamma^{t^\star}q^{(w)},
\]
and dividing by $\gamma^{t^\star}$ yields the claim.
\end{proof}

Theorem~\ref{thm:IC_allw} is a model based per-$w$ condition for the
stationary benchmark, the exact global dynamic reduction remains
Theorem~\ref{thm:worst_w0_rigorous}. Verifying it for every
$w \in [0,1)$ can still be cumbersome.  Two reductions simplify the
right hand side.  First, Theorem~\ref{thm:worst_w0_rigorous} already
establishes $q^\pi \le q^{(0)}$ for all dynamic policies, so one may always replace~$q^{(w)}$ with the full withholding bound~$q^{(0)}$.  Second, on the knife edge $\Delta = 0$, Lemma~\ref{lem:threshold_w} shows that for every $w<1$, $q^{(w)} = \Prb[S_{t^\star} > 0]$, since
$(1-w)S_{t^\star}>0 \iff S_{t^\star}>0$.

These reductions pin down the MEV side (the right hand side) of the IC
inequality, but the direct revenue side still depends on~$w$ through the bounty share function~$\pi(w)$.  In particular, the stationary policy that minimises the left hand side need not be $w = 0$, intermediate~$w$ can reduce the bounty gap while still triggering delay. Rather than optimising over~$w$ in the stationary family, we first establish in
Section~\ref{sec:pivot_optimality} that the uniform weights of
\textsf{PIVOT-$K$} are the strongest choice within the class of
rank based bounty rules, maximising the worst case bounty forfeiture
from any deletion attack. Section~\ref{sec:dynamic_sabotage} then
analyses the cartel's dynamic best response directly, showing that it
reduces to withholding exactly $\Delta + 1$ bundles (minimal sabotage),
and derives the closed form IC condition on the knife
edge~\eqref{eq:dynIC_delta0_B}.

\subsection{Optimality among budget capped rank based rules}
\label{sec:pivot_optimality}

The IC condition of Theorem~\ref{thm:IC_allw} depends on the bounty
gap, which in turn depends on how much bounty revenue the cartel
sacrifices by removing bundles from the pivotal prefix.  A natural
question is whether the uniform per-bundle payment $B/\kappa$ used by
\textsf{PIVOT-$K$} is the best allocation of a fixed budget~$B$ across
the $\kappa$ pivotal positions, or whether concentrating payments on
certain ranks (e.g.\ paying more for earlier bundles) could yield a
stronger deterrent.  This subsection shows that uniform payments are
minimax optimal: no other allocation of the same budget creates a larger worst case bounty forfeiture when bundles are withheld.

Let $b^{(1)},\dots,b^{(\kappa)}$ denote the first $\kappa$ admissible
included bundles in Sedna's deterministic symbol resolution order.
A prefix linear pivotal bundle bounty rule is specified by
nonnegative weights
$\omega = (\omega_1,\dots,\omega_\kappa)$ summing to~$1$, and pays
$\omega_j B$ to the lane proposer of $b^{(j)}$ for each
$j\in\{1,\dots,\kappa\}$, non-pivotal bundles receive nothing.  This
class is deterministic, anonymous (depends only on rank~$j$),
budget capped, and compatible with Sedna's deduplication and lane bound
tickets. \textsf{PIVOT-$K$} corresponds to the uniform weights
$\omega_j\equiv 1/\kappa$.

For any weight vector $\omega$, let
$\omega_{(1)}\le \cdots \le \omega_{(\kappa)}$ be the nondecreasing
rearrangement, and define the $d$-removal floor
\[
  S_d(\omega) \;:=\; \sum_{i=1}^{d} \omega_{(i)},
  \qquad d\in\{1,\dots,\kappa\}.
\]
Operationally, $B\cdot S_d(\omega)$ is the minimum total bounty that
can be eliminated by withholding any $d$~pivotal bundles in the worst
case, because an adversary may target the $d$ least paid ranks.

\begin{theorem}[Minimax optimality of uniform pivotal payments]
\label{thm:pivot_optimal}
Fix $\kappa\ge 1$ and $d\in\{1,\dots,\kappa\}$. Over all prefix linear
pivotal bundle rules $\omega$,
\[
  S_d(\omega) \;\le\; \frac{d}{\kappa},
\]
with equality if and only if $\omega_j\equiv 1/\kappa$ for all $j$.
Equivalently, the uniform rule (\textsf{PIVOT-$K$}) maximizes the
worst case bounty forfeiture caused by withholding any $d$ pivotal
bundles.
\end{theorem}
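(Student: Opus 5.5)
The claim follows from an averaging argument on the sorted weights: the $d$ smallest entries of a probability vector have average at most the overall average $1/\kappa$. Fix $\omega$ with $\omega_j\ge 0$ and $\sum_j \omega_j = 1$, and let $\omega_{(1)}\le\cdots\le\omega_{(\kappa)}$ be its nondecreasing rearrangement. The goal is to show that the partial averages $\frac{1}{d}\sum_{i\le d}\omega_{(i)}$ are nondecreasing in $d$. Since the partial average at $d=\kappa$ equals $1/\kappa$, this gives $S_d(\omega)/d \le 1/\kappa$.

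The cleanest way to get this is a direct exchange inequality. Every one of the $d$ smallest entries is at most $\omega_{(d)}$, and every one of the remaining $\kappa-d$ entries is at least $\omega_{(d+1)}\ge\omega_{(d)}$. Hence
\[
(\kappa-d)\sum_{i=1}^{d}\omega_{(i)} \;\le\; (\kappa-d)\,d\,\omega_{(d)} \;\le\; d\sum_{i=d+1}^{\kappa}\omega_{(i)} = d\bigl(1-S_d(\omega)\bigr).
\]
Rearranging gives $\kappa S_d(\omega)\le d$, that is, $S_d(\omega)\le d/\kappa$. The boundary case $d=\kappa$ is trivial, since there $S_\kappa(\omega)=1$ for every $\omega$.

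The main care point is the ``if and only if'' clause. For $d=\kappa$, equality holds for every $\omega$, so the characterization must be read for $d<\kappa$, or the statement should note this exception. I would add that caveat explicitly. For $1\le d<\kappa$, equality forces both inequalities in the chain to be tight. Because $\kappa-d>0$, tightness of the first inequality forces $\omega_{(i)}=\omega_{(d)}$ for all $i\le d$. Because $d>0$, tightness of the second forces $\omega_{(i)}=\omega_{(d)}$ for all $i>d$. So all weights are equal, and the normalization $\sum_j\omega_j=1$ gives $\omega_j\equiv 1/\kappa$. Conversely, the uniform rule gives $S_d=d/\kappa$ immediately.

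Finally, the operational reading follows from the definition. For a rank-based rule, an adversary deleting $d$ pivotal ranks can choose the $d$ least-paid ranks, so the guaranteed forfeiture is exactly $B\,S_d(\omega)$. Maximizing this guarantee over $\omega$ is therefore achieved uniquely by \textsf{PIVOT-$K$} (for $d<\kappa$). The same argument shows that the uniform rule minimizes the budget $B$ needed to reach any target worst-case forfeiture level.
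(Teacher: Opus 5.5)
Your proof is correct and is essentially the paper's averaging argument: the paper compares the top $\kappa-d$ sorted weights against the mean $a=S_d(\omega)/d$ of the bottom $d$, whereas you compare against $\omega_{(d)}$, and your two-sided chain settles the equality case for both blocks directly. Your caveat is also right: for $d=\kappa$ every $\omega$ gives $S_\kappa(\omega)=1=\kappa/\kappa$, so the ``if and only if'' clause holds only for $d<\kappa$, a restriction that the paper's statement and proof both omit.
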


\begin{proof}
Let $a := S_d(\omega)/d$, the average of the $d$ smallest weights.
Since the weights are sorted nondecreasingly, $\omega_{(i)} \ge a$ for
every $i > d$.  Therefore
\[
  1
  \;=\;
  \sum_{i=1}^{\kappa}\omega_{(i)}
  \;=\;
  S_d(\omega) + \sum_{i=d+1}^{\kappa}\omega_{(i)}
  \;\ge\;
  da + (\kappa-d)a
  \;=\;
  \kappa a,
\]
so $a \le 1/\kappa$ and $S_d(\omega) = da \le d/\kappa$.  If equality
holds then every inequality is tight, forcing all $\kappa$ weights to
equal $1/\kappa$.  Conversely, the uniform rule attains equality.
\end{proof}

In the IC analysis that follows, the relevant deletion count is
$d = \Delta + 1$, the minimal number of withheld bundles needed to
force a delay.  Among all
prefix linear rules, \textsf{PIVOT-$K$} therefore maximises
$S_{\Delta+1}(\omega)$ and hence minimises the bounty budget~$B$
needed to meet any given deterrence threshold.

Theorem~\ref{thm:pivot_optimal} is a minimax statement within the class
of anonymous, budget capped rank based rules.  It does not exclude
richer state dependent mechanisms, but it shows that once the designer
commits to paying a fixed budget across the pivotal prefix, flattening
the weights is the strongest way to maximize the worst case bounty
forfeiture from a deletion attack.

\subsection{Dynamic deviations and minimal sabotage}
\label{sec:dynamic_sabotage}

The stationary $w$-family can overstate the revenue cost of causing
delay.  A dynamic cartel can withhold the minimum number of bundles
needed to push inclusion past~$t^\star$, while including everything else
to preserve fee and bounty revenue.

Recall from Section~\ref{sec:reduction} that under a general cartel
policy~$\pi$, the on-chain inclusion time is
$T^\pi := \inf\{t : U_t^\pi \ge \kappa\}$.  Define the cumulative
number of cartel addressed bundles withheld up to~$t^\star$ as
$W_{t^\star}^\pi := S_{t^\star} - \sum_{t=1}^{t^\star} X_t^\pi$.
By the same accounting used in Lemma~\ref{lem:threshold_w}, the
on-chain count at the honest horizon satisfies
$U_{t^\star}^\pi = (\kappa + \Delta) - W_{t^\star}^\pi$, so delay
occurs if and only if $W_{t^\star}^\pi > \Delta$.  On the knife edge
($\Delta = 0$), a single withheld bundle suffices.

Since each included bundle earns a fixed fee~$f$ and the cartel's
\textsf{PIVOT-$K$} bounty share is pathwise nondecreasing in inclusions
(Lemma~\ref{lem:pathwise_mono_pivot} in
Appendix~\ref{app:dynamic}), withholding beyond the minimum $\Delta + 1$
only reduces direct revenue without increasing the MEV term.

\begin{theorem}[Minimal sabotage is optimal among delay inducing policies]
\label{thm:dyn_best_binary}
Fix a sample path of the lane contact process up to~$t^\star$.
Among all policies that achieve $T^\pi > t^\star$, an optimal
policy withholds exactly $\Delta + 1$ cartel addressed bundles
and includes all others.
\end{theorem}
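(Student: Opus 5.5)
Since the sample path of contacts $(A_t)_{t\le t^\star}$ is fixed, the only objects that vary across policies are the integer inclusion decisions $X_t^\pi$. The payoff is then a deterministic function of these decisions. It has three parts: fee revenue $f\sum_t X_t^\pi$ (plus any post-$t^\star$ inclusions), the \textsf{PIVOT-$K$} bounty accruing to the cartel, and the MEV term. In the coarse model the MEV term depends only on the indicator $\1\{T^\pi>t^\star\}$. The plan is an exchange argument: start from any delay-inducing policy $\pi$ and transform it into one that withholds exactly $\Delta+1$ bundles, without lowering the payoff.

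\textbf{Step 1 (feasibility characterisation).} By the accounting already recorded before the theorem, $U_{t^\star}^\pi=(\kappa+\Delta)-W_{t^\star}^\pi$. Hence the delay-inducing policies are exactly those with $W_{t^\star}^\pi\ge\Delta+1$. This set is nonempty on the given path only if $S_{t^\star}\ge\Delta+1$, and we restrict to that case (otherwise the statement is vacuous).

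\textbf{Step 2 (exchange).} Take any feasible $\pi$ with $W_{t^\star}^\pi>\Delta+1$. Pick one withheld cartel bundle before $t^\star$ and construct $\pi'$ that includes it and otherwise agrees with $\pi$. Then $W_{t^\star}^{\pi'}=W_{t^\star}^\pi-1\ge\Delta+1$, so $T^{\pi'}>t^\star$ still holds and the MEV indicator is unchanged. Fee revenue rises by $f\ge0$, because fees are fixed per admissible bundle and there are no fee-market externalities. Bounty revenue weakly rises by Lemma~\ref{lem:pathwise_mono_pivot}: adding a cartel inclusion cannot decrease the cartel's count among the first $\kappa$ included bundles.

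The one subtle point is discounting. Since $T^{\pi'}\le T^\pi$ pathwise, by the same monotonicity argument used in Theorem~\ref{thm:worst_w0_rigorous}, the bounty is paid no later, so $\gamma^{T}$ weakly increases. I also need to handle bundles withheld after $t^\star$. Re-including them likewise does not affect the delay event, and by the same monotonicity it weakly raises fees and bounty, so the optimal policy includes all of them. Iterating the exchange finitely many times reaches a policy with $W_{t^\star}=\Delta+1$ that includes everything else, with payoff at least that of $\pi$. The set of policies on a fixed finite path is finite, so an optimum exists, and some optimum has this form.

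\textbf{Main obstacle.} The delicate step is the bounty monotonicity combined with the timing of the bounty payment. Including an extra cartel bundle can move decoding earlier and reshuffle the pivotal prefix through the deterministic resolution order. I would lean on Lemma~\ref{lem:pathwise_mono_pivot} for the count, and check separately that the product $\gamma^{T}\cdot J_\kappa$ of two nonnegative, weakly increasing quantities is itself weakly increasing. If the lemma only covers counts at a fixed decode slot, I would first argue that the inclusion slot cannot move past $t^\star$ within the delay class, and then compare slot by slot. Finally, which $\Delta+1$ bundles to withhold may matter for the bounty: withholding late bundles keeps more cartel bundles in the prefix. The theorem asserts only the count, so I would simply choose the best such set among finitely many options.
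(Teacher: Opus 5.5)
Your proposal is correct and is essentially the paper's own argument: delay forces $W_{t^\star}^\pi\ge\Delta+1$, and beyond that each extra withholding forfeits a fee and, by Lemma~\ref{lem:pathwise_mono_pivot}, weakly lowers the cartel's pivotal count, while the MEV term depends only on $\1\{T^\pi>t^\star\}$. Your extra bookkeeping (the discount factor via $T^{\pi'}\le T^\pi$, re-including post-$t^\star$ bundles, existence of an optimum by finiteness) makes explicit what the paper's three-line proof leaves implicit; as in the paper, though, the step ``fee revenue rises by $f$'' relies on the static-fee assumption, because an earlier $T^{\pi'}$ can also cut short the static sender's dissemination and with it the extra-slot fee windfall that the paper itself counts in \eqref{eq:dynIC_delta0_explicit}.
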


\begin{proof}
At least $\Delta + 1$ withholdings are needed for delay.  Each additional withheld bundle strictly decreases fee revenue and weakly decreases bounty revenue (Lemma~\ref{lem:pathwise_mono_pivot}), while the MEV term depends only on the binary event $\{T^\pi > t^\star\}$.  Hence exactly $\Delta + 1$ withholdings are optimal.
\end{proof}

Theorem~\ref{thm:dyn_best_binary} relies on the static fee market
assumption (\S\ref{sec:model}) where the net marginal payoff from
including an additional admissible bundle is nonnegative.  In dynamic
fee markets with EIP-1559-style congestion pricing or priority fee
auctions, including a Sedna bundle may displace a higher value
alternative, making the net marginal payoff negative.
Appendix~\ref{app:dynamic} records a concrete counterexample and states
the necessary generalised condition (Assumption~\ref{as:net_nonneg})
that restores minimal sabotage in general fee environments.

On the knife edge ($\Delta = 0$), minimal sabotage reduces to
withholding exactly one bundle whenever the cartel is contacted
($S_{t^\star} > 0$).  Let
\[
  q_0
  \;:=\;
  \Prb[S_{t^\star} > 0]
  \;=\;
  1 - \left(
    \frac{\binom{(1-\beta)n}{m}}{\binom{n}{m}}
  \right)^{\!t^\star}.
\]
Comparing the cartel's payoff under this one bundle deviation against
full inclusion, the direct revenue sacrifice consists of three terms: a
net fee sacrifice from the single withheld bundle (offset by the extra
slot's fee windfall under the static sender), a bounty discount loss
from later inclusion, and the lost pivotal share when a cartel bundle
exits the prefix and an honest replacement enters. Assembling these
and requiring that they exceed the MEV option $\alpha v \cdot
\gamma^{t^\star} q_0$ yields the sufficient condition
\begin{equation}
\label{eq:dynIC_delta0_explicit}
  \underbrace{q_0\bigl(\gamma^{t^\star-1}f
    - \gamma^{t^\star}\beta m f\bigr)
  }_{\text{net fee sacrifice}}
  \;+\;
  \underbrace{\gamma^{t^\star}(1-\gamma)\beta B
  }_{\text{bounty discount loss}}
  \;+\;
  \underbrace{\gamma^{t^\star+1}(1-\beta)\frac{q_0}{\kappa}B
  }_{\text{lost pivotal share}}
  \;\ge\;
  \alpha v \cdot \gamma^{t^\star} q_0,
\end{equation}
which rearranges to the closed-form bounty threshold
\begin{equation}
\label{eq:dynIC_delta0_B}
  B
  \;\ge\;
  \frac{q_0\bigl(\alpha v - f/\gamma + \beta m f\bigr)}
       {(1-\gamma)\beta
        + \gamma(1-\beta)\,q_0/\kappa}\,.
\end{equation}
The fee term assumes the static sender continues disseminating in
slot~$t^\star + 1$; if the sender adapts, the cartel's extra-slot
windfall $\gamma^{t^\star}\beta m f$ should be reduced.  Dropping the
pivotal share term (third brace) yields a more conservative sufficient
condition that does not depend on the replacement bundle's lane
assignment.

Equations~\eqref{eq:dynIC_delta0_explicit}--\eqref{eq:dynIC_delta0_B}
are the binding IC constraint for knife edge instances ($\Delta = 0$),
where the attack is unilateral and no coalition decomposition is needed. For $\Delta > 0$, the situation is sharply different as the attack requires coordinating at least $\Delta+1$ withheld bundle
opportunities and, under the static sender, these need not correspond
to $\Delta+1$ distinct lanes. The next subsection shows that the total
extractable MEV from this transaction cannot cover the resulting
aggregate opportunity cost.

\subsection{Local rationality and coalition stability}
\label{sec:coalition}

The preceding analysis treats the cartel as a monolithic actor.  We now
decompose the incentive problem to the level of individual withholding
opportunities. Fees and bounties are lane bound, but under the static
sender the same lane may be contacted again in later slots, so the
statements below are phrased in terms of withheld bundles rather than
distinct cartel lanes.

When $\Delta > 0$, a single withheld bundle cannot cause any delay.  If
exactly one received bundle is withheld while all others are included
honestly, then $W_{t^\star} = 1 \le \Delta$, so no delay occurs and
the cartel obtains no MEV option from this deviation.  The deviator
therefore sacrifices the fee~$f$ and, if the withheld bundle would
have landed inside the pivotal prefix, the bounty share $B/\kappa$,
in exchange for no offsetting gain.  Any one-bundle deviation is
therefore strictly unprofitable whenever $f > 0$, and weakly dominated
in general unless the deviator is subsidised externally.  This
observation is the core of the coalition decomposition: positive slack
turns the delay attack from a unilateral option into a coordination
problem, because the first $\Delta$ withholdings each destroy revenue
without producing any MEV to share.
% When $\Delta > 0$, a single withheld bundle cannot cause any delay.  If
% exactly one received bundle is withheld while all others are included
% honestly, then $W_{t^\star} = 1 \le \Delta$, so no delay occurs.  The deviator gains zero MEV and sacrifices the fee~$f$ (plus a bounty share if the withheld bundle
% would have landed inside the pivotal prefix). Any one bundle deviation
% is therefore strictly unprofitable whenever $f > 0$, and weakly
% dominated in general unless the deviator is subsidised externally.
% This observation is the core of the coalition decomposition: positive
% slack turns the delay attack from a unilateral option into a
% coordination problem.

\begin{lemma}[Coalition budget constraint]
\label{prop:coalition_IC}
Fix $\Delta>0$ and let a coalition withhold $c\ge \Delta+1$ bundles
before $t^\star$.  Among the $\kappa+\Delta$ bundle positions present
at the honest horizon, at most $\Delta$ are redundant (outside the
pivotal prefix).  Deleting $c$ bundles therefore removes at least
$(c-\Delta)^+$ positions from the pivotal prefix.  The coalition's
total direct revenue sacrifice is consequently at least
\begin{equation}
\label{eq:coalition_loss}
  c\,f + \frac{(c-\Delta)^+}{\kappa}B.
\end{equation}
If the bounty satisfies
\begin{equation}
\label{eq:coalition_budget}
  (\Delta+1)f + \frac{B}{\kappa}
  \;>\;
  \alpha v \cdot \gamma^{t^\star},
\end{equation}
then no delay inducing coalition can cover its aggregate opportunity
cost from the MEV of this single transaction.
\end{lemma}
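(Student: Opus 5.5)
The argument is a pathwise counting step followed by a comparison of the resulting revenue floor against the MEV available from a single transaction.

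\textbf{Counting pivotal deletions.} Fix a sample path of the contact process up to $t^\star$. Under full inclusion, the honest horizon carries exactly $t^\star m=\kappa+\Delta$ admissible bundles, ordered by Sedna's deterministic symbol resolution order. The first $\kappa$ of these form the pivotal prefix and the last $\Delta$ are the redundant tail.

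Let $D$ be the set of $c$ withheld positions. At most $\Delta$ of them can lie in the tail, so at least $c-\Delta$ lie in the honest-run prefix. Each such bundle was paid $B/\kappa$ in the honest run and receives nothing once withheld. Lemma~\ref{lem:pathwise_mono_pivot} makes this precise: removing a coalition bundle cannot increase the coalition's pivotal count. So the coalition loses at least $(c-\Delta)^+$ pivotal positions it would otherwise have held.

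One subtlety needs a sentence. Withholding shifts later (possibly coalition) bundles forward into the prefix, and a delayed prefix may be completed in slot $t^\star+1$. The cleanest accounting compares coalition-held prefix slots before and after the deletion. Deleting a prefix element and promoting a tail element gains the coalition at most one slot per promoted tail bundle. There are at most $\Delta$ tail bundles, and any of them that are themselves withheld are already counted in $D$. Hence the net prefix loss is still at least $c-\Delta$.

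The fee part is immediate: each withheld bundle forfeits its lane-bound ticket fee $f$, because tickets are slot-bound. Under the static fee assumption, nothing offsets this loss within the horizon. Summing gives the floor \eqref{eq:coalition_loss}.

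\textbf{Monotonicity in $c$.} The expression $cf+(c-\Delta)^+B/\kappa$ is nondecreasing in $c$. Its minimum over delay-inducing coalitions, which require $c\ge\Delta+1$ by the deficit equivalence $W^\pi_{t^\star}>\Delta$, is therefore $(\Delta+1)f+B/\kappa$, attained at minimal sabotage (consistent with Theorem~\ref{thm:dyn_best_binary}).

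\textbf{Comparison with the MEV pie.} The total MEV the coalition can extract from this transaction is at most $\alpha v$ realised no earlier than slot $t^\star$. Its discounted value is therefore at most $\alpha v\gamma^{t^\star}$, and this bound holds even conditional on delay succeeding. Under \eqref{eq:coalition_budget}, the aggregate sacrifice strictly exceeds this pie. No side-payment scheme internal to the coalition can then make every member weakly better off than under inclusion, since transfers are zero-sum within the coalition.

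\textbf{Main obstacle.} The step I expect to be delicate is the timing comparison. The floor as written mixes undiscounted fees and bounty with the discounted MEV $\gamma^{t^\star}$, and the bounty is paid at $T>t^\star$ rather than at $t^\star$. I would handle this by reading \eqref{eq:coalition_loss} as the revenue-loss floor in the same units the statement uses, measured at the honest horizon as the lemma states, and flagging that the discounting is exactly as in the statement.
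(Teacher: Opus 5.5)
Your overall route is the paper's. Its proof consists only of your last two steps: minimise the floor at $c=\Delta+1$ to get $(\Delta+1)f+B/\kappa$, then compare with the single-transaction pie $\alpha v\gamma^{t^\star}$. It reads \eqref{eq:coalition_loss} off the statement's count of honest-run positions, i.e.\ as the fee and bounty attached to the withheld opportunities.

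The gap is in the step you add to make this a \emph{net} loss. After withholding, only $\kappa+\Delta-c$ bundles are on chain at $t^\star$, so all of them become pivotal. The other $c-\Delta$ pivotal positions are filled by the first bundles of slot $t^\star+1$, which the static sender keeps disseminating. You mention this completion but then count only tail promotions. Pathwise the coalition's pivotal count changes by $-(c-\tau_C)+X$, where $\tau_C\le\Delta$ is its number of bundles in the honest-run tail and $X\le c-\Delta$ its number among the refills. The net loss can therefore be as small as $\Delta-\tau_C$. For example, take $\Delta=1$, $c=2$, with the coalition owning the tail bundle and withholding it together with one prefix bundle. The net pivotal loss is zero as soon as the first slot-$(t^\star+1)$ bundle in resolution order lands in a coalition lane. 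The fee loss is offset the same way: the delay creates an extra dissemination slot, and the coalition earns $f$ on each of its slot-$(t^\star+1)$ contacts. The paper's own knife-edge condition \eqref{eq:dynIC_delta0_explicit} credits exactly these two offsets, via the windfall $-\gamma^{t^\star}\beta m f$ and the factor $(1-\beta)$ on the lost pivotal share. Also, Lemma~\ref{lem:pathwise_mono_pivot} gives only the sign of the change, not its size.

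So ``the net prefix loss is still at least $c-\Delta$'' and ``nothing offsets this loss'' are false as net statements. What your first paragraph does prove is the gross floor: at most $\Delta$ of the $c$ withheld positions can sit in the honest-run tail. That suffices for the lemma's literal conclusion that the pie cannot cover the aggregate opportunity cost. Your stronger closing claim, that no internal side payment leaves every member weakly better off, needs the net loss. Making it rigorous means subtracting the refill and the windfall, which changes the threshold. The last clause of the paper's proof (``cannot be jointly profitable'') leans on the same gross floor.

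On discounting you are at parity with the paper, whose proof simply compares the quantities as written. However, the justification you offer does not hold up. In the paper's accounting the honest bounty ($R_{\mathrm{bounty}}(1)=\gamma^{t^\star}\beta B$) and the MEV option ($\gamma^{t^\star}q$) carry the same factor. Valued at the honest horizon, the forfeited bounty is $B/\kappa$ per position while the pie is $\alpha v$, not $\alpha v\gamma^{t^\star}$. A time-consistent sufficient version of \eqref{eq:coalition_budget}, in gross terms, would read $(\Delta+1)f/\gamma+B/\kappa>\alpha v$. Present the comparison as taken from the statement, not as reconciled by a choice of time basis.
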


\begin{proof}
Any coalition that delays the transaction must withhold
$c \ge \Delta+1$ bundles, so its
direct loss is at least $(\Delta+1)f + B/\kappa$.  If this exceeds
$\alpha v \cdot \gamma^{t^\star}$, the total redistributable MEV is
smaller than the coalition's aggregate sacrifice, and the deviation
cannot be jointly profitable.
\end{proof}

Rearranging~\eqref{eq:coalition_budget}, a sufficient bounty is
\begin{equation}
\label{eq:coalition_B}
  B \;\ge\; \kappa\!\left(
    \alpha v \cdot \gamma^{t^\star} - (\Delta+1)f
  \right)^{\!+}.
\end{equation}

The budget constraint~\eqref{eq:coalition_budget} holds under any
internal sharing rule (equal splitting, proportional, or arbitrary side
payments) as long as the total redistributable value from this
transaction is at most $\alpha v \cdot \gamma^{t^\star}$.  MEV
extraction is performed by a single proposer who inserts a front-running
transaction, and the total value it can redistribute is capped by what
it extracts.  An off-chain MEV relay or builder market that pools
extraction gains across multiple transactions could in principle
increase the effective pie beyond $\alpha v \cdot \gamma^{t^\star}$,
restoring the monolithic threat.  In a permissionless setting without
such cross transaction subsidy infrastructure, the aggregate budget
bound applies and the monolithic IC overstates the threat for all
$\Delta > 0$ instances.

For $\Delta = 0$, a single lane can unilaterally cause delay and extract
MEV without coordination, the coalition decomposition does not apply and
the monolithic IC
(equations~\eqref{eq:dynIC_delta0_explicit}--\eqref{eq:dynIC_delta0_B})
remains the binding constraint.

\section{Adaptive sender strategies}
\label{sec:ratchet}

Sections~\ref{sec:threat}--\ref{sec:incentives} assumed a static sender
that blindly contacts $m$~uniformly sampled lanes every slot.  In
practice, the sender observes on-chain inclusion after each slot and can
exclude non-responsive lanes.  This makes withholding self-revealing where every withheld bundle burns the cartel lane's cover.  We formalise this ratchet and show it strictly reduces the multi-slot delay probability when $t^\star \ge 2$.

Under lane bound tickets, the sender knows exactly which lanes were
contacted in slot~$t$ and can verify after finalisation whether each
issued ticket was redeemed in an admissible bundle occurrence.  Honest
contacted lanes always include (they earn fee revenue~$f$ and have no
MEV motive to withhold), so non-redemption is a perfect detection
signal for cartel behaviour.

The sender maintains an eligible lane set
$\mathcal{E}_t \subseteq \{1,\dots,n\}$ for this transaction.
Initially $\mathcal{E}_1 = \{1,\dots,n\}$.  In slot~$t$ the sender
samples $m_t$ lanes uniformly from~$\mathcal{E}_t$ and sends a bundle
plus ticket to each.  After finalisation, any contacted lane whose
ticket was not redeemed is removed:
\[
  \mathcal{E}_{t+1}
  := \mathcal{E}_t
  \setminus
  \bigl\{\ell \in C_t :
    \text{no admissible redemption of the ticket for $(\id, t, \ell)$}
  \bigr\},
\]
where $C_t$ is the set of lanes contacted in slot~$t$.

In practice, honest lanes may occasionally fail to redeem due to
network jitter or downtime.  Appendix~\ref{sec:adaptive_sender} models
this with a per-contact miss rate~$\varepsilon$ and shows that the
ratchet's delay bound degrades gracefully: the delay probability is
bounded by
$\Prb[W + \mathrm{Bin}(M_{t^\star}, \varepsilon) >
\Delta_{\mathrm{rec}}]$, reducing to the perfect detection case when
$\varepsilon = 0$.

\subsection{First slot deficit bound}
\label{sec:firstslot}

Consider a one shot deviation where the cartel withholds only in
slot~$1$ and all withholding lanes are flagged and excluded.  In
slots $t = 2, \dots, t^\star$ the sender contacts $m_t$~lanes from the
reduced pool~$\mathcal{E}_t$, with total planned contacts and recovery
slack
\[
  M_{t^\star} := m + \sum_{t=2}^{t^\star} m_t,
  \qquad
  \Delta_{\mathrm{rec}} := M_{t^\star} - \kappa.
\]
Since only slot~$1$ suffers withholding, the on-chain count by
time~$t^\star$ is $M_{t^\star} - W_1$, where $W_1$ is the number of
cartel bundles withheld in slot~$1$.  Inclusion by~$t^\star$ therefore
holds iff $W_1 \le \Delta_{\mathrm{rec}}$. Under maximal first slot
withholding ($W_1 = A_1$), the ratchet corrected delay probability is
the hypergeometric tail
\label{prop:firstslot_deficit}%
\begin{equation}
\label{eq:q_rat}
  q_{\mathrm{rat}}
  \;:=\;
  \Prb[A_1 > \Delta_{\mathrm{rec}}],
  \qquad
  A_1 \sim \mathrm{Hypergeom}(n,\, \beta n,\, m).
\end{equation}
Writing $a := \Delta_{\mathrm{rec}} + 1$ and $\theta := a/m$, the tail
bounds~\eqref{eq:kl_tails} with $t = 1$ give
$q_{\mathrm{rat}} \le \exp\{-m\,\KL(\theta \| \beta)\}$ whenever
$\theta > \beta$.

Note that in slot~$1$, $A_1 + H_1 = m$, so single slot liveness
(enough honest inclusions) and single slot privacy risk (enough cartel
contacts) are complementary tails of the same hypergeometric draw.  For
a sender targeting $t^\star = 1$ with $\kappa \le m$, reducing liveness
failure $\Prb[A_1 > m - \kappa]$ automatically reduces the one shot
delay probability.

\subsection{Adversary's constrained best response}
\label{sec:adv_ratchet}

The one shot bound of Section~\ref{sec:firstslot} treats the simplest
deviation. A strategic adversary can spread withholding across multiple
slots, sacrificing different lanes each slot to maintain partial cover.
The ratchet constrains this by making every sacrifice permanent.

Let $W_t$ denote the number of cartel lane contacts withheld (and
consequently flagged) in slot~$t$, and let $F_t := \sum_{u=1}^t W_u$ be the cumulative number of cartel lanes removed from the eligible set by the end of slot~$t$.  After slot~$t$, the eligible pool has $n_t := n - F_t$ lanes of which at most $\beta n - F_t$ are cartel, giving an effective cartel fraction
\label{lem:beta_shrink}%
\begin{equation}
\label{eq:beta_shrink}
  \beta_t
  \;:=\;
  \frac{\beta n - F_t}{n - F_t}
  \;=\;
  \beta - \frac{F_t(1-\beta)}{n - F_t}
  \;<\; \beta
  \qquad\text{for } F_t \ge 1.
\end{equation}
Since the subtracted term is strictly increasing in~$F_t$, the
effective cartel fraction shrinks with every flagged lane.

Creating delay requires withholding at least $\Delta + 1$
cartel addressed bundles across all slots up to~$t^\star$.  Under the
ratchet, each withheld bundle costs a permanently excluded cartel lane.
The adversary's total attack budget across all transactions sharing the
same lane pool is therefore at most $\beta n$ lane sacrifices.

\begin{lemma}[Ratchet corrected delay probability for
  $t^\star \ge 2$]
\label{prop:ratchet_delay}
Fix $t^\star \ge 2$ and suppose the adversary spreads its $\Delta + 1$
required withholdings across slots $1, \dots, t^\star$.  Let
$A_t^{(\mathrm{rat})}$ be the number of cartel contacts in slot~$t$
under the reduced pool $(n_t, \beta_t)$ defined
by~\eqref{eq:beta_shrink}. The delay probability satisfies
\[
  q_{\mathrm{rat}}^{(\mathrm{multi})}
  \;\le\;
  \Prb\!\left[
    \sum_{t=1}^{t^\star} A_t^{(\mathrm{rat})} > \Delta
  \right],
\]
with $n_t \le n$ and $\beta_t \le \beta$ for all $t \ge 2$.  In
particular, $q_{\mathrm{rat}}^{(\mathrm{multi})} < q^{(0)}$ for any
strategy that withholds at least one bundle before slot~$t^\star$.
\end{lemma}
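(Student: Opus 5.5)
The plan is to reproduce the pathwise accounting of Lemma~\ref{lem:threshold_w} and Theorem~\ref{thm:worst_w0_rigorous} inside the ratchet dynamics, and then compare with the static sender by coupling. Under the ratchet the sender still issues $m$ contacts per slot (take $m_t=m$, so $M_{t^\star}=t^\star m$ and $\Delta_{\mathrm{rec}}=\Delta$). Honest lanes always redeem, so the on-chain count at the honest horizon is $t^\star m-\sum_{t\le t^\star}W_t$. Delay therefore occurs iff $\sum_t W_t>\Delta$, exactly as in Section~\ref{sec:dynamic_sabotage}. Since $W_t\le A_t^{(\mathrm{rat})}$ pathwise, the delay event is contained in $\{\sum_{t\le t^\star}A_t^{(\mathrm{rat})}>\Delta\}$. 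This gives the displayed upper bound, and the same argument as Theorem~\ref{thm:worst_w0_rigorous} shows that full withholding of every contacted cartel lane is the worst case for this bound. The facts $n_t\le n$ and $\beta_t\le\beta$ are immediate from \eqref{eq:beta_shrink}, with strict inequality once $F_t\ge1$.

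For the strict comparison with $q^{(0)}=\Prb[S_{t^\star}>\Delta]$, I would build a coupling slot by slot. Slot~$1$ is identical in both worlds, since $A_1^{(\mathrm{rat})}=A_1$. For $t\ge2$, conditional on the history, $A_t^{(\mathrm{rat})}\sim\mathrm{Hypergeom}(n_t,\beta_t n_t,m)$ and $A_t\sim\mathrm{Hypergeom}(n,\beta n,m)$. I would show that the first is stochastically dominated by the second. One route is to realize the reduced pool as the full pool with $F_t$ cartel lanes deleted, and observe that sampling $m$ lanes from a pool with fewer cartel lanes and the same honest lanes yields fewer cartel hits in the monotone coupling. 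This can be checked via monotone likelihood ratio of hypergeometrics in the number of successes, with the population reduced only by successes. Chaining the conditional dominances with independence across slots gives $\sum_t A_t^{(\mathrm{rat})}\le_{\mathrm{st}}S_{t^\star}$, hence $q_{\mathrm{rat}}^{(\mathrm{multi})}\le q^{(0)}$.

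Strictness is where the main obstacle sits. If the strategy withholds at least one bundle before slot~$t^\star$ (at some slot $t<t^\star$, which requires $t^\star\ge2$), then on a positive-probability event $F_t\ge1$ entering some slot $\le t^\star$. On that event the conditional dominance is strict: the pool has lost cartel lanes, so $\Prb[A^{(\mathrm{rat})}\ge j]<\Prb[A\ge j]$ for relevant $j$. To transfer this to the tail $\{\cdot>\Delta\}$ I must exhibit a positive-probability history on which the static sum exceeds $\Delta$ but the ratchet sum does not. An example is a history where all $\beta n$ cartel lanes would be needed to cross $\Delta$ in the static world, or more simply one where the marginal extra cartel contact in the later slot decides the event. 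I expect care here when $\Delta$ is large relative to $\beta n$ or when $\Prb[S_{t^\star}>\Delta]$ is $0$ or $1$, where strictness can fail. I would handle these by restricting the claim to the nondegenerate case $0<q^{(0)}<1$, noting it explicitly, and proving strictness there by exhibiting an explicit history with one excess cartel contact in slot $t^\star$ available only under the static sender.
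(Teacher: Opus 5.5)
Your argument for the displayed bound and for the weak comparison $q_{\mathrm{rat}}^{(\mathrm{multi})}\le q^{(0)}$ follows the paper's route: per-slot dominance of the shrunken-pool hypergeometric by $\mathrm{Hypergeom}(n,\beta n,m)$, then summing over slots. You make that route more rigorous than the paper does. The accounting $W_t\le A_t^{(\mathrm{rat})}$ is what actually yields the displayed inequality. Your deletion coupling (MLR in the number of successes with the honest count fixed), combined with a sequential coupling, gives $\sum_t A_t^{(\mathrm{rat})}\le_{\mathrm{st}}S_{t^\star}$ for every strategy.

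Drop the aside that ``the same argument as Theorem~\ref{thm:worst_w0_rigorous}'' makes full withholding the worst case. Under the ratchet the contact law depends on past withholding, so the fixed-path comparison used there is unavailable. You do not need that claim.

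The genuine gap is in strictness. You rightly see that strict per-slot dominance does not by itself give a strict inequality for the tail at $\Delta$. The paper's one-line justification (``at least one slot has a strictly smaller effective $\beta$'') skips exactly this step. However, your repair fails on the knife edge $\Delta=0$. That repair assumes $0<q^{(0)}<1$ and exhibits a history where an extra late cartel contact exists only under the static sender.

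At $\Delta=0$, the pool shrinks only after the first withholding. That withholding already gives $\sum_t W_t\ge 1>\Delta$, so the delay event is decided before the two contact laws ever differ. Consider full withholding, or the minimal-sabotage policy that withholds the first cartel-addressed bundle; the latter withholds before $t^\star$ whenever $A_1\ge 1$. For either policy, $q_{\mathrm{rat}}^{(\mathrm{multi})}=1-\Prb[A_1=0]^{t^\star}=q^{(0)}$ exactly. This holds even though $0<q^{(0)}<1$, for example at $\kappa=40$ or $\kappa=100$ in the paper's calibration, both with $t^\star\ge 2$. So no history of the kind you want exists.

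The strict claim is therefore false at $\Delta=0$. This is consistent with the paper's own remark that the ratchet cannot recover on a pure knife edge. A correct strictness statement needs $\Delta\ge 1$. More precisely, it needs a positive-probability history with three features: some lane is flagged in a slot $u<t^\star$, cumulative withholding is still at most $\Delta$, and the contacts in slots $u+1,\dots,t^\star$ can still decide whether $\Delta$ is crossed. On such a history your strict MLR comparison produces the gap.
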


\begin{proof}
Under the ratchet, the pool shrinks after each withholding slot:
$\beta_t < \beta$ for $t \ge 2$ by~\eqref{eq:beta_shrink}.  Since the
hypergeometric upper tail is stochastically increasing in the cartel
fraction, each $A_t^{(\mathrm{rat})}$ is stochastically dominated by
$A_t \sim \mathrm{Hypergeom}(n, \beta n, m)$, with strict domination
for $t \ge 2$ whenever $F_{t-1} \ge 1$.  Summing gives the stated
bound; the strict inequality follows because at least one slot has a
strictly smaller effective~$\beta$.
\end{proof}

Under the ratchet, each withheld bundle permanently flags the
corresponding lane.  For a coalition that withholds
$c \ge \Delta + 1$ bundles, the direct loss lower
bound~\eqref{eq:coalition_loss} captures only the within transaction
fee and bounty sacrifice.  Ratcheting adds a further dynamic cost as
every flagged lane forgoes future fee revenue and, when relevant,
future bounty revenue on subsequent transactions.  This strictly
strengthens the aggregate budget constraint~\eqref{eq:coalition_budget} while leaving the single
transaction MEV pie unchanged.

The preceding analysis grants the adversary full knowledge of the
decode threshold~$\kappa$ (and hence the slack~$\Delta$).  Under Sedna,
$\kappa = \lceil K/s \rceil$ depends on the payload size, which is
hidden behind a computationally hiding commitment until decode.  For
$t^\star = 1$ this is immaterial: after slot-$1$ finalisation the
adversary can pool its $A_1$~private bundles with the $H_1$~on-chain
honest bundles, obtaining all $m \ge \kappa$ symbols and decoding
regardless of~$\kappa$.  For $t^\star \ge 2$, however, the ratchet
forces the adversary to commit its slot-$1$ withholding before any
on-chain evidence of~$\kappa$ accumulates: the adversary does not know
how many bundles are needed for reconstruction, and every withheld
bundle permanently shrinks the cartel's effective
fraction~\eqref{eq:beta_shrink}.  By the time the adversary could
narrow its belief about~$\kappa$ from observing continued dissemination
in slot~$2$, its slot-$1$ lanes are already excluded.  The maximal
withholding assumption ($W_1 = A_1$) in Section~\ref{sec:firstslot} is
therefore conservative: a rational adversary uncertain about~$\Delta$
may withhold strictly fewer bundles, since each unnecessary sacrifice
permanently burns a cartel lane for no MEV gain whenever~$\Delta$ turns
out to be large.  Quantifying the optimal Bayesian withholding strategy
under a prior over~$\kappa$ is a natural extension that could yield
tighter IC bounds.

Since every sufficient condition in Section~\ref{sec:incentives} is
monotone in the delay probability, replacing $q^{(w)}$ with the
ratchet corrected $q_{\mathrm{rat}}$ from~\eqref{eq:q_rat} or
Lemma~\ref{prop:ratchet_delay} yields a valid IC condition with a
strictly lower bounty threshold whenever $t^\star \ge 2$.
\section{Intra-slot decode races}
\label{sec:micro}

Under full inclusion, the coarse timing model
(Assumption~\ref{as:coarse}) eliminates multi-slot information leads
($q^{(1)} = 0$; \S\ref{sec:partial}).  The residual risk is a
within slot race where in the final dissemination slot~$t^\star$, the cartel may receive enough symbols to decode the payload before the slot is sealed and finalised.  This section bounds the probability that such a race succeeds, first from above (showing it is exponentially small in~$m$ when $\Delta > 0$) and then from below (showing that no incentive only mechanism can eliminate it).

Recall $t^\star = \lceil \kappa/m \rceil$ and
$\Delta = t^\star m - \kappa$.  Define
\[
  r \;:=\; m - \Delta \;=\; \kappa - (t^\star - 1)m,
\]
the number of bundles the cartel still needs at the start of
slot~$t^\star$.  Under full inclusion, exactly $(t^\star - 1)m$
admissible bundles have been finalised by the end of
slot~$t^\star - 1$, so~$r$ is deterministic.  Since at most $tm$
bundles exist after~$t$ slots, decoding is impossible before
slot~$t^\star$; the within slot race can only occur in the final slot.

During slot~$t^\star$ the sender disseminates $m$~bundles, of which
$A_{t^\star}$ are addressed to cartel lanes.  A within slot private
decode before sealing requires $A_{t^\star} \ge r$.  We abstract the
race dynamics into a \emph{conditional success function}
$\rho(a, r) \in [0, 1]$, representing the probability the cartel
decodes and acts before the slot is sealed, given that $a$~bundles
are addressed to cartel lanes and $r$~bundles are still needed.  Let
\[
  \bar\rho \;:=\; \sup_{a \ge r,\, r \ge 1} \rho(a, r)
\]
denote the worst case conditional success probability, taken over all
feasible values of cartel contacts and deficit.
Appendix~\ref{app:micro_pipeline} instantiates~$\rho$ via a concrete
sealing deadline pipeline that matches any slot based BFT design.

Define the discounted within slot MEV term under full inclusion as
\[
  G_{\mathrm{inc}} :=
  \gamma^{t^\star - 1}\,
  \Prb[\text{within slot MEV succeeds in slot } t^\star].
\]
Since success requires both feasibility ($A_{t^\star} \ge r$) and
conditional success (at most~$\bar\rho$), we obtain the following
bound.

\begin{lemma}[Within slot MEV upper bound]
\label{lem:within_slot_upper}
Under full inclusion,
\begin{equation}
\label{eq:Ginc_upper}
  G_{\mathrm{inc}}
  \;\le\;
  \bar\rho \cdot \gamma^{t^\star - 1}
  \cdot \Prb[A_{t^\star} \ge r].
\end{equation}
The hypergeometric tail is bounded by the KL
exponents~\eqref{eq:kl_tails} with $t = 1$: whenever $r/m > \beta$,
\[
  \Prb[A_{t^\star} \ge r]
  \;\le\;
  \exp\!\bigl\{-m\,\KL(r/m \| \beta)\bigr\}.
\]
On the knife edge ($\Delta = 0$, $r = m$) the bound tightens to the
exact probability
$\Prb[A_{t^\star} = m] =
\binom{\beta n}{m}\!/\binom{n}{m} \le \beta^m$.  For moderate~$m$ and
$\beta < 1/3$, within slot feasibility is therefore exponentially
unlikely.
\end{lemma}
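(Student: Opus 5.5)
The plan has three parts: first reduce $G_{\mathrm{inc}}$ to a feasibility probability times $\bar\rho$, then bound the tail with the KL inequality \eqref{eq:kl_tails}, and finally compute the knife-edge case exactly.

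\emph{Reduction.} Under full inclusion and Assumption~\ref{as:coarse}, the chain holds exactly $(t^\star-1)m$ finalised bundles at the start of slot~$t^\star$, so $r=m-\Delta$ is deterministic. A within-slot private decode therefore needs at least $r$ of the slot-$t^\star$ bundles to be in cartel hands before sealing. Let $E$ be the success event. Then $E\subseteq\{A_{t^\star}\ge r\}$, and
\[
\Prb[E]=\sum_{a\ge r}\Prb[A_{t^\star}=a]\,\rho(a,r)\le \bar\rho\,\Prb[A_{t^\star}\ge r].
\]
The inequality holds because $\rho(a,r)\le\bar\rho$ for every feasible pair. Here I use that $r\ge1$ (since $\Delta\le m-1$) and that $r$ is fixed, so conditioning on $A_{t^\star}=a$ is legitimate. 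Multiplying by $\gamma^{t^\star-1}$ gives \eqref{eq:Ginc_upper}.

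\emph{Tail bound.} $A_{t^\star}$ has the same law as $A_1\sim\mathrm{Hypergeom}(n,\beta n,m)$ because the draws are i.i.d. across slots. I apply \eqref{eq:kl_tails} with $t=1$ and $\theta=r/m>\beta$. If $r=m$, so $\theta=1$, the KL bound still holds in the limiting sense $\KL(1\|\beta)=\log(1/\beta)$. In that case, however, I will prove it directly rather than rely on $\theta\in(0,1)$.

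\emph{Knife edge.} When $\Delta=0$ we have $r=m$, and $\{A_{t^\star}\ge m\}=\{A_{t^\star}=m\}$ has probability $\binom{\beta n}{m}/\binom{n}{m}$. This equals
\[
\prod_{i=0}^{m-1}\frac{\beta n-i}{n-i},
\]
and each factor is at most $\beta$ because $(\beta n-i)\le\beta(n-i)$ for $i\ge0$. Hence the probability is at most $\beta^m$. For $\beta<1/3$ this decays geometrically in $m$, which gives the closing remark.

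I expect no serious obstacle. The only delicate points are the boundary case $\theta=1$ and making the conditioning in the first step explicit through the definition of $\rho$ as a conditional probability given $(a,r)$.
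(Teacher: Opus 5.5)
Your proposal is correct and follows the paper's argument: restrict to the feasibility event $A_{t^\star}\ge r$, bound the conditional success probability by $\bar\rho$, and apply the $t=1$ KL tail. Your product-form check that $\binom{\beta n}{m}/\binom{n}{m}\le\beta^m$, and your separate handling of the $\theta=1$ boundary, fill in details the paper leaves implicit.
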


\begin{proof}
On any sample path, the cartel can attempt a within slot decode only if
$A_{t^\star} \ge r$, and succeeds with probability at most~$\bar\rho$
conditional on feasibility.  The tail bound follows
from~\eqref{eq:kl_tails}.
\end{proof}

Lemma~\ref{lem:within_slot_upper} assumes a specific visibility
model ($V = A_{t^\star}$).  More generally, let $V$ be the total
number of useful bundles the cartel obtains strictly before sealing in
slot~$t^\star$; if the cartel has partial preseal visibility into
honest lanes via gossip, $V$ may exceed~$A_{t^\star}$.  Define the
feasible race floor
$\underline{\rho}(r) := \inf_{v \ge r} \rho(v, r)$.

\begin{theorem}[Within slot MEV floor under incentive only mechanisms]
\label{thm:within_slot_lower}
Fix full inclusion and consider any mechanism whose transfers are
measurable with respect to finalised on-chain history. Then
\[
  G_{\mathrm{inc}}
  \;\ge\;
  \gamma^{t^\star - 1}\,
  \underline{\rho}(r)\,\Prb[V \ge r].
\]
\end{theorem}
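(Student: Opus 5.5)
The approach is to condition on the preseal visibility count $V$ and then argue that a mechanism whose transfers depend only on finalised history cannot change anything that happens before sealing.

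\emph{Step 1: reduce to the final slot.} Under full inclusion, $T(1)=t^\star$ holds deterministically, and exactly $(t^\star-1)m$ bundles are finalised before slot $t^\star$. So the deficit $r=m-\Delta$ is deterministic. No decode, and hence no within slot MEV, is possible before slot $t^\star$. The discount factor $\gamma^{t^\star-1}$ therefore factors out of $G_{\mathrm{inc}}$, and it remains to lower bound $\Prb[\text{within slot MEV succeeds in slot } t^\star]$.

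\emph{Step 2: condition on visibility.} By the law of total probability,
\[
\Prb[\text{success}] \;\ge\; \sum_{v\ge r}\Prb[V=v]\,\rho(v,r).
\]
Here the identification $\Prb[\text{success}\mid V=v]=\rho(v,r)$ uses the definition of the conditional success function, extended from $A_{t^\star}$ to the general visibility count $V$. Each term satisfies $\rho(v,r)\ge\underline{\rho}(r)=\inf_{v'\ge r}\rho(v',r)$. Summing over $v\ge r$ gives $\underline{\rho}(r)\,\Prb[V\ge r]$. The terms with $v<r$ are nonnegative and are dropped.

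\emph{Step 3: invariance of the mechanism.} This step carries the content and is where I expect the main obstacle. I must show that $\rho$ and the law of $V$ are the same under every mechanism in the stated class. Any such mechanism has transfers measurable with respect to finalised on-chain history, so it acts only after slot $t^\star$ is sealed. The events that happen before sealing are the receipt of the $V$ bundles, the private decode, and the insertion of the front-run. The law of these events is fixed by the sender's contact process (hypergeometric $A_{t^\star}$ plus gossip) and by the race timing, and neither depends on the transfer rule. The cartel is already including everything under full inclusion, so no finalisation-measurable reward can remove the bundles it sees before sealing. Such a reward also cannot condition on the unobservable preseal decode.

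The subtle point is that the mechanism could still alter the cartel's \emph{incentive} to attempt the race, for example through a post-hoc penalty. I would handle this by reading $\rho$ as the physical success probability of a cartel that attempts the race. The bound is then a statement about feasibility, not about equilibrium choice. Alternatively, if the front-run itself is indistinguishable on-chain from ordinary ordering, I would note that the finalised history then carries no signal to punish. Once this invariance is settled, Steps 1 and 2 combine to give the stated floor.
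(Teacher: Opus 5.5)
Your proposal is correct and follows the paper's argument: condition on the pre-seal event $\{V \ge r\}$, bound the conditional success probability below by $\underline{\rho}(r)$, and note that a mechanism measurable with respect to finalised history cannot change whether that event occurs. Your Step 3 is more careful than the paper's two-line proof, which implicitly treats $\rho$ as the physical success probability of a cartel that attempts the race; this is exactly the feasibility reading you adopt to handle the incentive-to-attempt subtlety.
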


\begin{proof}
On the event $\{V \ge r\}$, the cartel has enough preseal information
to attempt decoding, and succeeds with probability at least
$\underline{\rho}(r)$.  A finalization based mechanism cannot
retroactively alter whether $V \ge r$ occurred, since that event is
determined by network propagation, sealing deadlines, and computation
latency before finalisation.
\end{proof}

Together, Lemma~\ref{lem:within_slot_upper} and Theorem~\ref{thm:within_slot_lower} delineate the limits of mechanism
design for the within slot race.  For $\Delta > 0$, within slot MEV
feasibility decays exponentially in~$m$.  However,
Theorem~\ref{thm:within_slot_lower} demonstrates that no
finalization based incentive mechanism can drive this residual to zero.
The pre-seal information event $\{V \ge r\}$ is driven purely by
network propagation and latency, not economics.

This impossibility is restricted to the incentive layer.
Protocol layer mechanisms that fundamentally alter the visibility~$V$
or the success function~$\rho$ can sidestep the bound.  For instance,
commit reveal schemes forcing a cartel to commit before learning
preseal symbols, or threshold encryption demanding a post finalization
key reconstruction round, actively deny the cartel actionable
information.  The lower bound thus marks a clean boundary and everything below this feasibility floor requires cryptographic or timing mechanisms rather than economic ones.

The practical implication is a separation of responsibilities.
Sedna's incentive layer (\textsf{PIVOT-$K$}, the adaptive sender, and
the coalition decomposition) addresses the multi-slot delay threat
analysed in Sections~\ref{sec:incentives}--\ref{sec:ratchet}.  The
within slot residual identified here requires protocol layer
mitigation with tight sealing deadlines, commit reveal schemes, or
threshold encryption. The two layers are complementary and, for
$\Delta > 0$, both risks decay exponentially in~$m$.

\section{Equilibrium viability}
\label{sec:viability}

Section~\ref{sec:incentives} showed \textsf{PIVOT-$K$} achieves IC for
the right parameters, Section~\ref{sec:ratchet} showed the adaptive
sender reduces the required bounty; and Section~\ref{sec:micro}
established that the residual within slot risk is exponentially small
but irreducible by incentives alone.  We now ask whether the three
equilibrium parameters that sustain full inclusion, the proposer fee
share~$\phi$, the decode threshold in bundles~$\kappa$, and the
sender posted bounty~$B$, can actually be set feasibly.

\paragraph{When fees alone suffice.}
Recall that $\phi \in [0,1]$ is the fraction of the per-bundle
bandwidth fee that accrues to the lane proposer, so
$f = \phi\,c\,L(s)$.  Many protocol designs burn a fraction of
transaction fees, leaving only this share for the proposer. Using the fee upper bound from Section~\ref{sec:fee_revenue}, a conservative sufficient condition, within the stationary benchmark of Section~\ref{sec:bounty_share}, for fees alone ($B = 0$) to deter full
withholding is
\label{prop:phi_threshold}%
\begin{equation}
\label{eq:phi_threshold}
  \phi
  \;\ge\;
  \phi^\star
  \;:=\;
  \frac{\alpha v\,\gamma^{t^\star}\,q^{(0)}\,(1 - \gamma)}
       {c\,L(s)\,\beta m\,(1 - \gamma^{t^\star})}.
\end{equation}
As $\phi \to 0$, bounties must carry the full incentive load.  Three
features make \textsf{PIVOT-$K$} bounties a more efficient deterrent
than fees.  First, fee revenue accrues on every included bundle,
including the $\Delta$~redundant ones that do not contribute to
decoding, so the sacrifice from withholding is spread thinly; the
bounty is concentrated on the $\kappa$~pivotal bundles, creating a
larger per-bundle opportunity cost precisely where it counts. Second,
fee tickets are slot-bound when a ticket not redeemed within its validity window expires, so the fee forfeited by withholding is a sunk loss that the proposer cannot recover in a later slot.  The bounty, by
contrast, is paid upon inclusion to whoever fills the pivotal slot; if
one lane withholds, the next admissible bundle from a different lane
claims the reward.  The bounty therefore creates a replacement threat
that fees lack: withholding does not merely forfeit revenue, it
transfers the reward to a competitor.  Third, for protocols with
aggressive fee burning ($\phi \ll 1$), the fee deterrent vanishes while
the bounty is unaffected by the burn rate, making bounties the natural
complement.

\paragraph{Diminishing returns from increasing $\kappa$.}
The sender chooses $\kappa$ (via the symbol count~$s$ and threshold~$K$)
to balance on-chain bandwidth cost against delay risk.  Under a static
sender, increasing $\kappa$ does not suppress the full withholding delay probability, it drives it toward~$1$.  The slack
$\Delta = t^\star m - \kappa$ is bounded by $m - 1$ regardless
of~$\kappa$, while the cumulative cartel contacts $S_{t^\star}$ grow
linearly in~$t^\star$.  The KL tail bounds~\eqref{eq:kl_tails} applied
to the lower tail show that the probability of the cartel contacts
$S_{t^\star}$ remaining at or below the slack~$\Delta$ decays
exponentially as the honest horizon grows:
\label{thm:logkappa_return}%
\begin{equation}
\label{eq:no_delay_upper}
  1 - q^{(0)}(\kappa)
  \;=\;
  \Prb[S_{t^\star} \le \Delta]
  \;\le\;
  \exp\!\left\{
    -t^\star m\,\KL\!\left(
      \frac{\Delta}{t^\star m}\middle\|\beta
    \right)
  \right\}.
\end{equation}
As $\kappa \to \infty$, $t^\star \to \infty$ while
$\Delta/(t^\star m) \to 0$, so the right hand side vanishes and
$q^{(0)}(\kappa) \to 1$. Increasing~$\kappa$ is self-defeating since
bandwidth cost grows linearly while the required bounty plateaus near
$\alpha v / \beta$.  Under the adaptive sender
(Section~\ref{sec:ratchet}), the effective cartel fraction shrinks
after each withholding slot, breaking the pessimistic limit.  The
ratchet makes moderate~$\kappa$ with $\Delta > 0$ the efficient
operating point.

\paragraph{Sender individual rationality.}
The bounty is paid by the transaction sender. Posting a bounty is
rational only if the resulting improvement in inclusion speed and MEV
protection outweighs the bounty cost.  We make this precise.  Under a
stationary cartel policy~$w$ and bounty~$B$, the sender's utility (net
of bandwidth fees) is
\[
  U(w, B)
  \;:=\;
  \E[\gamma^{T(w)}]\,v
  \;-\; B
  \;-\; \alpha v\,G(w).
\]
Let $w^\dagger$ be the cartel's best response at $B = 0$.  Posting a
bounty $B^\star$ that induces full inclusion is individually rational
iff the bounty does not exceed the sender's gain from eliminating delay
and MEV loss:
\label{prop:IR}%
\begin{equation}
\label{eq:IR}
  B^\star
  \;\le\;
  \underbrace{v\bigl(\gamma^{t^\star}
    - \E[\gamma^{T(w^\dagger)}]\bigr)}_{\text{delay cost eliminated}}
  \;+\;
  \underbrace{\alpha v\,G(w^\dagger)}_{\text{MEV loss eliminated}}.
\end{equation}
Higher MEV exposure (larger $\alpha v$) strictly increases willingness
to pay, so \textsf{PIVOT-$K$} naturally accommodates heterogeneous
senders: those whose payloads carry greater MEV risk rationally post
larger bounties.

Computing $w^\dagger$ exactly requires solving the cartel's dynamic
programme. A conservative sufficient condition replaces $w^\dagger$
with $w = 0$:
\begin{equation}
\label{eq:IR_worst}
  B^\star
  \;\le\;
  v\bigl(\gamma^{t^\star} - \E[\gamma^{T(0)}]\bigr)
  \;+\;
  \alpha v\,\gamma^{t^\star}\,q^{(0)}.
\end{equation}
Under the ratchet, $q_{\mathrm{rat}} < q^{(0)}$ lowers the required
$B^\star$, expanding the set of senders for whom posting a bounty is
individually rational.

\section{Evaluation}
\label{sec:numerics}

We instantiate the theory with $n = 100$ lanes, $m = 20$ contacts per
slot, cartel fraction $\beta = 0.2$, and normalised MEV exposure
$\alpha v / \beta = 500$ (so $\alpha v = 100$).  Throughout the
numerics we work in normalized fee units and then translate into USD
for concreteness.  We set $f = 1$ as the unit of proposer revenue per
included bundle and, for the last column of the tables, map this to
\$0.10.  Under the standing normalization $\alpha v/\beta = 500$ with
$\beta = 0.2$, the illustrative transaction has $\alpha v = 100$.
This is only a scale choice, every bounty number reported below scales
linearly with both $f$ and $\alpha v$.
\begin{table}[ht]
\centering
\caption{Derived quantities used in the numerical illustrations.}
\label{tab:numerics_notation}
\small
\begin{tabular}{@{}p{0.28\linewidth}p{0.66\linewidth}@{}}
\toprule
\textbf{Symbol} & \textbf{Meaning} \\
\midrule
$q_{\mathrm{micro}}$ & within-slot feasibility tail 
  $\Prb[A_{t^\star} \ge r]$ (Lemma~\ref{lem:within_slot_upper} 
  with $\bar\rho = 1$) \\
$B^\star_{\mathrm{static}}$ & conservative monolithic bounty proxy 
  $(\alpha v / \beta)\,q^{(0)}$, ignoring fees and discounting \\
$B^\star_{\mathrm{ratchet}}$ & ratchet-corrected bounty proxy 
  $(\alpha v / \beta)\,q_{\mathrm{rat}}$ \\
$B_{\mathrm{coal}}$ & coalition sufficient bounty 
  $\kappa\,\max(0,\;\alpha v\,\gamma^{t^\star} - (\Delta+1)f)$ 
  from~\eqref{eq:coalition_B} \\
\bottomrule
\end{tabular}
\end{table}
Throughout this section, we grant the cartel full knowledge of all
sender parameters ($m$, $s$, $\kappa$, and hence $\Delta$).  As noted
in Section~\ref{sec:instruments}, this is conservative: in practice
$\kappa$ is hidden until decode, and the cartel must commit its
withholding before learning the slack.  The numerical values therefore
represent worst case bounds.

Table~\ref{tab:main} reports exact hypergeometric values for selected
$\kappa$.  For the within slot column we report the feasibility tail
$q_{\mathrm{micro}} := \Prb[A_{t^\star}\ge r]$, which is the upper
bound from Lemma~\ref{lem:within_slot_upper} under the
worst-case assumption that a feasible race always succeeds
($\bar\rho = 1$, i.e.\ any cartel that obtains enough preseal bundles
decodes and acts before sealing with certainty).  The conservative
static bounty proxy is
\[
  B^\star_{\mathrm{static}}
  \;\approx\;
  \frac{\alpha v}{\beta}\,q^{(0)}
  \;=\;
  500\,q^{(0)},
\]
ignoring fee deterrence and discounting.

\begin{table}[ht]
\centering
\caption{Exact delay, ratchet, and within slot feasibility
  probabilities ($n{=}100$, $m{=}20$, $\beta{=}0.2$,
  $\alpha v{=}100$, unit $f{=}1 \leftrightarrow \$0.10$).
  $q_{\mathrm{micro}}$ is the within slot feasibility tail
  $\Prb[A_{t^\star}\ge r]$ under the pessimistic normalization
  $\bar\rho = 1$.  $B^\star_{\mathrm{static}}$ is the conservative
  monolithic bounty proxy $(\alpha v / \beta)\,q^{(0)}$.}
\label{tab:main}
\begin{tabular}{@{}rrrccccr@{}}
\toprule
$\kappa$ & $t^\star$ & $\Delta$ &
  $q^{(0)}$ &
  $q_{\mathrm{rat}}$ &
  $q_{\mathrm{micro}}$ &
  $B^\star_{\mathrm{static}}$ &
  $B^\star_{\mathrm{static}}$ (USD) \\
\midrule
10  & 1 & 10 & $8.0\!\times\!10^{-5}$ & $8.0\!\times\!10^{-5}$ &
  $6.5\!\times\!10^{-4}$ & $0.04$ & ${\sim}\$0$ \\
20  & 1 & 0  & $0.993$ & $0.993$ &
  $1.9\!\times\!10^{-21}$ & $497$ & $\$49.70$ \\
30  & 2 & 10 & $0.136$ & $8.0\!\times\!10^{-5}$ &
  $6.5\!\times\!10^{-4}$ & $68$ & $\$6.80$ \\
50  & 3 & 10 & $0.699$ & $8.0\!\times\!10^{-5}$ &
  $6.5\!\times\!10^{-4}$ & $350$ & $\$35.00$ \\
100 & 5 & 0  & $\approx 1$ & $0.993$ &
  $1.9\!\times\!10^{-21}$ & $500$ & $\$50.00$ \\
\bottomrule
\end{tabular}
\end{table}

For $t^\star = 1$, the ratchet provides no within transaction benefit
and both delay columns coincide.  For $t^\star \ge 2$ with
$\Delta > 0$, the ratchet replaces a $t^\star$-slot accumulation with a
single slot tail, $q_{\mathrm{rat}} = \Prb[A_1 > \Delta]$.  The
improvement is dramatic (e.g.\ $0.136 \to 8 \times 10^{-5}$ for
$\kappa = 30$). The knife edge row $\kappa = 100$ still shows
$q_{\mathrm{rat}} \approx 0.993$ because $\Delta = 0$ means a single
cartel contact in slot~$1$ suffices. However, unlike the static case,
the ratchet prevents this deficit from compounding into later slots,
every withheld lane is permanently excluded, shrinking the cartel's
effective fraction for the remainder of the transaction.  This
distinction matters only for $t^\star \ge 2$, and even then only if
$\Delta > 0$ in the residual slots; on a pure knife edge with
$\Delta = 0$ in every slot, the ratchet cannot recover.

Figure~\ref{fig:static_vs_ratchet} plots $q^{(0)}$ and
$q_{\mathrm{rat}}$ across $\kappa \in [1, 120]$.  The sawtooth pattern
of $q^{(0)}$ is the central operational insight: delay risk climbs
toward~$1$ as $\kappa$ increases within each period of~$m$ (slack
$\Delta$ shrinks), then drops sharply when $\kappa$ crosses a multiple
of~$m$ and a new slot opens (slack resets to $m{-}1$).  The ratchet
flattens the envelope for $t^\star \ge 2$, staying at the single slot
tail level except on knife edges.
Figure~\ref{fig:delay_vs_micro} shows the complementary risk where
$q_{\mathrm{micro}}$ rises exactly where $q^{(0)}$ falls, confirming
that no single $\kappa$ choice eliminates both multi-slot delay and
within slot race risk simultaneously.

\begin{figure}[ht]
\centering
\includegraphics[width=0.85\textwidth]{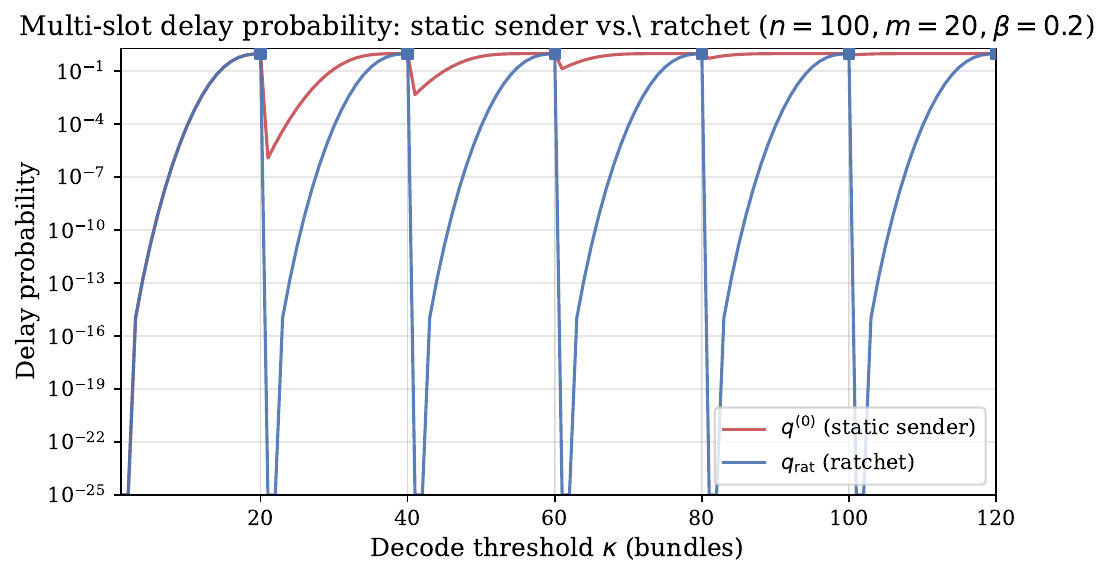}
\caption{Multi-slot delay probability under full withholding, static
  sender ($q^{(0)}$, red) vs.\ ratchet ($q_{\mathrm{rat}}$, blue).
  Dots mark knife edge instances ($m \mid \kappa$).  For
  $t^\star \ge 2$, the ratchet reduces the delay probability by many
  orders of magnitude except on knife edges ($\Delta = 0$) where a
  single cartel contact suffices.
  Parameters: $n = 100$, $m = 20$, $\beta = 0.2$.}
\label{fig:static_vs_ratchet}
\end{figure}

\begin{figure}[ht]
\centering
\includegraphics[width=0.85\textwidth]{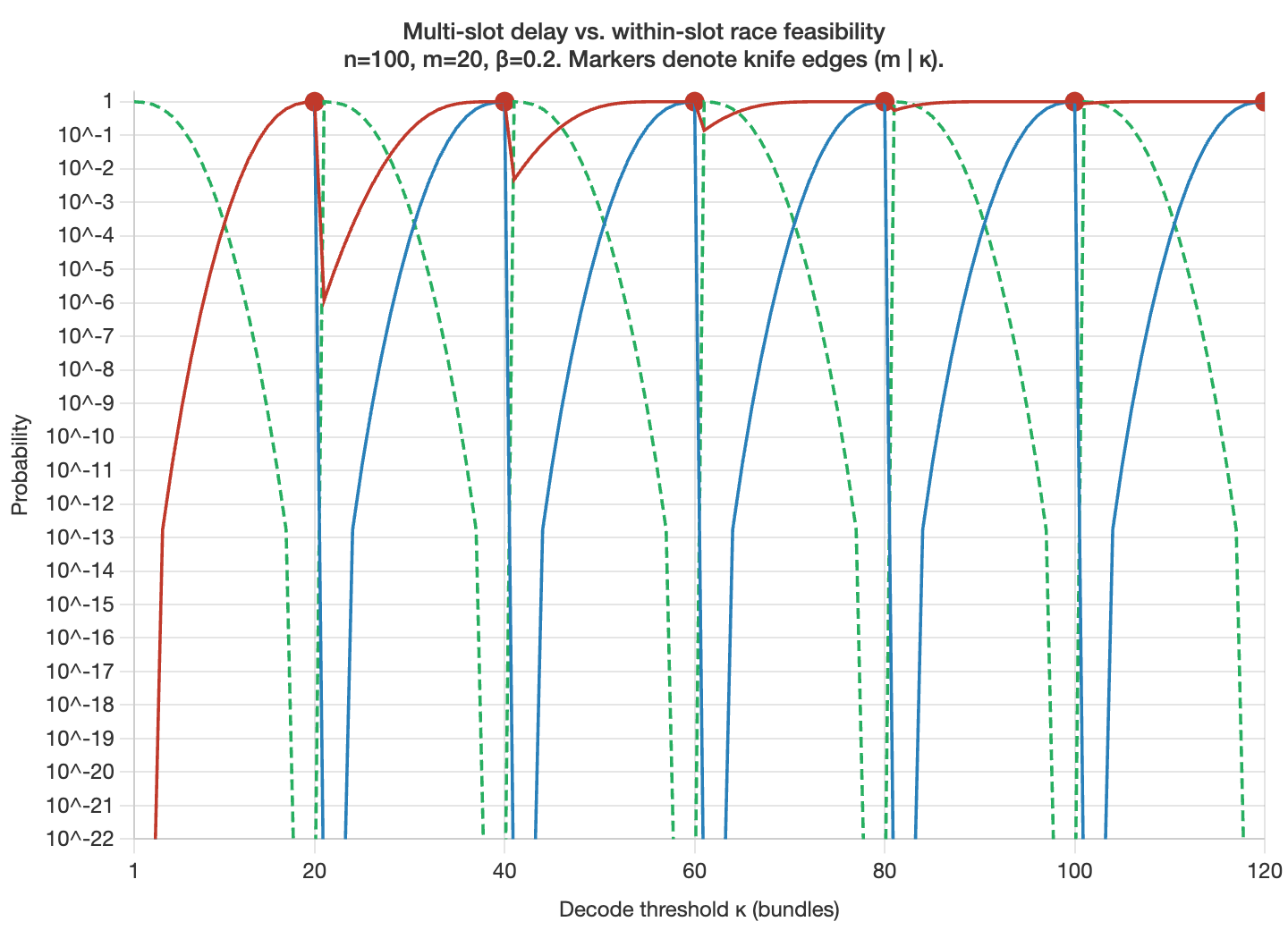}
\caption{Multi-slot delay risk under full withholding ($q^{(0)}$, red),
  ratchet-corrected delay ($q_{\mathrm{rat}}$, blue), and within-slot
  race feasibility ($q_{\mathrm{micro}}$, green dashed).  The three
  risks are complementary: $q^{(0)}$ peaks where $q_{\mathrm{micro}}$
  is negligible (knife edges, $\Delta = 0$) and vice versa.  The
  ratchet flattens the multi-slot delay envelope for $t^\star \ge 2$
  to the single-slot tail level, except on knife edges where it
  coincides with $q^{(0)}$.  No single $\kappa$ eliminates all three
  risks simultaneously; \textsf{PIVOT-$K$} and adaptive sending address
  the multi-slot component, while protocol-level timing addresses the
  within-slot component.
  Parameters: $n = 100$, $m = 20$, $\beta = 0.2$.}
\label{fig:delay_vs_micro}
\end{figure}
% \begin{figure}[ht]
% \centering
% \includegraphics[width=0.85\textwidth]{fig_delay_vs_micro}
% \caption{Multi-slot delay risk ($q^{(0)}$, red) vs.\ within slot
%   race feasibility ($q_{\mathrm{micro}}$, green) under a static
%   sender.  The two risks are complementary: $q^{(0)}$ peaks where
%   $q_{\mathrm{micro}}$ is negligible (knife edges, $\Delta = 0$)
%   and vice versa.  No single $\kappa$ eliminates both risks,
%   mechanism design (\textsf{PIVOT-$K$}) and adaptive sending address
%   the multi-slot component, while protocol level timing addresses
%   the within slot component.}
% \label{fig:delay_vs_micro}
% \end{figure}

\paragraph{Coalition decomposition.}
We now evaluate the coalition sufficient
condition~\eqref{eq:coalition_budget} from
Section~\ref{sec:coalition}.  Recall that for $\Delta > 0$, any
delay inducing coalition must withhold at least $\Delta + 1$ bundles,
and the aggregate direct revenue sacrifice must exceed the total
extractable MEV. Rearranging~\eqref{eq:coalition_B}, the coalition
sufficient bounty is
\[
  B_{\mathrm{coal}}
  \;:=\;
  \kappa\,\max\!\bigl(0,\;
    \alpha v\,\gamma^{t^\star} - (\Delta+1)f
  \bigr),
\]
where $\max(0, \cdot)$ reflects that no bounty is needed when fees
alone cover the deterrence. This evaluates to $880$ at
$\kappa = 10$, to $2{,}610$ at $\kappa = 30$, and to $4{,}302$ at
$\kappa = 50$.  In normalized fee units under
$f = 1 \leftrightarrow \$0.10$, these correspond to \$88, \$261, and
\$430.

These values are far above the monolithic static proxy
$B^\star_{\mathrm{static}} \approx (\alpha v/\beta)\,q^{(0)}$ from
Table~\ref{tab:main}. This is not a contradiction, the two quantities
answer different questions.  The monolithic proxy
$B^\star_{\mathrm{static}}$ is an expected value bound, it averages
over all sample paths, including the vast majority where the cartel is
not contacted enough times to mount the attack in the first place.
The coalition bound $B_{\mathrm{coal}}$ is a worst case, conditional
guarantee.  It must hold on the specific (rare) sample paths where at
least $\Delta + 1$ cartel lanes are contacted, ensuring that even on
those paths no internal sharing rule can sustain the coalition.  For
operating guidance we therefore use the monolithic static proxy and the
ratchet corrected proxy as the quantities that inform day to day
bounty pricing, and we treat $B_{\mathrm{coal}}$ as a conservative
stability certificate rather than a pricing rule.

Table~\ref{tab:coalition} reports the coalition decomposition.  The
column ``Unil.\ safe'' indicates whether a single withheld bundle is
strictly unprofitable (yes whenever $\Delta > 0$).  ``Equal-share''
reports the benchmark quantity $\alpha v \cdot \gamma^{t^\star}/(\Delta+1)$ obtained by splitting the transaction level MEV evenly across the minimal $\Delta+1$ withheld bundle opportunities required for delay. Under the static sender these opportunities need not correspond to $\Delta+1$ distinct
lanes, under the ratchet they do, because each withholding permanently
burns the corresponding lane. The column $B_{\mathrm{coal}}$ reports the coalition sufficient bounty defined above. $B^\star_{\mathrm{static}}$ is the monolithic static proxy from Table~\ref{tab:main}. For knife edge rows ($\Delta = 0$), the coalition decomposition does not apply (the attack is unilateral) and $B_{\mathrm{coal}}$ is not defined.

\begin{table}[ht]
\centering
\caption{Coalition decomposition
  ($n{=}100$, $m{=}20$, $\beta{=}0.2$, $\alpha v{=}100$,
   $\gamma{=}0.99$, $f{=}1$;
   unit: $f{=}1 \leftrightarrow \$0.10$).}
\label{tab:coalition}
\begin{tabular}{@{}rrrcrr@{}}
\toprule
$\kappa$ & $\Delta$ &
  Unil.\ safe &
  Equal share &
  $B_{\mathrm{coal}}$ &
  $B^\star_{\mathrm{static}}$ \\
\midrule
10  & 10 & yes & $9.0$  & $880$  & $0.04$ \\
20  & 0  & \textbf{no}  & $99.0$ & ---   & $497$ \\
30  & 10 & yes & $8.9$  & $2{,}610$ & $68$ \\
50  & 10 & yes & $8.8$  & $4{,}302$ & $350$ \\
100 & 0  & \textbf{no}  & $95.1$ & ---   & $500$ \\
\bottomrule
\end{tabular}
\end{table}

Two features are worth highlighting.  First, for every row with
$\Delta > 0$, no individual lane can profitably withhold alone,
regardless of the bounty level.  Second, the coalition bound
$B_{\mathrm{coal}}$ exceeds the monolithic proxy
$B^\star_{\mathrm{static}}$ in every $\Delta > 0$ row, reflecting the
gap between an expected value pricing rule and a worst case conditional
stability certificate.

Figure~\ref{fig:coalition_share} extends this comparison across
$\kappa \in [1, 120]$.  Panel~(a) plots the per-member MEV share
against the per-lane opportunity cost~$f$, with three regimes shaded:
fees alone break the coalition (blue), a moderate bounty suffices
(green), and knife edges where the monolithic IC binds (red).
Panel~(b) plots $B_{\mathrm{coal}}$ alongside
$B^\star_{\mathrm{static}}$.

\begin{figure}[ht]
\centering
\includegraphics[width=0.85\textwidth]{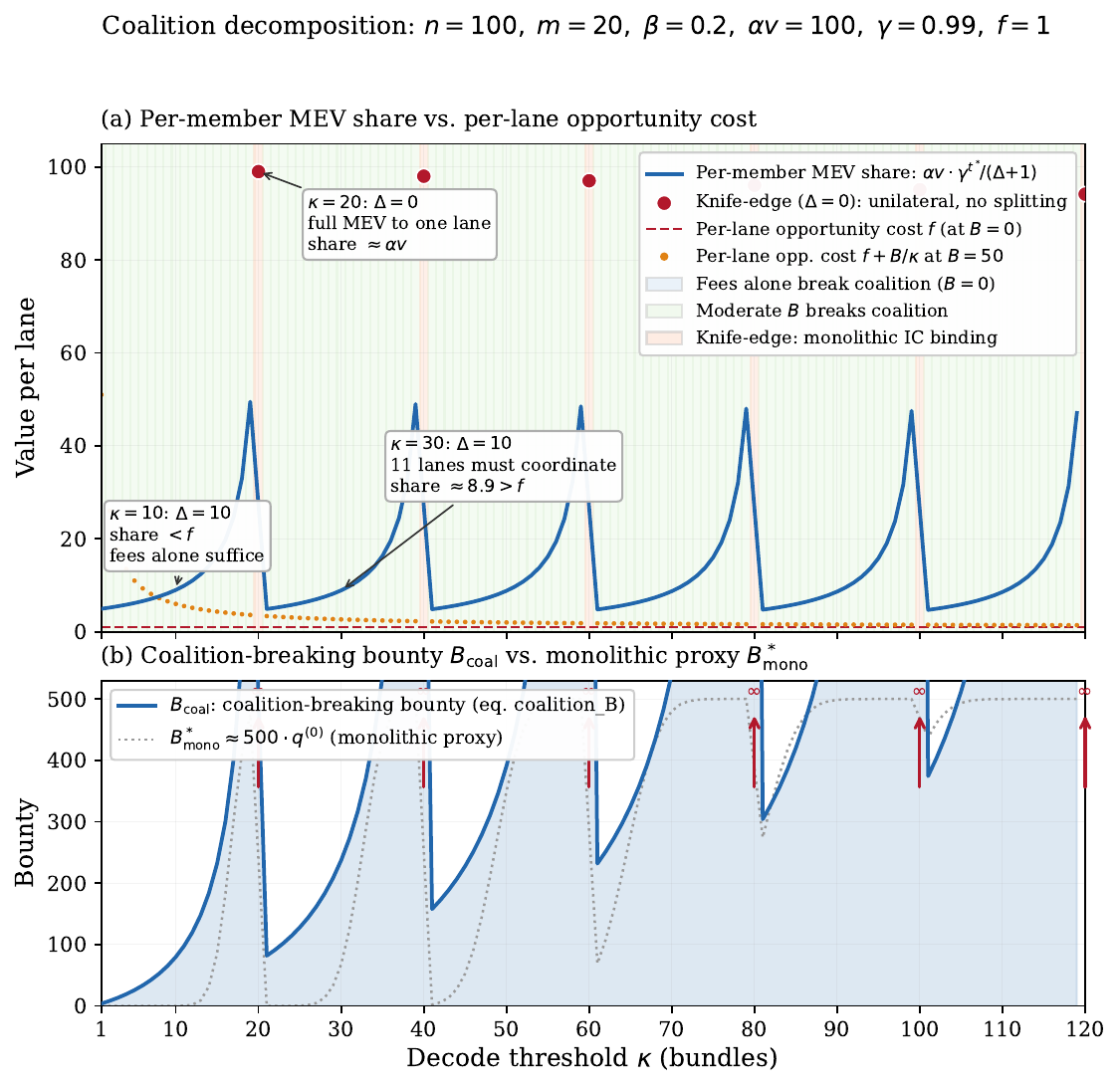}
\caption{Coalition decomposition across $\kappa$.
  Panel~(a): equal share benchmark
  $\alpha v \cdot \gamma^{t^\star}/(\Delta+1)$ (blue) versus
  per-lane opportunity cost~$f$ (red dashed).  Shaded regions:
  fees alone break the coalition (blue), moderate bounty suffices
  (green), monolithic IC binds (red).  Dots mark knife edges.
  Panel~(b): coalition bounty $B_{\mathrm{coal}}$ (solid
  blue) versus monolithic proxy (dotted grey). Red arrows mark
  knife edges where $B_{\mathrm{coal}}$ is undefined.
  Parameters: $n = 100$, $m = 20$, $\beta = 0.2$,
  $\alpha v = 100$, $\gamma = 0.99$, $f = 1$.}
\label{fig:coalition_share}
\end{figure}

The practical interpretation is layered.  For $\Delta > 0$, full inclusion is protected against unilateral withholding and, when \eqref{eq:coalition_budget} holds, against delay inducing coalition deviations.  For knife edge rows ($\Delta = 0$), the attack is unilateral, the full monolithic IC binds, and large bounties
(the knife edge bound~\eqref{eq:dynIC_delta0_B}) or parameter
adjustment to avoid $\Delta = 0$ remain necessary.  Under the ratchet
with $\kappa = 30$, each withholder permanently loses its lane
(Section~\ref{sec:adv_ratchet}), and the withholding equilibrium is
eliminated by fees alone.

\paragraph{Dynamic IC on the knife edge.}
For knife edge rows ($\Delta = 0$), the knife edge
bound~\eqref{eq:dynIC_delta0_B} yields a closed form sufficient
condition.  With $\gamma = 0.99$ and $f = 1$ ($\$0.10$),
equation~\eqref{eq:dynIC_delta0_B} gives $B \gtrsim 2{,}500$
($\$250$) for $\kappa = 20$ ($t^\star = 1$) and
$B \gtrsim 10{,}000$ ($\$1{,}000$) for $\kappa = 100$
($t^\star = 5$).  These large bounties reflect the severity of the
knife edge threat under the static sender.

Table~\ref{tab:cost} translates the bounty proxy into USD across three
MEV tiers, using the operating point $\kappa = 30$ ($t^\star = 2$,
$\Delta = 10$) where the ratchet is active.

\begin{table}[ht]
\centering
\caption{Illustrative sender bounty cost in USD ($\kappa = 30$,
$t^\star = 2$, $\Delta = 10$, $f = \$0.10$).}
\label{tab:cost}
\begin{tabular}{@{}lrrrr@{}}
\toprule
MEV tier & $\alpha v$ &
  $B^\star_{\mathrm{static}}$ &
  $B^\star_{\mathrm{ratchet}}$ &
  $B^\star_{\mathrm{ratchet}} / \alpha v$ \\
\midrule
Routine swap     & \$5    & \$3.40   & \$0.002  & $0.04\%$ \\
Sandwich / arb   & \$50   & \$34     & \$0.02   & $0.04\%$ \\
Liquidation      & \$5{,}000 & \$3{,}400 & \$2.00 & $0.04\%$ \\
\bottomrule
\end{tabular}
\end{table}

Under the ratchet, the required bounty is $0.04\%$ of the protected
MEV across all three tiers.  For the middle tier this is roughly
\$0.02 to protect a transaction with $\alpha v = \$50$; for the
largest tier it is roughly \$2 to protect $\alpha v = \$5{,}000$.  By
contrast, the static sender proxy at the same operating point is about
$68\%$ of $\alpha v$, which is why the adaptive sender is the
operationally decisive lever for making the bounty economically viable.

\paragraph{Operational guidance.}
For senders targeting $t^\star = 1$, the ratchet cannot help within
the transaction itself, so the right lever is to choose $\kappa < m$
and keep positive slack.  For senders with multi-slot decode horizons
($t^\star \ge 2$), the ratchet collapses the relevant delay event to a
single slot tail and drives the required bounty down by orders of
magnitude on every positive slack instance.  Knife edge instances
($m \mid \kappa$) remain the configurations to avoid, the attack is
unilateral, the knife edge IC bound governs, and either a large bounty
or a different parameter choice is needed.  For $t^\star \ge 2$,
Sedna's until decode privacy further strengthens this conclusion
because the adversary's first slot withholding decision is made before
later slot evidence reveals how much slack is actually available.

\section{Conclusion}
\label{sec:conclusion}

Sedna's coded multi-proposer dissemination improves goodput over
single-leader designs and provides a meaningful degree of privacy and
censorship resistance through its until decode hiding property. These
guarantees, however, are not absolute. Sedna alone does not fully
protect the sender against arbitrary cartel sizes, and the withholding
attack analysed in this paper shows that the gap is real. We show that a combination of three complementary design choices contains this threat under explicit, computable conditions.

\textsf{PIVOT-$K$} concentrates bounty reward on the pivotal bundles
that trigger decoding, raising the per-lane opportunity cost of
withholding and transferring the reward to a competitor whenever a
bundle is withheld.  The adaptive sender ratchet detects and
permanently excludes non-responsive lanes after each slot, collapsing
what would be a multi-slot accumulation attack into a single slot
deficit for any transaction with $t^\star \ge 2$.  The ratchet's power
is amplified by a structural informational asymmetry: the adversary
does not know the slack $\Delta$ (and hence cannot compute how many
lanes it must sacrifice) before committing to its first slot
withholding, because $\kappa$ is hidden behind a computationally hiding
commitment until decode.  This forces the cartel to gamble blindly, and
every miscalibrated sacrifice permanently shrinks its effective network
fraction.  The coalition decomposition then shows that for $\Delta > 0$,
unilateral withholding is already strictly unprofitable: any successful
delay requires coordinating at least $\Delta + 1$ withheld bundles, and
the aggregate direct revenue sacrifice of any such coalition exceeds the
total extractable MEV from this transaction.

Two residuals remain. Knife edge instances ($\Delta = 0$) admit
unilateral attacks and require large bounties or parameter adjustment,
avoiding $m \mid \kappa$ is the primary practical recommendation.
Within slot decode races are irreducible by any finalization based
incentive mechanism but exponentially unlikely in~$m$, the number of
contacted lanes, for $\Delta > 0$.

All bounds in this paper are derived under the conservative assumption
that the adversary knows $\Delta$ exactly.  In the illustrative
calibration of Section~\ref{sec:numerics}, the ratchet backed bounty
under this worst case is about $0.04\%$ of the protected MEV, roughly
\$2 for a transaction with $\alpha v = \$5{,}000$ and roughly \$200 for
a transaction with $\alpha v = \$500{,}000$.  Without this assumption,
the adversary must commit its withholding before learning the slack.
A cartel that overestimates $\Delta$ withholds too few bundles and
wastes its sacrifice; one that underestimates it burns lanes for zero
MEV gain.  Either way, the effective required bounty under genuine
uncertainty is strictly lower than the figures above. Quantifying the
optimal Bayesian withholding strategy under a prior over~$\kappa$ is a
natural extension that would tighten these bounds further, particularly
on the instances where the current analysis is most pessimistic.

Sedna's incentive layer, as analysed here, makes coded multi-proposer
dissemination economically robust at modest cost and, to our knowledge,
provides the first formal mechanism design treatment of incentive compatible coded dissemination in the MCP setting. For senders seeking even stronger protection, combining Sedna with a lightweight commit reveal scheme at the slot boundary~\cite{shutter} is expected to eliminate within slot decode races as well, closing the residual gap identified in Section~\ref{sec:micro} without the complexity or expense of full threshold encryption.

\section{Acknowledgments}
We would like to thank Paolo Serafino, Steven Landers, Alberto Sonnino, Lefteris Kokoris-Kogias, and Mahimna Kelkar for their conversations related to this work.

\printbibliography
    
\appendix
    
\section{Ceiling Effects, Rounding, and Overshoot}
\label{app:ceilings}
    
    The main text adopts the convention $K = \kappa s$ so that each of the first $\kappa$ included bundles contributes exactly $s$ pivotal indices. In practice, $K$ need not be a multiple of $s$, this appendix records the exact setting. Fix $K\ge 1$ and $s\ge 1$ and define
    \[
    \kappa := \left\lceil\frac{K}{s}\right\rceil,
    \qquad
    r := K-(\kappa-1)s \in \{1,2,\dots,s\}.
    \]
    Thus $K=(\kappa-1)s+r$, and decoding becomes possible after $\kappa$ included bundles, with the last bundle contributing only $r$ pivotal indices.
    
    \begin{lemma}[Pivotal indices with non-divisible $(K,s)$]
    \label{lem:pivot_nondv}
    Assume each bundle carries $s$ fresh indices (no repeats within or across bundles) and Sedna resolves indices deterministically. Then the pivotal index set $\mathcal{I}^\star$ consists of all $s$ indices from each of the first $\kappa-1$ included bundles, and exactly $r$ indices from the $\kappa$-th included bundle (the first $r$ indices of that bundle in the canonical scan).
    \end{lemma}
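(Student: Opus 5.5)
The plan is to make the canonical scan explicit and then count the distinct indices it accumulates. Sedna's deterministic resolution order, lexicographic in $(t,\ell,\hash(\ticket(b)))$, induces a total order on admissible included bundle occurrences. Within each bundle the $s$ indices are read in a fixed canonical order. Concatenating the bundles in this order gives one scan sequence of symbol indices, say $j_1, j_2, \dots$. By definition, $\mathcal{I}^\star$ is the set of the first $K$ \emph{distinct} indices encountered along this scan. Label the included bundles in scan order as $b^{(1)}, b^{(2)}, \dots$.

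The key step is that, under the freshness hypothesis, every index in the scan is new. No index repeats within a bundle or across bundles, so the scan contains no duplicates. The number of distinct indices after fully processing $b^{(1)},\dots,b^{(i)}$ is therefore exactly $is$. I would prove this by a one-line induction on $i$: each new bundle adds $s$ indices disjoint from all earlier ones. Consequently, the $K$-th distinct index is simply the $K$-th entry $j_K$ of the scan.

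Next I locate $j_K$. Since $K=(\kappa-1)s+r$ with $1\le r\le s$, we have
\[
(\kappa-1)s < K \le \kappa s.
\]
Hence the first $K$ scan positions consist of all $s$ indices from each of $b^{(1)},\dots,b^{(\kappa-1)}$, which fill positions $1$ through $(\kappa-1)s$. They are followed by positions $(\kappa-1)s+1$ through $(\kappa-1)s+r$, and these are exactly the first $r$ indices of $b^{(\kappa)}$ in canonical order. This identifies $\mathcal{I}^\star$ as claimed. It also shows that decoding (reaching $K$ distinct indices) first becomes possible with bundle $\kappa$ and not earlier, because $(\kappa-1)s<K$.

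The main obstacle is not combinatorial. It lies in pinning down that the "canonical scan" orders indices \emph{within} a bundle deterministically, and in checking that admissibility filtering does not reorder anything. I would handle this by noting two points. First, only admissible occurrences enter the scan, as required by the ticket rules. Second, non-admissible copies are ignored, so they cannot introduce duplicates that would shift positions. Given these, the counting argument above goes through unchanged.
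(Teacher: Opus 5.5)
Your proof is correct and takes the same route as the paper. The paper gives no separate proof of this lemma; it treats it as the direct extension of its main-text remark that fresh indices make the first $\kappa$ included bundles contribute exactly $\kappa s$ distinct indices, and your count via $(\kappa-1)s < K=(\kappa-1)s+r \le \kappa s$ makes that argument explicit (including your points on admissibility filtering and a fixed within-bundle order).
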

    
    \begin{corollary}[Per-bundle payments under \textsf{PIVOT-$K$} without divisibility]
    \label{cor:per_bundle_nondv}
    Under \textsf{PIVOT-$K$}, each pivotal index pays $B/K$. Hence each of the first $\kappa-1$ included bundles earns $\frac{s}{K}B$, while the $\kappa$-th included bundle earns $\frac{r}{K}B$. In particular, the total bounty paid is exactly $B$ and the ``uniform $B/\kappa$ per pivotal bundle'' view is recovered only when $r=s$.
    \end{corollary}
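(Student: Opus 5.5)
The plan is to derive the corollary directly from Lemma~\ref{lem:pivot_nondv} by counting pivotal indices per bundle and multiplying by the per-index payment $B/K$ fixed in the definition of \textsf{PIVOT-$K$} (Section~\ref{sec:mech}). First I would recall that the rule pays $B/K$ to $\lane(j)$ for each $j\in\mathcal{I}^\star$ and nothing outside $\mathcal{I}^\star$. A bundle's total payment is therefore $(B/K)$ times the number of its indices that lie in $\mathcal{I}^\star$. Bundles outside the first $\kappa$ included ones receive zero, because Lemma~\ref{lem:pivot_nondv} places every pivotal index in the first $\kappa$ included bundles.

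Next I would apply Lemma~\ref{lem:pivot_nondv} to get the counts. Each of the first $\kappa-1$ included bundles contributes all $s$ of its indices to $\mathcal{I}^\star$, so it earns $sB/K$. The $\kappa$-th bundle contributes exactly $r=K-(\kappa-1)s$ indices, so it earns $rB/K$.

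For the total, I would sum these payments: $(\kappa-1)s\cdot B/K + r\cdot B/K = \bigl((\kappa-1)s+r\bigr)B/K = B$, using the identity $K=(\kappa-1)s+r$ stated just before the lemma. This matches $|\mathcal{I}^\star|=K$, so the budget is exhausted exactly.

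For the last claim, I would note two cases. If $r=s$, then $K=\kappa s$ and every pivotal bundle earns $sB/(\kappa s)=B/\kappa$, which recovers the uniform view. If $r<s$, the $\kappa$-th bundle earns strictly less than the others, since $rB/K<sB/K$, so payments are not uniform. Consequently $B/\kappa$ per bundle holds if and only if $r=s$, because $r\in\{1,\dots,s\}$.

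I do not expect a real obstacle, since all the content sits in Lemma~\ref{lem:pivot_nondv}. The only point needing care is the claim that no non-pivotal bundle is paid, including the indices in the $\kappa$-th bundle beyond its first $r$. This follows from the ``no bounty outside $\mathcal{I}^\star$'' clause, together with the lemma's statement that only the first $r$ indices of that bundle are pivotal.
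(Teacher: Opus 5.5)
Your counting argument is what the paper intends. The paper states the corollary as immediate from Lemma~\ref{lem:pivot_nondv} and gives no separate proof. Each of the first $\kappa-1$ bundles has $s$ pivotal indices, the $\kappa$-th has $r$, and no other bundle has any. Each pivotal index is paid $B/K$, so the total is $\bigl((\kappa-1)s+r\bigr)B/K=B$. Those parts are fine.

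The step that fails as written is your ``only if'' direction. You argue that for $r<s$ the $\kappa$-th bundle earns strictly less than the others, but for $\kappa=1$ there are no others. Then $1\le K\le s$ and $r=K$, so the single pivotal bundle earns $rB/K=B=B/\kappa$ even when $K<s$ (e.g.\ $K=3$, $s=5$ gives $r=3\neq s$). So ``$B/\kappa$ per bundle iff $r=s$'' is false in that case, and trivially false when $B=0$. The corollary's last clause therefore needs a qualification, and you should state it explicitly. There are two ways to do this:
\begin{itemize}
\item Restrict to $\kappa\ge 2$ and $B>0$. There your argument is sound; in fact no pivotal bundle earns $B/\kappa$, because $K<\kappa s$ gives $sB/K>B/\kappa>rB/K$.
\item Take the ``uniform view'' to mean the main text's full description: each of the $\kappa$ pivotal bundles carries $s$ pivotal indices, each paid $B/(\kappa s)$. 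That holds iff $K=\kappa s$, i.e.\ iff $r=s$, for every $\kappa$.
\end{itemize}
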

    
    All martingale arguments used to approximate cartel bounty share extend by treating the payoff as a sum over pivotal indices  rather than pivotal bundles. Since only the final bundle is partially pivotal, the non-divisibility effect is bounded by one bundle worth of indices; in particular, any additive error term that scales as $O(m/\kappa)$ in the divisible case remains of order $O(m/\kappa)$ after rounding, with an additional $O(1/K)$ term coming from the final partial bundle.
    
    \section{Dynamic Withholding Policies}
    \label{app:dynamic}
The main text analyzes stationary partial withholding policies
parameterized by $w\in[0,1]$. A fully general deviation allows the
cartel to choose, in each slot, which of its contacted lanes will
include, possibly as a function of the full history and within slot
information. This appendix records the general policy notation and one
exact monotonicity fact used in the main text, under \textsf{PIVOT-$K$}, the cartel's pivotal bundle count is pathwise nondecreasing in inclusion. We do not claim a policy invariant closed form for the dynamic MEV term, the exact worst case delay benchmark needed in the main text is Theorem~\ref{thm:worst_w0_rigorous}.
    
    \subsection{General policy model}
    Fix a transaction and define per slot cartel contacts $A_t$ and honest contacts $H_t=m-A_t$ as in the main model. A general cartel policy $\pi$ chooses, in each slot $t$, a number $X_t^\pi\in\{0,1,\dots,A_t\}$ of cartel bundles to include on-chain. Let
    \[
    S_t := \sum_{u=1}^t A_u,
    \qquad
    U_t^\pi := \sum_{u=1}^t (H_u+X_u^\pi)= tm - S_t + \sum_{u=1}^t X_u^\pi.
    \]
    Define the on-chain inclusion time
    \[
    T^\pi := \inf\{t: U_t^\pi\ge \kappa\},
    \qquad
    t^\star := T(1)=\left\lceil\frac{\kappa}{m}\right\rceil,
    \qquad
    q^\pi := \Prb[T^\pi>t^\star].
    \]
    Under the corrected information set, the discounted multi-slot MEV term is
    \[
    G^\pi := \E\!\left[\gamma^{t^\star}\1\{T^\pi>t^\star\}\right]=\gamma^{t^\star}q^\pi.
    \]
    
    \subsection{Pathwise monotonicity of pivotal bundle count under \textsf{PIVOT-$K$}}
    \textsf{PIVOT-$K$} pays $B/\kappa$ to each of the first $\kappa$ included bundles. Thus the cartel bounty is proportional to the number of cartel bundles among the first $\kappa$ included bundles. Fix a realization of the lane contact process and a deterministic total order of included bundles. Any policy $\pi$ induces a length $\kappa$ prefix of included bundles at the moment decoding becomes possible.
    
    \begin{lemma}[Adding cartel bundles cannot reduce cartel pivotal bundle count]
    \label{lem:pathwise_mono_pivot}
    Fix a sample path and a deterministic ordering. Let $\pi$ and $\pi'$ be two policies such that, up to the first time $\kappa$ bundles are included under $\pi'$, policy $\pi'$ includes every cartel bundle that $\pi$ includes, and possibly additional cartel bundles (i.e.\ $\pi'$ only differs by inserting cartel bundles earlier). Let $J_\kappa(\pi)$ (resp.\ $J_\kappa(\pi')$) be the number of cartel bundles among the first $\kappa$ included bundles under $\pi$ (resp.\ $\pi'$).
    Then
    \[
    J_\kappa(\pi') \;\ge\; J_\kappa(\pi)\quad\text{(pathwise)}
    \]
    \end{lemma}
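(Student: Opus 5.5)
The plan is an injection/exchange argument on the deterministic total order of admissible bundle occurrences. Fix the sample path, and let $\mathcal{O}$ denote the totally ordered set of all potential bundle occurrences: honest contacts, which are always included, and cartel contacts, ordered lexicographically by $(t,\ell,\hash(\ticket(b)))$. A policy is identified with the subset of $\mathcal{O}$ it includes. Honest occurrences belong to every policy. By hypothesis, the cartel-included set under $\pi'$ contains that under $\pi$, at least up to the stopping point of $\pi'$. Write $I(\pi)\subseteq I(\pi')$ (restricted to the relevant prefix), and let $P(\pi)$ and $P(\pi')$ be the first $\kappa$ elements of $I(\pi)$ and $I(\pi')$ in the order. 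The goal is to compare the number of cartel elements in $P(\pi)$ and $P(\pi')$.

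The key step is to locate the cutoff positions. Let $x$ be the $\kappa$-th element of $I(\pi)$ and $x'$ the $\kappa$-th element of $I(\pi')$. Since $I(\pi')\supseteq I(\pi)$, there are at least as many included elements up to any point, so $x'\preceq x$. Consequently $P(\pi')=\{y\in I(\pi'): y\preceq x'\}$ and $P(\pi)=\{y\in I(\pi):y\preceq x\}$. The honest elements of $P(\pi')$ are exactly the honest elements $\preceq x'$, and these form a subset of the honest elements $\preceq x$, which are the honest elements of $P(\pi)$. Writing $h(\cdot)$ for the honest count and $J$ for the cartel count, we have $h(P(\pi'))\le h(P(\pi))$. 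Both prefixes have size exactly $\kappa$, so $J_\kappa(\pi')=\kappa-h(P(\pi'))\ge \kappa-h(P(\pi))=J_\kappa(\pi)$.

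The main obstacle is making the order-theoretic setup legitimate, and there are two issues. First, the prefix must be genuinely determined by a single global total order on occurrences that does not depend on the policy. The resolution order is fixed by on-chain data and sender randomness, and inclusion decisions only select which occurrences exist, so the order restricted to included occurrences is policy-independent. Ticket hashes are fixed by the sender regardless of inclusion, so this holds. Second, the hypothesis only asserts containment up to the first time $\kappa$ bundles are included under $\pi'$. This weaker containment suffices, because the argument only uses elements $\preceq x'$ together with the count comparison that gives $x'\preceq x$. That comparison needs $\pi$ to have no more included elements than $\pi'$ below $x'$, which is exactly the stated containment on that range.

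If $\pi$ never reaches $\kappa$ within the horizon considered, the prefixes are defined at the eventual decoding time, and the same argument applies with the full horizon. Under the divisible convention $K=\kappa s$ the bundle-level and index-level prefixes coincide. In the non-divisible case of Appendix~\ref{app:ceilings}, the same inequality holds for the weighted count because only the last bundle is partial and its weight is the same under both policies.
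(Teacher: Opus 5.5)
Your proof is correct, but it is organized differently from the paper's.

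The paper views $\pi$'s length-$\kappa$ prefix as an $\{H,C\}$ string and argues one insertion at a time. Each inserted $C$ pushes the last prefix element out, so $J_\kappa$ rises by one if that element was $H$ and is unchanged if it was $C$; iterating gives the claim.

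You instead make a single global comparison of cutoffs. Containment on $\{y\preceq x'\}$ forces $x'\preceq x$. Because honest occurrences are included under every policy, the honest content of a prefix is determined by its cutoff alone, so $h(P(\pi'))\le h(P(\pi))$, and taking complements gives the inequality.

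The paper's exchange picture is shorter and more intuitive. It tacitly inducts over insertions and is phrased relative to $\pi$'s prefix. Your version handles all insertions at once and uses only the weaker containment actually stated in the hypothesis. It also isolates the real mechanism: the conclusion needs only that $\pi'$ completes its $\kappa$-th inclusion no later in the resolution order than $\pi$.

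Your argument also works verbatim for every prefix length $k\le\kappa$. That gives a cleaner justification of your non-divisible remark than the one you wrote. The cartel's weighted pivotal count is $sJ_{\kappa-1}+r(J_\kappa-J_{\kappa-1})=(s-r)J_{\kappa-1}+rJ_\kappa$ with $s-r\ge 0$, so monotonicity of $J_{\kappa-1}$ and $J_\kappa$ suffices. By contrast, ``the last position has the same weight under both policies'' does not suffice on its own, since an insertion can push a cartel bundle from rank $\kappa-1$ (weight $s$) to rank $\kappa$ (weight $r$).
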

    
    \begin{proof}
    Consider the first $\kappa$ included bundles under $\pi$ as a length $\kappa$ sequence over $\{H,C\}$ (honest/cartel). Policy $\pi'$ inserts additional $C$ symbols earlier and then truncates to the first $\kappa$ elements. Each insertion either replaces an $H$ at the end of the prefix (increasing $J_\kappa$ by $1$) or replaces a $C$ (leaving $J_\kappa$ unchanged). Thus the cartel count in the prefix cannot decrease.
    \end{proof}
    
    Theorem~\ref{thm:worst_w0_rigorous} captures the corrected monotonicity geometry where delay risk is maximized by full withholding, while direct revenue terms (fees and \textsf{PIVOT-$K$}) are monotone in inclusion. Quantifying optimal dynamic deviations therefore reduces to balancing a delay induced MEV option against immediate fee and bounty revenue, rather than relying on a policy invariant $e^{-\kappa\KL(1/2\|\beta)}$ suppression. A complete dynamic incentive constraint still requires solving the cartel's MDP (or restricting the deviation class) to quantify the optimal tradeoff between a smaller direct revenue stream and a larger (but exponentially bounded) MEV option.
    
    \subsection{Minimal sabotage under general fee markets}
    \label{app:fee_market}
    
    Theorem~\ref{thm:dyn_best_binary} assumes static per-bundle payments
    (\S\ref{sec:model}), which guarantees nonnegative marginal direct
    payoff from including an additional admissible bundle.  This
    subsection records a counterexample showing the result can fail
    without that assumption, and states a sufficient condition for the
    general case.
    
    \begin{lemma}[Minimal sabotage can fail with negative net
      marginal inclusion payoff]
    \label{prop:fee_counterexample}
    There exist fee environments in which, even when the MEV term depends
    only on the binary delay event $\{T^\pi > t^\star\}$, an optimal
    deviation that achieves delay withholds strictly more than
    $\Delta + 1$ bundles.
    \end{lemma}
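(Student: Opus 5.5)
The statement is existential, so I would prove it by exhibiting one explicit fee environment, one sample path of contacts, and a direct payoff comparison. The key is to break the property that Theorem~\ref{thm:dyn_best_binary} relies on: that including one more admissible bundle has nonnegative marginal direct payoff. I will use a congestion or opportunity-cost environment in which the \emph{net} marginal payoff of including certain Sedna bundles is strictly negative. Concretely, each cartel lane has unit block capacity. If it includes the Sedna bundle it earns $f$ (plus any pivotal share), but it displaces an outside transaction paying priority fee $g>f+B/\kappa$. The MEV term is still the binary $\alpha v\,\gamma^{t^\star}\1\{T^\pi>t^\star\}$, exactly as in Appendix~\ref{app:dynamic}.

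The steps are as follows.
\begin{enumerate}
\item Choose the simplest instance: $t^\star=1$ and $\Delta=1$, say $m=3$ and $\kappa=2$. Fix a sample path with $A_1=3$, so all three contacts hit the cartel.
\item Note that delay requires $W_1\ge 2=\Delta+1$, by the accounting $U^\pi_{t^\star}=(\kappa+\Delta)-W^\pi_{t^\star}$ from Section~\ref{sec:dynamic_sabotage}.
\item Compare the minimal-sabotage policy (withhold $2$, include $1$) with full withholding (withhold $3$). Both achieve $T^\pi>t^\star$, so both collect the same MEV term.
\item Their direct payoffs differ only on the third bundle. Minimal sabotage includes it and gets $f+(\text{bounty share})-g$. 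Full withholding instead gets $0$ from that slot.
\item Under the inequality $g>f+B/\kappa$, this difference is strictly negative, using that the pivotal share of any single bundle is at most $B/\kappa$ under \textsf{PIVOT-$K$}. Hence full withholding, which withholds $3>\Delta+1$ bundles, strictly beats every policy withholding exactly $\Delta+1$.
\item Also rule out other delay-inducing policies. On this path they all withhold $2$ or $3$ bundles, so the optimum withholds strictly more than $\Delta+1$.
\end{enumerate}

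I will present this pathwise first. Then I would note that it lifts to expected payoffs by letting the environment assign displacement cost $g$ on every path. Every delay-inducing policy then strictly prefers to withhold any bundle beyond the required $\Delta+1$, and the contact law gives positive probability to $A_1>\Delta+1$.

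The main obstacle is bookkeeping the delayed inclusion. After delay, the transaction completes in slot $t^\star+1$, and the cartel may earn fees or bounty there. I have to make sure these continuation payoffs do not depend on which delay-inducing policy was used in slot $t^\star$. The simplest fix is to have the sender stop disseminating, or to give the environment the same displacement cost in later slots, so the continuation is identical across the compared policies. Alternatively, I will check directly that the bounty prefix in slot $t^\star+1$ is unaffected. By Lemma~\ref{lem:pathwise_mono_pivot}, the extra included bundle can only raise the cartel's pivotal count by at most one bundle's share, $B/\kappa$. That increment is already dominated by the assumption $g>f+B/\kappa$.
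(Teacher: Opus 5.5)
Your construction is correct and is essentially the paper's argument. A high-value outside transaction that a Sedna bundle would displace makes the net marginal inclusion payoff negative, so among delay-inducing policies the cartel strictly prefers to withhold as many bundles as it can. The paper states this only qualitatively on the knife edge $t^\star=1$, $\kappa=m$ ($\Delta=0$), with an unspecified large alternative value $R$. Your $\Delta=1$ instance, with the explicit threshold $g>f+B/\kappa$ and the check that the slot-$(t^\star+1)$ bounty prefix changes by at most one share, is a more careful version of the same idea.
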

    
    \begin{proof}
    Consider a single-slot instance with $t^\star = 1$, $\kappa = m$,
    so $\Delta = 0$ and delay requires withholding at least one bundle.
    Suppose the lane has capacity for $m$~bundles and faces an
    alternative transaction worth $R$ per bundle in the same slot,
    with~$R$ so large that including a Sedna bundle displaces an
    alternative bundle at strictly negative net marginal payoff.
    Conditional on achieving delay, the cartel strictly prefers to
    withhold as many Sedna bundles as possible to free capacity for the
    alternative transactions.
    \end{proof}
    
    A sufficient condition to restore minimal sabotage is that the net
    marginal payoff from including an additional admissible Sedna bundle
    is nonnegative after accounting for opportunity cost.
    
    \begin{assumption}[Nonnegative net marginal inclusion payoff]
    \label{as:net_nonneg}
    For every slot $t \le t^\star$ and every history consistent with the
    protocol, the lane's net marginal payoff from including one
    additional admissible bundle for this transaction is nonnegative
    (after accounting for displaced fee revenue, congestion pricing, or
    other fee-market externalities).
    \end{assumption}
    
    \begin{theorem}[Minimal sabotage under general fee markets]
    \label{thm:dyn_best_binary_general_fees}
    Under Assumption~\ref{as:net_nonneg} and \textsf{PIVOT-$K$}, the
    conclusion of Theorem~\ref{thm:dyn_best_binary} holds: conditional
    on achieving $T^\pi > t^\star$, an optimal policy withholds exactly
    $\Delta + 1$ cartel-addressed bundles and includes all others.
    \end{theorem}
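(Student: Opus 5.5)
The plan is to rerun the exchange argument behind Theorem~\ref{thm:dyn_best_binary}, replacing the static fee guarantee with Assumption~\ref{as:net_nonneg}. Fix a sample path of the lane contact process up to $t^\star$. Let $\pi$ be any policy with $T^\pi > t^\star$. By the accounting of Section~\ref{sec:dynamic_sabotage}, $U^\pi_{t^\star} = (\kappa+\Delta) - W^\pi_{t^\star}$, so delay forces $W^\pi_{t^\star} \ge \Delta+1$. The lower bound on withholdings is therefore free. The real task is to show that any policy withholding more than $\Delta+1$ bundles is weakly dominated by one withholding exactly $\Delta+1$.

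Given $\pi$ with $W^\pi_{t^\star} > \Delta+1$, I will build $\pi'$ by re-including one withheld cartel bundle. I would pick the \emph{latest} withheld bundle in the deterministic resolution order, and keep every other decision of $\pi$ fixed. Then $W^{\pi'}_{t^\star} = W^\pi_{t^\star} - 1 \ge \Delta+1$, so $T^{\pi'} > t^\star$ still holds. The MEV term $\alpha v\,\gamma^{t^\star}\1\{T > t^\star\}$ depends only on the binary delay event, so it is unchanged.

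The direct payoff changes in two ways:
\begin{itemize}[nosep]
\item \textbf{Bounty.} Lemma~\ref{lem:pathwise_mono_pivot} applies because $\pi'$ only inserts a cartel bundle relative to $\pi$. Hence $J_\kappa(\pi') \ge J_\kappa(\pi)$, and the \textsf{PIVOT-$K$} share does not fall.
\item \textbf{Fees and fee-market externalities.} Assumption~\ref{as:net_nonneg} says the lane's net marginal payoff from including one extra admissible bundle is nonnegative, for the history realised under $\pi$ at that slot. This covers displaced revenue and congestion effects.
\end{itemize}
So $\Pi(\pi') \ge \Pi(\pi)$. Iterating this step $W^\pi_{t^\star} - (\Delta+1)$ times produces a policy with exactly $\Delta+1$ withholdings and payoff at least $\Pi(\pi)$. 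Bundles after $t^\star$ are handled the same way: they do not affect the delay event, and under the assumption (extended to later slots) they are weakly included. Since the feasible policy set on a fixed path is finite, a maximiser exists, so an optimal policy with exactly $\Delta+1$ withholdings exists.

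The main obstacle is a subtlety in the timing of the discount. Re-including a bundle can change the inclusion time $T$, and with it the discount factor $\gamma^{T}$ on the bounty and on later fees. Here delay is preserved, so $T^{\pi'} > t^\star$ still holds, but $T^{\pi'}$ could be strictly earlier than $T^\pi$. An earlier $T$ only helps, since $\gamma^{T^{\pi'}} \ge \gamma^{T^\pi}$ multiplies a pivotal count that is itself weakly larger, so I would make this explicit.

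A second subtlety is that Assumption~\ref{as:net_nonneg} is stated per slot and per history. Re-including an early bundle changes later histories, which is exactly why I choose the latest withheld bundle: then no later decision's history is altered before the modification point. For the bounty I would invoke the pathwise lemma globally rather than slot by slot.
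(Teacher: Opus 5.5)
Your proposal is correct and follows the paper's proof. Delay forces at least $\Delta+1$ withholdings, every further inclusion is weakly fee-improving by Assumption~\ref{as:net_nonneg} and weakly bounty-improving by Lemma~\ref{lem:pathwise_mono_pivot}, and the MEV term depends only on the binary event $\{T^\pi>t^\star\}$; your one-bundle exchange, the check that an earlier $T^{\pi'}$ only raises the discount factor, and the existence-of-a-maximiser step just make rigorous what the paper's three-line proof leaves implicit (and, as you note, ``includes all others'' for slots after $t^\star$ needs the assumption beyond its stated range $t\le t^\star$, which the paper also glosses over).
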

    
    \begin{proof}
    Delay requires at least
    $\Delta + 1$ withholdings.  Under
    Assumption~\ref{as:net_nonneg}, each additional inclusion weakly
    increases fee payoff.  Under \textsf{PIVOT-$K$}, it cannot reduce
    the cartel's pivotal-bundle count
    (Lemma~\ref{lem:pathwise_mono_pivot}).  Withholding beyond
    $\Delta + 1$ does not increase the MEV term.  Hence an optimal
    policy withholds exactly $\Delta + 1$ bundles.
    \end{proof}
    
    Assumption~\ref{as:net_nonneg} holds if Sedna bundles occupy a
    reserved bandwidth slice (opportunity cost $\approx 0$), or if the
    sender's lane-bound ticket price dominates the lane's marginal
    opportunity cost.
    
    \subsection{Bandwidth knapsack}
    \label{app:knapsack}
    
    The cardinality budget $\sum_i x_i\le J$ is a clean toy model but often the wrong bottleneck. A more realistic coupling is a private capacity constraint that the cartel can withhold, transport, and process only so many bundles (or symbols) for private decoding per unit time. Let $c_i\ge 0$ be the cartel's private capacity cost of attacking transaction $i$ (e.g., the expected number of withheld bundles needed to ensure a delay event, which is at least $\Delta_i+1$ on knife edge instances), and let $\Delta_i$ be the cartel's incremental gain from attacking transaction $i$ relative to full inclusion, e.g.
    \[
    \Delta_i := \sup_{\pi_i\in\mathcal{A}_i}\Big(\Pi_i(\pi_i)-\Pi_i(\text{include})\Big),
    \]
    where $\mathcal{A}_i$ is the allowed attack set for transaction $i$. If the cartel has total capacity $C$, the selection problem becomes the $0$-$1$ knapsack:
    \[
    \max_{x\in\{0,1\}^N}\ \sum_{i=1}^N x_i\,\Delta_i
    \quad\text{s.t.}\quad
    \sum_{i=1}^N x_i\,c_i \le C.
    \]
    Unlike top $J$, the knapsack constraint can make the cartel prefer fewer large gain attacks or many small ones depending on the $\Delta_i/c_i$ ratios. Mechanisms like \textsf{PIVOT-$K$} reduce $\Delta_i$ transaction by transaction. Under capacity coupling, even a moderate reduction in $\Delta_i$ can push a transaction out of the optimal attack set because it changes the knapsack ranking. This is the setting where ``cross transaction deterrence'' is economically meaningful.
    
    \section{Bayesian Bounty Choice}
    \label{app:bayes}
    
    Users may not know the effective cartel fraction $\beta$ (or the mapping from early decode to realized MEV loss).
    A simple abstraction treats the withholding proof threshold bounty as a random variable induced by uncertainty.
    
    \begin{assumption}[Threshold response under uncertainty]
    \label{as:threshold_uncertainty}
    Given environment parameter $\omega$ (e.g.\ $\beta$, latency parameters, fee market conditions), there exists a withholding proof threshold $B^\star(\omega)$ such that the cartel includes iff $B\ge B^\star(\omega)$. The user has a prior $\omega\sim\mathcal{P}$ and hence an induced distribution for $B^\star$ with CDF $F$.
    \end{assumption}
    
    Let $U_{\mathrm{inc}}$ denote the user's expected utility if inclusion is induced (net of bandwidth fees, excluding the bounty transfer), and $U_{\mathrm{wh}}$ the user's expected utility under withholding.
    
    \begin{lemma}[Optimal posted bounty under a prior over $B^\star$]
    \label{prop:bayes_appendix}
    Under Assumption~\ref{as:threshold_uncertainty}, the user's expected utility from posting bounty $B$ is
    \[
    \mathcal{U}(B)
    =
    F(B)\cdot (U_{\mathrm{inc}}-B) + (1-F(B))\cdot U_{\mathrm{wh}}.
    \]
    If $F$ has density $f$ and an interior optimum exists, it satisfies the first order condition
    \[
    f(B)\cdot (U_{\mathrm{inc}}-U_{\mathrm{wh}}-B) \;=\; F(B)
    \]
    \end{lemma}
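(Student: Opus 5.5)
The plan is a direct case split on the cartel's threshold response, followed by differentiation. Under Assumption~\ref{as:threshold_uncertainty}, fix a posted bounty $B$. For each realization of $\omega$, the cartel's behaviour is determined by comparing $B$ with $B^\star(\omega)$. On the event $\{B^\star(\omega)\le B\}$ the cartel includes, and the user obtains $U_{\mathrm{inc}}-B$. Here the bounty is actually transferred, because \textsf{PIVOT-$K$} pays $B$ at decode and decode occurs under inclusion. On the complementary event $\{B^\star(\omega)>B\}$ the cartel withholds and the user obtains $U_{\mathrm{wh}}$.

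I will treat the bounty as \emph{not} paid on the withholding branch. The escrow is refunded (or simply not attributed) when the withholding-proof condition fails, so $B$ appears only in the inclusion term. I would state this as the modelling convention being invoked. Taking expectations over $\omega\sim\mathcal{P}$ then gives $\mathcal{U}(B)=F(B)(U_{\mathrm{inc}}-B)+(1-F(B))U_{\mathrm{wh}}$ directly, since $\Prb[B^\star\le B]=F(B)$ by definition of the CDF. I also take $U_{\mathrm{inc}}$ and $U_{\mathrm{wh}}$ as constants not depending on $\omega$, as the statement implicitly does. If they did depend on $\omega$, one would have to replace them by the conditional expectations given the event.

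For the first-order condition, I would assume $F$ is differentiable with density $f$. Here $f$ is the density of $B^\star$, not the per-bundle fee; I would note the clash of notation in passing. Differentiating gives
\[
\mathcal{U}'(B)=f(B)(U_{\mathrm{inc}}-B)-F(B)-f(B)U_{\mathrm{wh}} .
\]
At an interior maximizer $\mathcal{U}'(B)=0$, which rearranges to $f(B)(U_{\mathrm{inc}}-U_{\mathrm{wh}}-B)=F(B)$.

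The main obstacle is not computational but the careful statement of the payoff convention and of what ``interior optimum'' means. One needs $B$ in the interior of the support, so that the derivative exists and Fermat's theorem applies. Ties at $B=B^\star$ must also be harmless; they are, because they have probability zero when $F$ has a density. I would remark that the first-order condition is only necessary. Sufficiency would need, e.g., a monotone hazard-type condition on $F$, which the statement does not claim.
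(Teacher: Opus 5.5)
Your proposal is correct and follows essentially the same route as the paper: split on whether $B \ge B^\star$, take expectations using $F(B)=\Prb[B^\star\le B]$, then differentiate and set the derivative to zero at an interior optimum. Your extra remarks are sound clarifications of conventions the paper leaves implicit: that $B$ is charged only on the inclusion branch, that $U_{\mathrm{inc}},U_{\mathrm{wh}}$ do not depend on $\omega$, and that the first-order condition is only necessary.
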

    \begin{proof}
From the user's perspective, the threshold $B^\star$ is a random variable with CDF $F$. If the user posts a bounty $B \ge B^\star$, the cartel is incentivized to choose full inclusion. In this event, the user pays the bounty $B$ and receives utility $U_{\mathrm{inc}} - B$. Conversely, if $B < B^\star$, the cartel withholds, and the user receives the default delayed utility $U_{\mathrm{wh}}$. 

Taking the expectation over the distribution of $B^\star$ yields the user's expected utility:
\[
\mathcal{U}(B) = \Prb(B \ge B^\star)(U_{\mathrm{inc}} - B) + \Prb(B < B^\star)U_{\mathrm{wh}} = F(B)(U_{\mathrm{inc}} - B) + (1-F(B))U_{\mathrm{wh}}.
\]
Assuming the distribution $F$ is continuous and differentiable with density $f$, we find the optimal posted bounty by differentiating $\mathcal{U}(B)$ with respect to $B$:
\[
\frac{d\mathcal{U}}{dB} = f(B)(U_{\mathrm{inc}} - B) - F(B) - f(B)U_{\mathrm{wh}}.
\]
Setting the derivative to zero for an interior maximum and rearranging the terms yields the first-order condition:
\[
f(B)(U_{\mathrm{inc}} - U_{\mathrm{wh}} - B) \;=\; F(B).
\]
\end{proof}
    
    \section{Adaptive sender policies and recovery}
    \label{sec:adaptive_sender}
    The main model fixes $(m,s)$ across slots and samples lanes uniformly each slot. Operationally, a sender can adapt based on observed on-chain accumulation, e.g.\ by increasing $m_t$, changing $s_t$, or biasing lane sampling away from suspected withholders. We record the minimal extension needed to preserve the delay geometry, a full equilibrium analysis with adaptive senders is left for future work. Fix a deterministic schedule $\{m_t\}_{t\ge 1}$ with $m_t\in\{0,1,\dots,n\}$ and keep $s$ fixed so that $\kappa=\lceil K/s\rceil$ bundles are needed. In slot $t$, the sender samples $m_t$ lanes uniformly without replacement. Let
    \[
    A_t\sim \mathrm{Hypergeom}(n,\beta n,m_t),\qquad
    S_t:=\sum_{u=1}^t A_u,\qquad
    M_t:=\sum_{u=1}^t m_u.
    \]
    Under $w$ withholding, the on-chain bundle count is
    \[
    U_t(w)=M_t-(1-w)S_t.
    \]
    Define the honest horizon and slack by
    \[
    t^\star := \inf\{t:M_t\ge \kappa\},
    \qquad
    \Delta := M_{t^\star}-\kappa \in\{0,1,\dots,m_{t^\star}-1\}.
    \]
    
    \begin{lemma}[Delay event under a time varying contact schedule]
    \label{lem:threshold_w_schedule}
    For any $w\in[0,1]$,
    \[
    \Prb[T(w)>t^\star]
    =
    \Prb\!\left[U_{t^\star}(w)<\kappa\right]
    =
    \Prb\!\left[(1-w)S_{t^\star}>\Delta\right].
    \]
    \end{lemma}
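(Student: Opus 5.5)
The plan is to mirror the proof of Lemma~\ref{lem:threshold_w}, replacing the constant contact rate $m$ with the cumulative schedule $M_t$. First I will establish the first equality, $\{T(w)>t^\star\}=\{U_{t^\star}(w)<\kappa\}$. Since $T(w)=\inf\{t:U_t(w)\ge\kappa\}$, the direction $\{T(w)>t^\star\}\subseteq\{U_{t^\star}(w)<\kappa\}$ is immediate.

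For the reverse inclusion I need that $U_t(w)$ is nondecreasing in $t$. This holds because each increment is
\[
U_t(w)-U_{t-1}(w)=m_t-(1-w)A_t=H_t+wA_t\ge 0,
\]
using $A_t\le m_t$ (the hypergeometric draw with $m_t$ trials cannot exceed $m_t$) and $w\in[0,1]$. Hence if $U_{t^\star}(w)<\kappa$, then $U_t(w)<\kappa$ for every $t\le t^\star$, so $T(w)>t^\star$. I will state this monotonicity explicitly, since it is the only point where the time-varying schedule matters. For the first-hitting-time characterization to be exact I also need the case $m_t=0$ handled, and it is: zero increments preserve monotonicity.

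Second, I will substitute the definition of the slack, $M_{t^\star}=\kappa+\Delta$, into $U_{t^\star}(w)=M_{t^\star}-(1-w)S_{t^\star}$. This gives
\[
U_{t^\star}(w)<\kappa\iff \kappa+\Delta-(1-w)S_{t^\star}<\kappa\iff (1-w)S_{t^\star}>\Delta.
\]
Because these are equalities of events, taking probabilities yields the chain of equalities in the statement.

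I also want to check the range claim $\Delta\in\{0,\dots,m_{t^\star}-1\}$, which is implicit in the definitions but worth confirming. By the minimality of $t^\star$ we have $M_{t^\star-1}<\kappa\le M_{t^\star}$, so $0\le\Delta<m_{t^\star}$. This requires some $t$ with $M_t\ge\kappa$ to exist, so I will assume the schedule eventually contacts enough lanes; otherwise $t^\star=\infty$ and the statement is vacuous.

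I expect no real obstacle here. The only subtle step is the monotonicity of $U_t(w)$, which converts a hitting-time event into a fixed-time event. Everything else is the same algebra as in Lemma~\ref{lem:threshold_w}.
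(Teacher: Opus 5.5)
Your proposal is correct and takes the same route as the paper. The appendix lemma is stated without proof, but it is the direct analogue of Lemma~\ref{lem:threshold_w}: identify the delay event with $U_{t^\star}(w)<\kappa$ and substitute $M_{t^\star}=\kappa+\Delta$ into $U_{t^\star}(w)=M_{t^\star}-(1-w)S_{t^\star}$; your explicit monotonicity check $U_t(w)-U_{t-1}(w)=m_t-(1-w)A_t\ge 0$ makes rigorous the step the paper passes over with ``by definition.''
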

    
    Unlike the fixed $m$ design, the sender can enforce $\Delta>0$ even at small $t^\star$ by over contacting in the last pre horizon slot. For example, if targeting $t^\star=1$, choosing $\kappa<m_1$ creates slack $\Delta=m_1-\kappa$ and strictly reduces the full withholding delay probability $q^{(0)}=\Prb[S_1>\Delta]$. If the sender forms an effective per slot cartel hit rate $\beta_t$ (e.g.\ from long run lane behavior), then all KL type tail bounds extend by replacing $\beta$ with an upper envelope $\sup_t \beta_t$.

    \subsection{Honest redemption failures and graceful degradation}
    \label{app:honest_miss}
    
    The main text assumes perfect honest redemption: every honest lane
    that receives a valid ticket includes the corresponding bundle.  In
    practice, honest lanes may fail to redeem due to network jitter or
    downtime.  This subsection models such failures and shows that the
    ratchet's delay bound degrades gracefully.
    
    Binary exclusion is best viewed as sender-side routing: a lane that
    fails to redeem its ticket is deemed unreliable for this transaction
    and excluded from future contacts for this~$\id$.  This does not
    require protocol-level slashing or perfect attribution of malice.
    
    \begin{assumption}[Honest ticket redemption reliability]
    \label{as:honest_fail}
    When an honest lane is contacted with a valid lane-bound ticket for
    this transaction, it redeems (includes an admissible occurrence)
    with probability at least $1 - \varepsilon$, independently across
    contacts, for some $\varepsilon \in [0,1)$.
    \end{assumption}
    
    Fix a horizon $t^\star$ and planned total contacts
    \[
      M_{t^\star} := m + \sum_{t=2}^{t^\star} m_t,
      \qquad
      \Delta_{\mathrm{rec}} := M_{t^\star} - \kappa.
    \]
    Let $W$ be the total number of strategic cartel non-redemptions up
    to time~$t^\star$ for this transaction, and let $E$ be the total
    number of honest non-redemptions among all honest contacts up
    to~$t^\star$.
    
    \begin{lemma}[Delay bound with honest misses]
    \label{prop:ratchet_epsilon}
    Under Assumption~\ref{as:honest_fail}, for any adversary strategy
    and any sender schedule $\{m_t\}$, the delay probability satisfies
    \[
      \Prb[T > t^\star]
      \;\le\;
      \Prb\bigl[W + E > \Delta_{\mathrm{rec}}\bigr].
    \]
    Moreover, $E$ is stochastically dominated by a binomial:
    $\Prb[E \ge x]
    \le \Prb[\mathrm{Bin}(M_{t^\star}, \varepsilon) \ge x]$
    for every integer $x \ge 0$, and therefore
    \[
      \Prb[T > t^\star]
      \;\le\;
      \Prb\!\left[\,
        W + \mathrm{Bin}(M_{t^\star}, \varepsilon)
        > \Delta_{\mathrm{rec}}
      \,\right].
    \]
    \end{lemma}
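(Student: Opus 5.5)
The plan is a pathwise accounting argument followed by a stochastic-domination step for $E$. I would work on the event of planned contacts under the sender schedule: by time $t^\star$ the sender issues exactly $M_{t^\star}$ lane-bound tickets. Each issued ticket either is redeemed (one admissible bundle counted on-chain) or is not. A non-redemption is either a strategic cartel non-redemption, counted in $W$, or an honest miss, counted in $E$. The ratchet only changes \emph{which} lanes are contacted, not how many, so I would treat the schedule $\{m_t\}$ as fixed, which is the quantity the statement refers to. Every contact is then attributable to exactly one of the two classes, and we get the pathwise identity
\[
U_{t^\star} = M_{t^\star} - W - E .
\]
This is the same accounting as in Lemma~\ref{lem:threshold_w} and Lemma~\ref{lem:threshold_w_schedule}. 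Delay $T>t^\star$ holds iff $U_{t^\star}<\kappa$, i.e.\ iff $W+E>\Delta_{\mathrm{rec}}$. This gives the first inequality, in fact as an equality of events, hence $\le$ in probability. If the pool is exhausted, $U_{t^\star}$ can only be smaller, but the planned-contact convention makes the identity hold.

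For the domination $E\preceq \mathrm{Bin}(M_{t^\star},\varepsilon)$, I would build a coupling. Enumerate all $M_{t^\star}$ contacts in chronological order, and attach to the $k$-th contact an independent uniform $V_k$. The $k$-th contact is declared a miss only if it is honest and $V_k<\varepsilon_k$, where $\varepsilon_k\le\varepsilon$ is its actual conditional miss probability given the history. Assumption~\ref{as:honest_fail} guarantees such a representation, since the miss probability is at most $\varepsilon$ independently across contacts. Then $E\le \sum_k \1\{V_k<\varepsilon\}\sim\mathrm{Bin}(M_{t^\star},\varepsilon)$ pathwise. This holds even though the number of honest contacts is random and adaptively determined by the ratchet and the adversary, because non-honest contacts simply contribute $0$.

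To combine the two pieces, note that $W$ is determined by the adversary's strategy and may depend on $E$ through the history. On the coupled space I have $W+E\le W+\mathrm{Bin}$ pathwise, where the binomial is $\sum_k\1\{V_k<\varepsilon\}$. So
\[
\Prb[W+E>\Delta_{\mathrm{rec}}]\le \Prb[W+\mathrm{Bin}(M_{t^\star},\varepsilon)>\Delta_{\mathrm{rec}}],
\]
with the joint law taken from the coupling. Setting $\varepsilon=0$ makes the binomial vanish and recovers the perfect-detection bound.

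The main obstacle is this adaptivity. The adversary's $W$ and the ratchet's lane choices react to earlier honest misses, which would be a problem if the bound were read with $W$ independent of the binomial. I would therefore state the final bound on the coupled probability space, or alternatively assume $W$ is a fixed adversary budget, so that the uniform coupling via the $V_k$ handles the dependence cleanly.
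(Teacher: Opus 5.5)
Your proof is correct and follows the paper's route: the same pathwise accounting $U_{t^\star}\ge M_{t^\star}-(W+E)$ gives the first bound, and $E$ is then dominated by $M_{t^\star}$ Bernoulli$(\varepsilon)$ trials. Your explicit uniform coupling sharpens the paper's one-line domination argument. The paper goes from the marginal bound $\Prb[E\ge x]\le\Prb[\mathrm{Bin}(M_{t^\star},\varepsilon)\ge x]$ to the bound with $W$ added by a bare ``therefore'', and that step needs a joint construction like yours once $W$ can react to earlier honest misses (for the specialisation $W=A_1$, your coupled binomial is independent of $A_1$, so the paper's displayed bound holds as written).
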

    
    \begin{proof}
    By time~$t^\star$, at least $M_{t^\star} - (W + E)$ admissible
    bundles are finalised.  Inclusion fails iff
    $M_{t^\star} - (W + E) < \kappa$, i.e.\
    $W + E > \Delta_{\mathrm{rec}}$.  For the binomial domination, $E$
    is a sum of failure indicators with per-contact probability at
    most~$\varepsilon$, and binomial upper tails are increasing in the
    number of trials.
    \end{proof}
    
    Under the one-shot deviation model of \S\ref{sec:firstslot} with
    maximal first-slot withholding $W = A_1$, the bound specialises to
    \[
      \Prb[T > t^\star]
      \;\le\;
      \Prb\!\left[\,
        A_1 + \mathrm{Bin}(M_{t^\star}, \varepsilon)
        > \Delta_{\mathrm{rec}}
      \,\right],
    \]
    which reduces to the perfect detection bound of
    \eqref{eq:q_rat} when $\varepsilon = 0$.

    \section{Sealing deadline instantiation of $\rho$}
    \label{app:micro_pipeline}
    
Fix a slot duration $\tau_{\mathrm{slot}}$ and a sealing deadline $\tau_{\mathrm{seal}} \in (0, \tau_{\mathrm{slot}}]$. The cartel needs a deterministic reaction time $\tau_0 := \tau_{\mathrm{dec}} + \tau_{\mathrm{act}}$, covering decode and constructing, signing, and propagating the MEV action. Conditional on $A_{t^\star} = a$, model the arrival times of the $a$~cartel addressed bundles as i.i.d.\ random variables $T_1, \dots, T_a$ on $[0, \tau_{\mathrm{seal}}]$ with CDF~$F(\cdot)$.  Let $T_{(r)}$ be the $r$-th order statistic. Within slot success requires $T_{(r)} + \tau_0 \le \tau_{\mathrm{seal}}$, giving
\begin{equation}
\label{eq:rho_deadline}
    \rho(a, r)
    \;=\;
    \Prb[T_{(r)} \le \tau_{\mathrm{seal}} - \tau_0]
    \;=\;
    \Prb[\mathrm{Bin}(a, p) \ge r],
    \qquad
    p := F(\tau_{\mathrm{seal}} - \tau_0).
\end{equation}
For example, if $F(t) = 1 - e^{-\lambda t}$ for $t \in [0, \tau_{\mathrm{seal}}]$, then $p = 1 - e^{-\lambda(\tau_{\mathrm{seal}} - \tau_0)}$ and \eqref{eq:rho_deadline} is a binomial tail with an explicit parameter.
\end{document}

%% file: preamble.tex
\usepackage{graphicx}
\usepackage{amsthm}
\usepackage{amssymb}
\usepackage{amsmath}

\newtheorem{lemma}{Lemma}
\usepackage{xcolor}
\usepackage{todonotes}
\usepackage[most]{tcolorbox}
\usepackage{graphicx}
\usepackage{placeins}
\usepackage[
backend=biber,
style=alphabetic,
sorting=ynt
]{biblatex}